\newtheoremstyle{theorem}{1.5em}{1.5em}{\itshape}{}{\bf}{.\ \\}{0.5ex}{}
\newtheoremstyle{definition}{1.5em}{1.5em}{}{}{\bf}{.\ \\}{.5ex}{}
\newtheoremstyle{info}{1.5em}{1.5em}{}{}{\bfseries \normalsize }{.\ \\}{0.5ex}{}
\theoremstyle{theorem}
\newtheorem{theorem}{Theorem}
\newtheorem{lemma}{Lemma}
\newtheorem{corollary}{Corollary}
\theoremstyle{definition}
\newtheorem{definition}{Definition}
\theoremstyle{info}
\newtheorem{remark}{Remark}
\newcommand{\PreprintRef}{Tschimmel, D., Saeed, M., Milani, M., Waldherr, S., Hucho, T.: Protein- environment-sensitive computational epitope accessibility analysis from antibody dose-response data. Preprint: arXiv (2024) \url{https://doi.org/10.48550/arXiv.2407.06052} [q-bio.QM]}
\newcommand{\PreprintText}{Preprint: Tschimmel et al. 2024}
\begin{document}

\title{Modelling the effect of antibody depletion for dose-response behavior of common immunostaining protocols}

\author[1]{\fnm{Dominik} \sur{Tschimmel}}\nomail
\author[2]{\fnm{Steffen} \sur{Waldherr}}\nomail
\author[1]{\fnm{Tim} \sur{Hucho}}\nomail

\affil[1]{\orgdiv{Translational Pain Research, Department of Anesthesiology and Intensive Care Medicine}, \orgname{University Hospital Cologne, University of Cologne}, \orgaddress{\postcode{50931} \city{Cologne}, \country{Germany}}}
\affil[2]{\orgdiv{Department of Functional and Evolutionary Ecology}, \orgname{University of Vienna}, \orgaddress{\postcode{1030} \city{Vienna}, \country{Austria}}}

\abstract{
Antibody binding properties for immunostaining applications are often characterized by dose-response curves, which describe the amount of bound antibodies as a function of the antibody concentration applied at the beginning of the experiment. A common model for the dose-response curve is the Langmuir isotherm, which assumes an equilibrium between the binding and unbinding of antibodies. However, for common immunostaining protocols, the equilibrium assumption is violated, and the dose-response behavior is governed by an accumulation of permanently bound antibodies. Assuming a constant antibody concentration, the resulting accumulation model can easily be solved analytically. However, in many experimental setups the overall amount of antibodies is fixed, such that antibody binding reduces the concentration of free antibodies. Solving the corresponding depletion accumulation model is more difficult and seems to be impossible for heterogeneous epitope landscapes. In this paper, we first solve the depletion-free accumulation model analytically for a homogeneous epitope landscape. From the obtained solution, we derive inequalities between the depletion-free accumulation model, the depletion accumulation model, and the Langmuir isotherm. This allows us to characterize the depletion effect for homogeneous epitope landscapes. Next, we generalize the problem to heterogeneous epitope landscapes, where we prove the existence and uniqueness of a solution that behaves as expected from the experimental setting. These natural properties define bounds for the depletion accumulation model. We conclude this paper by applying the bounds to characterize the depletion effect for heterogeneous epitope landscapes. 
}

\keywords{antibody-binding, dose-response behavior, antibody depletion, analytical solution}

\maketitle

\section{Introduction}

The adsorption of molecules from the liquid phase to the surface of a solid is well described in the literature, with many models addressing different binding mechanisms and measurement setups. Among isotherms that describe binding equilibria, the Langmuir isotherm may be the best-known and most widely used model \citep{Alberti_2012}. Consequently, the closely related Langmuir rate equation is often applied to describe binding kinetics.  Interestingly, structurally equivalent but differently named equations are frequently used to describe the binding behavior of antibodies \citep{Latour_2014} \citep{Jarmoskaite_2020} \citep[Section 2.2]{Lauffenburger_1993}.

Let $c(t)$ denote the concentration of free-floating antibodies, $x(t)$ denote the surface concentration of the bound antibody-epitope complexes, and $g$ denote the surface concentration of all epitopes. Then the Langmuir rate equation reads 
\begin{equation}
	\label{eq: Langmuir model}
	x(t) = h_a c(t)(g-x(t))-k_d x(t) \ ,
\end{equation}
with rate constants $h_a$ and $k_d$ determining the binding\slash unbinding rates. The resulting Langmuir isotherm  
\begin{equation}
	\label{eq: Langmuir isotherm}
	x_{\text{eq}} = \frac{g}{1+\frac{k_d}{h_a \cdot c_{\text{eq}}}}\ ,
\end{equation}
describes the equilibrium values $x_{\text{eq}}$ and $c_{\text{eq}}$. For the special case of a constant antibody concentration $c(t) = c$ there is a simple analytical solution of the Langmuir rate equation \eqref{eq: Langmuir model} (cf. \citep{Latour_2014} \citep[Section 2.2]{Lauffenburger_1993}, etc.):
\begin{equation}
	\label{eq: simple Langmuir solution}
	x(t) = x_0 e^{-(h_a\cdot c+k_d)(t-t_0)} + \frac{g}{1+\frac{k_d}{h_a\cdot c}}\left(1- e^{-(h_a\cdot c+k_d)(t-t_0)}\right)\ .
\end{equation}

Despite the widespread use of the Langmuir model, some limitations must be considered. First, the simple analytical solution \eqref{eq: simple Langmuir solution} requires a constant antibody concentration throughout the binding phase. Second, the Langmuir model assumes that all binding sites are identical. In practice, however, the Langmuir model is used even if these conditions are violated. This may be due to negligence, but also because many experiments of interest do not provide the required ideal conditions. As a result, the estimated dissociation constants $K_d \coloneqq \frac{k_d}{h_a}$ can be off by orders of magnitude \citep{Jarmoskaite_2020}.

Because antibody depletion is inevitable in many experiments, \cite{Edwards_1998} investigated correction methods that account for the effects of antibody depletion. Under ideal conditions, the number of free antibodies reduces one-to-one by the number of antibodies that are bound to epitopes. Thus, the antibody concentration is given by $c(t) = c - \beta x(t)$, where $c$ denotes the initial antibody concentration and $\beta$ is the conversion factor between surface concentration and volume concentration.

In addition to correction methods, \cite{Edwards_1998} derived an analytical isotherm and an analytical solution of the Langmuir rate equation \eqref{eq: Langmuir model} for the antibody depletion case. A decade later \cite{Marczewski_2010} provided an alternative expression of the solution by expressing the depletion Langmuir rate equation as a mixed-order equation  (cf. \citep{Liu_2008}). However, the mixed-order solution requires the definition of new variables and constants, which complicates its application and interpretation. This motivated \cite{Salvestrini_2017} to derive a different analytical solution, which turns out to be the solution of \citep{Edwards_1998}.

The second limitation of the Langmuir model, requiring identical binding sites, can easily be addressed by defining different epitope classes $\{(g_i,h_{a,i}, k_{d,i})\}_{i=1}^N$. In this definition, epitopes belong to the same class if they have the same binding properties, i.e. rate constants $h_{a,i}$ and $ k_{d,i}$. When the antibody concentration is constant, the binding processes of the individual epitope classes $x_i(t)$ are independent. Each class behaves according to equation \eqref{eq: simple Langmuir solution}. Since the total surface concentration of bound antibodies is just the sum of the individual epitope classes, it follows that:
\begin{equation}
	\label{eq: discrete Langmuir superposition}
	x(t) = \sum_{i=1}^N x_i(t) = \sum_{i=1}^N x_{0,i} e^{-(h_{a,i}\cdot c+k_{d,i})(t-t_0)} + \frac{g_i}{1+\frac{k_{d,i}}{h_{a,i}\cdot c}}\left(1- e^{-(h_{a,i}\cdot c+k_{d,i})(t-t_0)}\right)\ .
\end{equation}

For heterogeneous systems with many epitope classes, it can become convenient to describe the classes as distribution $g\colon \mathbb{R}^2_{\geq 0}\rightarrow \mathbb{R}, (h_a,k_d)\mapsto g(h_a,k_d)$. The superposition of individual classes then becomes an integral:
 \begin{equation}
	\label{eq: continuous Langmuir superposition}
x(t) = \int_0^\infty \int_0^\infty x_{0} e^{-(h_a\cdot c+k_d)(t-t_0)} + \frac{g(h_a,k_d)}{1+\frac{k_d}{h_a\cdot c}}\left(1- e^{-(h_a\cdot c+k_d)(t-t_0)}\right)\ d h_a d k_d \  .
\end{equation}
This distribution description is particularly useful for inverse problems, where the number of epitope classes is not known in advance. For that reason, \cite{Svitel_2003} applied the distribution description for the analysis of biosensor data. In subsequent applications, mass-transport limitations were investigated by \cite{Svitel_2007} and \cite{Malakhova_2020}. Additional improvements for numerical solutions of the inverse problem were proposed by \cite{Svitel_2008}, \cite{Zhang_2017}, \cite{Forssen_2018}, and \cite{Zhang_2019}. However, note that the general idea to describe heterogeneous binding with integrals is not new and has been used for decades \citep{Langmuir_1918} \citep{Svitel_2003} \citep{House_1978}.

Although the limitations of the Langmuir model can be solved individually, the superposition of epitope classes together for the depletion case cannot easily be solved analytically. Furthermore, the discussed aspects improve upon the Langmuir model, which describes biosensor setups well but does not cover the measurement principle of common immunostaining experiments. In most cases, multiple washing steps are performed during immunocytochemistry\slash immunohistochemistry, before measuring the amount of bound antibodies \citep{Pina_2022}. Thus, any equilibrium that might have settled during the antibody-binding phase gets eventually disrupted, violating the isotherm assumption. Hence, a different model needs to be found to describe the dose-response behavior of immunostaining, that is, the relationship between the antibody concentration and the resulting staining intensity.

To address this mismatch between the model and the experimental protocol, (\PreprintText\footnote{\PreprintRef}) proposed the ``accumulation model'', which combines the pseudo-first-order (PFO) model \citep{Lagergren_1898} with the epitope class approach: 
\begin{equation}
	\label{eq: Accumulation model introduction}
	\frac{d}{dt} x_i(t) = h_{a,i} c(t)(g_i-x_i(t))\ , \qquad \forall\ i\in \{1,\ldots,N\} \ .
\end{equation}
This system of differential equations consists of the Langmuir kinetics without the unbinding term. The key difference is that antibody accumulation is interrupted at a finite time $\tau$ in all immunostaining protocols, before the system reaches its equilibrium point. The characteristic dose-response shape then originates from the concentration-dependent accumulation rate of antibodies, where higher concentrations lead to more antibodies binding in the finite incubation time. Figure \ref{fig: illustration dose-response behavior} illustrates this idea.

\begin{figure}[ht]
	\centering
	\begin{tikzpicture}[xscale = 0.74,yscale = 0.8]
		\filldraw[fill = gray!10, draw = black!50] (-1.2,-1) rectangle (16,6.5);

		\draw[black,thick,-stealth] (0,0) -- (6,0)node[below]{$t$};
		\draw[black,thick,-stealth] (0,0) -- (0,5.5)node[right]{$x(t)$};

		\draw[black!80,dashed,line width = 1pt, name path = t] (4,0)node[below]{$\tau$} --(4,5);

		\draw[red,line width = 1pt, name path = p1] (0,0) to[out=80, in = 180] (6,4.9);
		\draw[red!90!blue,line width = 1pt, name path = p2] (0,0) to[out=75, in = 180] (6,4.8);
		\draw[red!80!blue, line width = 1pt, name path = p3] (0,0) to[out=70, in = 185] (6,4.5);
		\draw[red!60!blue, line width = 1pt, name path = p4] (0,0) to[out=60, in = 190] (6,3.5);
		\draw[red!50!blue, line width = 1pt, name path = p5] (0,0) to[out=40, in = 187] (6,2.1);
		\draw[red!40!blue, line width = 1pt, name path = p6] (0,0) to[out=20, in = 185] (6,1.2);
		\draw[red!20!blue, line width = 1pt, name path = p7] (0,0) to (6,0.8);
		\draw[red!10!blue, line width = 1pt, name path = p8] (0,0) -- (6,0.5);
		\draw[blue, line width = 1pt, name path = p9] (0,0) -- (6,0.3);

		\path [name intersections={of=p1 and t,by=t1}];
		\path [name intersections={of=p2 and t,by=t2}];
		\path [name intersections={of=p3 and t,by=t3}];
		\path [name intersections={of=p4 and t,by=t4}];
		\path [name intersections={of=p5 and t,by=t5}];
		\path [name intersections={of=p6 and t,by=t6}];
		\path [name intersections={of=p7 and t,by=t7}];
		\path [name intersections={of=p8 and t,by=t8}];
		\path [name intersections={of=p9 and t,by=t9}];

		\fill[red] (t1) circle (0.1);
		\fill[red!90!blue] (t2) circle (0.1);
		\fill[red!80!blue] (t3) circle (0.1);
		\fill[red!60!blue] (t4) circle (0.1);
		\fill[red!50!blue] (t5) circle (0.1);
		\fill[red!40!blue] (t6) circle (0.1);
		\fill[red!20!blue] (t7) circle (0.1);
		\fill[red!10!blue] (t8) circle (0.1);
		\fill[blue] (t9) circle (0.1);

		\draw[black,thick,-stealth] (8,0) -- (15.5,0)node[below left]{$\log(c)$};
		\draw[black,thick,-stealth] (8,0) -- (8,5.5)node[right]{$x(\tau)$};

		\coordinate (T1) at ($(t1) + (10.9,0)$);
		\coordinate (T2) at ($(t2) + (10.1,0)$);
		\coordinate (T3) at ($(t3) + (9.3,0)$);
		\coordinate (T4) at ($(t4) + (8.5,0)$);
		\coordinate (T5) at ($(t5) + (7.7,0)$);
		\coordinate (T6) at ($(t6) + (6.9,0)$);
		\coordinate (T7) at ($(t7) + (6.1,0)$);
		\coordinate (T8) at ($(t8) + (5.3,0)$);
		\coordinate (T9) at ($(t9) + (4.5,0)$);

		\draw[line width = 1pt,smooth] plot coordinates {(T1) (T2) (T3) (T4) (T5) (T6) (T7) (T8) (T9)};

		\draw[dashed, black!70] (t1) -- (T1);
		\fill[red] (T1) circle (0.1);

		\draw[dashed, black!70] (t2) -- (T2);
		\fill[red!90!blue] (T2) circle (0.1);

		\draw[dashed, black!70] (t3) -- (T3);
		\fill[red!80!blue] (T3) circle (0.1);

		\draw[dashed, black!70] (t4) -- (T4);
		\fill[red!60!blue] (T4) circle (0.1);

		\draw[dashed, black!70] (t5) -- (T5);
		\fill[red!50!blue] (T5) circle (0.1);

		\draw[dashed, black!70] (t6) -- (T6);
		\fill[red!40!blue] (T6) circle (0.1);

		\draw[dashed, black!70] (t7) -- (T7);
		\fill[red!20!blue] (T7) circle (0.1);

		\draw[dashed, black!70] (t8) -- (T8);
		\fill[red!10!blue] (T8) circle (0.1);

		\draw[dashed, black!70] (t9) -- (T9);
		\fill[blue] (T9) circle (0.1);

	\end{tikzpicture}
	\caption{Illustration, reproduced from (\PreprintText), to depict the relationship between binding rates and dose-response behavior. The left-hand plot illustrates binding curves with different antibody-concentration-dependent binding rates (red = high concentration, blue = low concentration). The right-hand side plot shows the resulting dose-response behavior when the antibody incubation phase is stopped for each concentration after $t=\tau$.}
	\label{fig: illustration dose-response behavior}
\end{figure}

When the antibody concentration is constant, the accumulation model can easily be solved and the solution remains valid even for arbitrary units. (\PreprintText) used this property to develop computational analysis tools for immunofluorescence microscopy. However, in many experimental settings, the concentration of free antibodies decreases when antibodies bind. Despite this mismatch of assumption and reality, the resulting model provided consistent applications, suggesting that the antibody depletion effect does not alter the model properties too much.

Without a mathematical analysis of the antibody depletion case, it remains questionable in which cases the assumption of a constant antibody concentration can be used for systems with antibody depletion.
Furthermore, the data correction method to undo the depletion effect as described in (\PreprintText), while well motivated, has not yet been connected to the model that actually describes systems with depletion. It was a simple worst-case correction method. Finally, there may be better approximations of the accumulation model with antibody depletion than just ignoring the depletion and applying a worst-case correction, especially for experiments where the (surface) concentrations can be precisely measured.

To address the aforementioned issues, we focus on the mathematical description of the antibody accumulation process in this paper. First, we describe the experimental setting and select convenient units. Next, we solve the depletion accumulation model analytically for a single epitope class and derive inequalities between the depletion accumulation model, the depletion-free accumulation model, and the Langmuir isotherm. With these inequalities, we provide heuristics for the antibody depletion behavior. 

Since an analytical solution seems impossible for multiple epitope classes, we prove the existence and uniqueness of a solution that satisfies the experimentally expected behavior. From these properties, we derive bounds for the solution to estimate its behavior. Finally, we prove the antibody depletion correction proposed in (\PreprintText) and derive an additional minimal depletion correction.

\section{The experimental setting and definitions}
\label{sec: experimental setting}

In this paper, we consider cells that are attached to a solid surface in a container (e.g. a well in a well plate or a Petri dish) and antibodies that are dissolved homogeneously in a liquid (i.e. the antibody solution). Furthermore, we assume idealized conditions. For example, external parameters such as temperature and pressure are constant. For the binding of antibodies, we distinguish between two cases:
\begin{description}
	\item[\textbf{depletion:}]\ \\ The antibody solution is added to the container at time $t=0$. Until $t = \tau$, the antibodies are incubated. During this incubation phase, no new antibodies are added to the system.

	\item[\textbf{depletion-free:}]\ \\ The container is connected to a reservoir and an outlet. At time $t = 0$, the antibody solution begins to flow through the container until $t=\tau$.
\end{description}
In either case, we assume that multiple washing steps follow the incubation phase before the amount of bound antibodies is measured. 

\begin{remark}[\textbf{antibody concentration}]
Whenever we speak about the antibody concentration (of the antibody solution) we always consider the concentration of unbound antibodies. Thus, the binding of antibodies in the depletion case reduces the antibody concentration.
\end{remark}

In this setting, the quantities of interest are best expressed as concentrations\slash surface concentrations:
\begin{equation*}
	[c(t)] = \unit{L^{-3}}, \quad [g] = [x(t)] = \unit{L^{-2}}\quad \Rightarrow \quad [\tfrac{d}{dt} x(t)] = \unit{L^{-2}s^{-1}} \quad \Rightarrow\quad [h_a] = \unit{L^3s^{-1}}\ .
\end{equation*}
Yet, it will be convenient for calculations to convert the units of the quantities. This can already be seen for a single epitope class. 

In the depletion case, the antibody concentration is given by
\[c(t) = c - \beta x(t)\ ,\]
where $\beta$ converts the surface concentration $x(t)$ of bound antibodies into an equivalent volume concentration. For example, if the cells are attached to a surface with surface area $S$ and the antibody solution occupies the volume $V$, the conversion factor is $\beta = \frac{S}{V}$. With this expression for the antibody concentration, the accumulation model for a single epitope class becomes
\begin{equation}
	\label{eq: association term natural units}
	\frac{d}{dt} x(t) = h_a c(t) (g-x(t)) = h_a \left(c -\beta x(t)\right)(g-x(t))\ .
\end{equation}
We can simplify the right-hand side by absorbing the conversion factor into the unit of the antibody concentration
\[\begin{aligned}
	h_a \left(c -\beta x(t)\right)(g-x(t)) &= h_a  \frac{\beta}{\beta} \left(c -\frac{S}{V}x(t)\right)(g-x(t))\\
	&= \beta h_a \left( \frac{1}{\beta}c -x(t)\right)(g-x(t))\\
	& \eqqcolon k_a(a-x(t))(g-x(t))\ .
\end{aligned}\]
In the last step, we defined new units such that
\[a \coloneqq \frac{1}{\beta} c\qquad \text{and}\qquad k_a \coloneqq \beta h_a\ ,  \]
which expresses the antibody concentration and the rate constant in terms of surface concentrations. Changing the units in this way does not affect the structure of the depletion-free case, since
\[h_a c (g-x(t)) = h_a \frac{\beta}{\beta}c(g-x(t)) = k_a a (g-x(t))\ .\]

\begin{remark}[\textbf{Units and usage of the word ``concentration''}]
	\label{rem: unit conversion}
	In the rest of the paper, we will only use the converted antibody concentration $a$, $a(t)$, etc., and the converted rate constant $k_a$. For brevity, we will speak only about antibody concentration and rate constant, despite the unit conversion. In the same way, we will drop the word ``surface'' for the surface concentrations $g$ and $x$ and call them ``concentrations'' as well.
\end{remark}

We may conclude this section by defining the equations of interest for this paper. Since neither non-existing epitopes ($g=0$) nor epitopes that cannot be bound by antibodies ($k_a = 0$) constitute proper epitopes, we define

\begin{definition}
	An \textbf{epitope class} $(g,k_a)$ is a collection of epitopes that have the same binding rate constant $k_a > 0$. The empty class without epitopes, that is, with concentration $g = 0$, does not constitute a valid epitope class.
\end{definition}

We use the technical name ``class'' to highlight that this distinction of epitopes is only based on the binding behavior of epitopes (for a given antibody). Biologically different epitopes can belong to the same epitope class. In the same way, epitopes with identical biological structures may end up in different classes if they appear at different locations in the cell.

Next, let us fix names for the equations of interest to enable convenient reference.

\begin{definition}
    \label{def: accumulation IVPs}
	Let $\{(g_i,k_{a,i})\}_{i=1}^N$ be epitope classes and let $a>0$ be the initial antibody concentration. Then we call the initial value problem (IVP)
	\[\frac{d}{dt} x_i(t) = k_a \left(a-\sum_{j=1}^N x_j(t)\right) (g_i-x_i(t))\ , \quad x_i(0) = 0 \qquad \forall\ i \in \{1,\ldots,N\} \]
	\textbf{depletion accumulation IVP} and 
	\[\frac{d}{dt} x_{\textnormal{DF},i}(t) = k_a a (g_i-x_{\textnormal{DF},i}(t))\ , \quad x_{\textnormal{DF},i}(0) = 0 \qquad \forall\ i \in \{1,\ldots,N\} \]
	\textbf{depletion-free accumulation IVP}.
\end{definition}

The accumulation IVPs are autonomous systems:
\[\frac{d}{dt} \boldsymbol{x}(t) = f(\boldsymbol{x}(t))\ , \qquad f\colon \mathbb{R}^N\rightarrow \mathbb{R}^N\ ,\  \boldsymbol{x}\mapsto \begin{pmatrix}
    k_{a,1} \left(a-\sum_{j=1}^N x_j\right) (g_1-x_1) \\
    \vdots\\
    k_{a,N} \left(a-\sum_{j=1}^N x_j\right) (g_N-x_N) \\
\end{pmatrix}  \]
and 
\[\frac{d}{dt} \boldsymbol{x}_{\textnormal{DF}}(t) = h(\boldsymbol{x}_{\textnormal{DF}}(t))\ , \qquad h\colon \mathbb{R}^N\rightarrow \mathbb{R}^N\ , \ \boldsymbol{x}\mapsto \begin{pmatrix}
    k_{a,1} a (g_1-x_1) \\
    \vdots\\
    k_{a,N} a (g_N - x_N)
\end{pmatrix} \ , \]
where, both $f$ and $g$ are locally Lipschitz continuous on $\mathbb{R}^N$.

Since the measurement of bound antibodies is performed after the incubation phase $[0,\tau]$, we are particularly interested in the values $x_i(\tau)$ and $x_{\textnormal{DF}, i}(\tau)$. So far, our notation does not indicate the system parameters, i.e. the initial antibody concentration $a$ and the epitope classes $\{(g_i,k_{a,i})\}_{i=1}^N$. This will become problematic for the description of dose-response curves, where we have to consider different antibody concentrations. Thus, let us add the system parameters as additional arguments, whenever necessary for disambiguation. We may then write, for example,
\[x_i(t\ ; a, \{(g,k_{a,i})\}_{i=1}^N)\ , \qquad x_i(t\ ; a)\qquad \text{or}\qquad x_i(t)\ ,\]
depending on the context.

\begin{definition}
	\label{def: accumulation models}
	Let $\{(g_i,k_{a,i})\}_{i=1}^N$ be epitope classes and let $\tau \in [0,T)$. If the depletion accumulation IVP has a unique solution $\boldsymbol{x}(t\ ; a ,  \{(g_i,k_{a,i})\}_{i=1}^N)$ for $t\in [0,T)$ and $a \in I\subseteq \mathbb{R}_{> 0}$, we call $\boldsymbol{x}(\tau\ ; a ,  \{(g_i,k_{a,i})\}_{i=1}^N)$ the \textbf{depletion accumulation model} (over $I$). In the same way, we define the \textbf{depletion-free accumulation model} $\boldsymbol{x}_{\textnormal{DF}}(\tau\ ; a, \{(g_i,k_{a,i})\}_{i=1}^N)$.
\end{definition}

In general, immunofluorescence microscopy cannot distinguish between the epitopes to which the antibodies have bound. Only differently labeled antibodies can be distinguished. Instead, the overall fluorescence intensity
\[\sum_{j=1}^{N} x_j(\tau\ ; a, \{(g_i,k_{a,i})\}_{i=1}^N)\qquad \text{or}\qquad \sum_{j=1}^N x_{\textnormal{DF},i}(\tau\ ; a , \{(g_i, k_{a,i})\}_{i=1}^N \]
is measured. So, we define

\begin{definition}
	Let $\boldsymbol{x}(\tau\ ; a , \{(g_i,k_{a,i})\}_{i=1}^N)$ be a depletion accumulation model, then we call
	\[X(a\ ; \{(g_i,k_{a,i})\}_{i=1}^N) \coloneqq \sum_{j=1}^{N} x_j(\tau\ ; a , \{(g_i,k_{a,i})\}_{i=1}^N)\]
	\textbf{depletion dose-response curve}. In the same way, we define the \textbf{depletion-free dose-response curve} $X_{\textnormal{DF}}(a\ ; \{(g_i,k_{a,i})\}_{i=1}^N)$.
\end{definition}

\section{Accumulation model for a single epitope class}

As a first step for the analysis of antibody depletion, we should consider the simple case of a single epitope class, which permits analytical solutions of the underlying differential equations. From these solutions, the properties of the accumulation model can easily be derived with methods from basic calculus, setting the expectations for properties of the general case with arbitrarily many epitope classes.

\subsection{Solving the accumulation IVPs}
Observe, that we have already encountered a solution of the depletion-free accumulation IVP in the introduction:
\begin{equation}
	\label{eq: depletion-free IVP solution}
	x_{\textnormal{DF}}(t) = g(1-e^{-k_a a t})\ .
\end{equation}
Solving the full accumulation IVP requires a little more effort but still consists of basic calculus. 

\begin{lemma}
	\label{lemma: depletion accumulation IVP single epitope class}
	Let $(g,k_{a})$ be the only epitope class. When $a\neq g$,
	\[x(t\ ; a) = \frac{a g\left(1-e^{k_a(a -g)t}\right)}{g-a e^{k_a(a-g)t}}\]
	solves the depletion accumulation IVP for $t \geq 0$. When $a = g$,
\[x(t\ ;g) = \frac{g}{1+\frac{1}{k_a g t}} \]
	solves the depletion accumulation IVP for $t \geq 0$.
\end{lemma}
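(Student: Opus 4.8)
The plan is to treat the single-class depletion accumulation IVP $\frac{d}{dt}x(t) = k_a(a - x(t))(g - x(t))$, $x(0) = 0$, as a separable autonomous ODE and integrate it explicitly, splitting into two cases according to whether the two roots $a$ and $g$ of the right-hand side coincide.

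First, for $a \neq g$ I would separate variables to get $\frac{dx}{(a-x)(g-x)} = k_a\, dt$. The partial-fraction decomposition $\frac{1}{(a-x)(g-x)} = \frac{1}{g-a}\left(\frac{1}{a-x} - \frac{1}{g-x}\right)$, valid precisely because $a \neq g$, integrates the left-hand side to $\frac{1}{g-a}\ln\left|\frac{g-x}{a-x}\right|$. Imposing $x(0)=0$ fixes the integration constant and yields the implicit relation $\frac{(g-x)a}{(a-x)g} = e^{k_a(g-a)t}$; since both sides equal $1$ at $t=0$ and never vanish, the absolute value resolves with a consistent sign. Clearing the exponential and solving the resulting linear equation for $x$ gives the stated closed form $x(t;a) = \frac{ag\left(1 - e^{k_a(a-g)t}\right)}{g - a e^{k_a(a-g)t}}$.

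Second, for $a = g$ the two roots collide and the partial-fraction step degenerates, so the ODE reduces to $\frac{d}{dt}x = k_a(g-x)^2$. Separating and integrating via the power rule gives $\frac{1}{g-x} = k_a t + C$, and $x(0)=0$ fixes $C = 1/g$. Solving for $x$ produces $x(t;g) = \frac{g}{1 + \frac{1}{k_a g t}}$, where the value at $t=0$ is read as the limit $t \to 0^+$, which equals $0$ and hence matches the initial condition.

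The routine calculus above is not the real content; the step that needs care is confirming each formula is a bona fide solution on the whole half-line $t \geq 0$, i.e. that the denominator never vanishes. I would check this by cases: when $a < g$ the term $a e^{k_a(a-g)t} \leq a < g$, so the denominator stays $\geq g - a > 0$; when $a > g$ it remains strictly negative; and in the coincident case $1 + \frac{1}{k_a g t}$ is positive for $t > 0$ with the removable singularity at $t=0$ handled by the limit. Once the formulas are known to be smooth and well-defined on $[0,\infty)$, a direct differentiation (equivalently, reading the reversible separation-of-variables steps backwards) confirms they satisfy the ODE, while the local Lipschitz continuity already noted for $f$ guarantees they are the unique solutions. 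I expect the only genuine obstacle to be the bookkeeping of the sign of $a-g$, so that the logarithm's absolute value is resolved correctly and the non-vanishing of the denominator is established uniformly on all of $[0,\infty)$.
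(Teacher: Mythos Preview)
Your proposal is correct. The route differs from the paper's: the paper simply \emph{verifies} the given formulas by computing $\frac{d}{dt}x(t;a)$ and $k_a(a-x)(g-x)$ separately and checking they agree, together with the initial condition and the non-vanishing of the denominators. You instead \emph{derive} the formulas by separation of variables and partial fractions (respectively, the power rule when $a=g$), and only afterwards note that the steps can be read backwards as a verification. Your approach explains where the closed forms come from and is arguably more informative; the paper's approach is shorter because the lemma statement already hands you the answer, so a bare substitution suffices. Both proofs hinge on the same genuine point, which you correctly isolate: the case split on the sign of $a-g$ to show the denominator $g - a e^{k_a(a-g)t}$ never vanishes on $[0,\infty)$. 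Your remark on uniqueness via local Lipschitz continuity is fine but slightly anticipates the paper's Theorem~\ref{thm: unique solution single epitope class}; the lemma itself only asserts that the formula is \emph{a} solution.
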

\begin{proof}
	First, we observe that the expressions for $x(t\ ; a)$ and $x(t\ ; g)$ are well-defined and finite for all $t\geq 0$.  Next, we calculate the derivatives
	\[\frac{d}{dt} x(t\ ; a) = \frac{k_a a g (a-g)^2 e^{k_a(a-g)t}}{(g-a e^{k_a(a-g)t})^2}  \qquad \text{and}\qquad  \frac{d}{dt} x(t\ ; g) = \frac{k_a g^2}{(1+k_a g t)^2}\ ,\]
	which are again well-defined and finite for all $t\geq 0$. Plugging in the expressions for $x(t\ ; a)$ and $x(t\ ; g)$ into the respective differential equations, we obtain 
	\[k_a (a- x(t\ ; a))(g-x(t\ ; a)) = \frac{k_a a g (a-g)^2 e^{k_a(a-g)t}}{(g-a e^{k_a(a-g)t})^2} \]
	and 
	\[k_a (g- x(t\ ; g))(g-x(t\ ; g)) = \frac{k_a g^2}{(1+k_a g t)^2}\ , \]
	which shows that $x(t\ ; a)$ and $x(t\ ; g)$ solve the respective differential equations. Finally, we calculate that $x(0\ ; a) = 0$ and that 
	\[x(0\ ;g)\coloneqq \lim_{t\searrow 0} \frac{g}{1+\frac{1}{k_a g t}} = 0\ .\]
\end{proof}

In summary, we have proven that

\begin{theorem}
	\label{thm: unique solution single epitope class}
	Let $(g,k_a)$ be the only epitope class.  Then the accumulation IVPs have the following unique solutions for $t \geq 0$:
	\[x_{\textnormal{DF}}(t) = g(1-e^{-k_a a t}) \qquad \text{and}\qquad  x(t) = \left\{\begin{array}{ll}
		 \frac{a g\left(1-e^{k_a(a -g)t}\right)}{g-a e^{k_a(a-g)t}} &\ , \ a \neq g\\[1em]
		 \frac{g}{1+\frac{1}{k_a g t}} & \ ,\ a = g
	\end{array} \right.  \]
\end{theorem}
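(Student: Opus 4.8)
The plan is to split the argument into \emph{existence} and \emph{uniqueness}, since existence is already in hand. Lemma~\ref{lemma: depletion accumulation IVP single epitope class} verifies by direct differentiation that the stated closed forms solve the depletion accumulation IVP for all $t\geq 0$, in both the $a\neq g$ and $a=g$ cases, and the depletion-free formula $x_{\textnormal{DF}}(t)=g(1-e^{-k_a a t})$ was exhibited and checked in \eqref{eq: depletion-free IVP solution}. So the only genuinely new content of the theorem is that these explicit expressions are the \emph{unique} solutions on $[0,\infty)$, and that is where I would direct the work.

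For uniqueness I would invoke the Picard--Lindel\"of theorem. For a single epitope class the right-hand side of the depletion IVP is $f(x)=k_a(a-x)(g-x)$, a polynomial in $x$, and the right-hand side of the depletion-free IVP is $h(x)=k_a a (g-x)$, which is affine; both are therefore $C^\infty$ and in particular locally Lipschitz continuous on $\mathbb{R}$ (the $N=1$ instance of the local Lipschitz property recorded after Definition~\ref{def: accumulation IVPs}). Picard--Lindel\"of then guarantees that through the initial point $(0,0)$ there passes a \emph{unique} local solution. The depletion-free case requires nothing further: since $h$ is globally affine-Lipschitz, Picard--Lindel\"of yields uniqueness on all of $[0,\infty)$ at once.

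To promote the local uniqueness of the depletion case to uniqueness on the whole interval $[0,\infty)$, I would run the standard continuation argument. Given any solution $y$ on an interval $[0,b)$ and the explicit solution $x$, the agreement set $\{t : y(t)=x(t)\}$ is nonempty (it contains $0$), closed by continuity, and open by applying local uniqueness at each point of agreement; hence it is the entire connected interval on which both are defined. Because the explicit $x$ is defined and finite for every $t\geq 0$, no competing solution can peel off from it, so the closed form is the unique solution on $[0,\infty)$. The main point demanding care -- and the only real obstacle in an otherwise routine argument -- is the degenerate $a=g$ case, where the closed form $g/(1+\tfrac{1}{k_a g t})$ is pinned down at $t=0$ only through the limit $x(0\,;g)=0$; here I would emphasize that the limit computation and the finiteness of the derivative at $t=0$ established in the Lemma are exactly what place this branch within the scope of the Picard--Lindel\"of uniqueness argument, so that it is handled on the same footing as the generic branch.
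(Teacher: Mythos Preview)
Your proposal is correct and follows essentially the same approach as the paper: existence is taken from Lemma~\ref{lemma: depletion accumulation IVP single epitope class} (and \eqref{eq: depletion-free IVP solution}), and uniqueness is obtained from Picard--Lindel\"of via the local Lipschitz continuity of the right-hand sides, extended to all of $[0,\infty)$ because the explicit solutions are already defined there. The paper simply cites the general existence and uniqueness theorem \cite[Chapter II, \S 6, Theorem VII]{Walter_1998} for this step, whereas you spell out the open--closed continuation argument and the handling of the $a=g$ limit more explicitly; the substance is the same.
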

\begin{proof}
	That the expressions for $x_{\textnormal{DF}}(t)$ and $x(t)$ solve the accumulation IVPs is the statement of Lemma \ref{lemma: depletion accumulation IVP single epitope class}. 

	For the uniqueness, we recall that both accumulation IVPs are autonomous systems with locally Lipschitz continuous functions. Thus, the general existence and uniqueness theorem for initial value problems \cite[Chapter II, \S 6, Theorem VII]{Walter_1998} applies. Since we have found solutions that exist for all $t\geq 0$, the uniqueness holds for $t\geq 0$.  This can easily be seen with the same steps that are used in the proof of \cite[Chapter II, \S 6, Theorem VII]{Walter_1998}.
 
\end{proof}

Next, we observe that the depletion accumulation IVP remains unchanged when we exchange the values for $a$ and $g$
\[k_a (a-x(t))(g-x(t)) = k_a(g-x(t))(a-x(t))\ .\]
Yet, this means that also the solution of the depletion accumulation model should remain unchanged when the values for $a$ and $g$ are exchanged.
\begin{lemma}
	\label{lemma: exchanging a and g}
	Let $(g,k_a)$ be the only epitope class and let $x$ be the solution of the depletion accumulation IVP, then it holds that
	\[x(t\ ; a , (g,k_a)) = x(t\ ; g, (a,k_a))\ .\]
\end{lemma}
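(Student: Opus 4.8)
The plan is to exploit the symmetry of the right-hand side of the depletion accumulation IVP under the exchange $a \leftrightarrow g$, exactly as the preceding discussion suggests, and then invoke uniqueness. There is also a purely computational route using the explicit formulas from Theorem~\ref{thm: unique solution single epitope class}; I sketch both, but the uniqueness argument is shorter and avoids a case distinction.

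For the uniqueness argument, I would first set $y(t) \coloneqq x(t\ ; g, (a, k_a))$, the solution of the depletion accumulation IVP in which $g$ plays the role of the initial antibody concentration and $a$ the role of the epitope concentration. By definition $y$ satisfies
\[\frac{d}{dt} y(t) = k_a (g - y(t))(a - y(t))\ , \qquad y(0) = 0\ .\]
The crucial observation is that $(g - y)(a - y) = (a - y)(g - y)$, so this is literally the same IVP that defines $x(t\ ; a, (g, k_a))$. Before concluding, I would check that Theorem~\ref{thm: unique solution single epitope class} genuinely applies to the swapped system: since the original data satisfy $a > 0$, $g > 0$, and $k_a > 0$, the pair $(a, k_a)$ is a valid epitope class and $g > 0$ is an admissible initial concentration, so existence and uniqueness for all $t \geq 0$ hold for $y$ just as for $x$. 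Uniqueness of the solution to this single IVP then forces $y(t) = x(t\ ; a, (g, k_a))$, which is the claim.

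As an alternative I would verify the identity directly. Writing $u \coloneqq k_a(a - g)t$, Theorem~\ref{thm: unique solution single epitope class} gives
\[x(t\ ; a, (g, k_a)) = \frac{a g\,(1 - e^{u})}{g - a\,e^{u}}\ , \qquad x(t\ ; g, (a, k_a)) = \frac{a g\,(1 - e^{-u})}{a - g\,e^{-u}}\ ,\]
and multiplying the numerator and denominator of the second fraction by $e^{u}$ turns it into $\frac{ag\,(e^{u} - 1)}{a\,e^{u} - g}$, which equals the first fraction after cancelling the common sign in numerator and denominator. The degenerate case $a = g$ is trivial, since then the exchange leaves the parameters untouched and both sides reduce to the same expression $g/(1 + 1/(k_a g t))$.

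I do not expect a serious obstacle here; the only thing to be careful about is the bookkeeping of which quantity occupies the ``concentration'' slot and which occupies the ``epitope'' slot after the swap, together with confirming that the swapped data still meet the hypotheses of the uniqueness theorem. Once that is in place, the symmetry of the product $(a - x)(g - x)$ does all the work in the uniqueness argument, and a one-line sign cancellation does all the work in the computational argument.
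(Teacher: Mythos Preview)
Your proposal is correct. You offer two arguments, and your \emph{alternative} computational verification is essentially the paper's own proof: the paper multiplies the explicit formula from Theorem~\ref{thm: unique solution single epitope class} by a common exponential factor (equivalently, by $e^{k_a g t}$ in numerator and denominator), then cancels a common sign, exactly as you do with $e^u$.

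Your \emph{primary} argument via uniqueness is a genuinely different and cleaner route. The paper goes straight to the closed-form solution and manipulates it, which requires the case split $a=g$ versus $a\neq g$. Your observation that the IVP itself is invariant under $a\leftrightarrow g$, combined with the uniqueness from Theorem~\ref{thm: unique solution single epitope class}, gives the result in one stroke and with no case distinction; it also makes transparent \emph{why} the symmetry holds rather than merely verifying it. The only small cost is that you must check (as you do) that the swapped data $(a,k_a)$ still constitute a valid epitope class and $g>0$ a valid initial concentration, so that the uniqueness theorem applies to the swapped system.
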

\begin{proof}
	For $a = g$ this is obvious. For $a\neq g$ we calculate
	\begin{align*}
			x(t\ ;a,(g,k_a)) &= \frac{ag(e^{k_a g t} - e^{k_a  a t})}{ge^{k_a  g t} - ae^{k_a  a t}} = \frac{-ga(e^{k_a a t} - e^{k_a  g t})}{- (ae^{k_a  a t}  - ge^{k_a  g t})} = \frac{ga(e^{k_a a t} - e^{k_a  g t})}{(ae^{k_a  a t}  - ge^{k_a  g t})}\\[1em]
	&= x(t\ ; g, (a,k_a))\ .
	\end{align*}
\end{proof}

With this symmetry, we can show that the solutions of the accumulation IVPs satisfy the physically expected bounds. That is, there are no negative surface concentrations, nor can the amount of bound antibodies exceed the amount of epitopes (or the available amount of antibodies for the depletion case).

\begin{theorem}
	\label{thm: bounded solutions single epitope}
	Let $(g,k_a)$ be the only epitope class and let $x$ and $x_{\textnormal{DF}}$ be the solutions of the respective accumulation IVPs. Then it holds for all $t\in [0,\infty]$ that
	\[0 \leq  x_{\textnormal{DF}}(t\ ; a) \leq g \qquad \text{and}\qquad 0 \leq x(t\ ; a)\leq \min\{a,g\}\ .\]
\end{theorem}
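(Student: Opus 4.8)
The plan is to treat the two bounds separately, handling $x_{\textnormal{DF}}$ by direct inspection of its closed form and $x$ by a sign / invariant-region argument for the governing ODE, rather than by wrestling with the sign-changing denominator of the explicit expression.

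For $x_{\textnormal{DF}}$ I would simply read the bound off the explicit solution $x_{\textnormal{DF}}(t) = g(1 - e^{-k_a a t})$ supplied by Theorem~\ref{thm: unique solution single epitope class}. Since $k_a, a > 0$ and $t \geq 0$, the exponent $-k_a a t$ is nonpositive, so $e^{-k_a a t} \in (0,1]$ and hence $1 - e^{-k_a a t} \in [0,1)$, the value approaching $1$ only as $t \to \infty$. Multiplying by $g > 0$ gives $0 \leq x_{\textnormal{DF}}(t) \leq g$ for all $t \in [0,\infty]$, with the upper bound attained in the limit. This step is entirely routine.

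For $x$ I would argue directly from the autonomous ODE $\dot{x} = k_a(a-x)(g-x)$ together with the uniqueness guaranteed by Theorem~\ref{thm: unique solution single epitope class}. The key observation is that the right-hand side is symmetric in $a$ and $g$: writing $m \coloneqq \min\{a,g\}$ and $M \coloneqq \max\{a,g\}$, we have $(a-x)(g-x) = (m-x)(M-x)$. On $[0,m]$ both factors are nonnegative, so $\dot{x} \geq 0$ there; moreover $x = 0$ yields $\dot{x} = k_a m M > 0$ (the field points strictly inward), while $x = m$ is an equilibrium since one factor vanishes. Because the solution starts at $x(0) = 0$, the field is nonnegative on $[0,m]$, and $m$ is an equilibrium, the trajectory increases monotonically and remains trapped in $[0,m]$, giving $0 \leq x(t) \leq \min\{a,g\}$ for all $t \in [0,\infty]$; the endpoint $t = \infty$ is just the limit in which $x$ approaches $m$.

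I expect the only delicate point to be justifying that the solution cannot reach or exceed the equilibrium $m$ in finite time. This is exactly where uniqueness enters: if $x$ were to touch $m$ at some finite $t_0$, it would agree at $t_0$ with the constant solution $x \equiv m$, contradicting the uniqueness of Theorem~\ref{thm: unique solution single epitope class}. An equivalent route, should one prefer to bound the explicit formula directly, is to prove only $0 \leq x \leq a$ and then invoke the symmetry of Lemma~\ref{lemma: exchanging a and g} to swap the roles of $a$ and $g$ and obtain $x \leq g$ as well, so that $x \leq \min\{a,g\}$ follows without separately analyzing $a < g$ and $a > g$. The nuisance in that alternative is the sign of the denominator $g - a e^{k_a(a-g)t}$, which flips with the ordering of $a$ and $g$; the invariant-region argument sidesteps this entirely, which is why I favour it.
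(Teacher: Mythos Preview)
Your argument is correct and takes a genuinely different route from the paper. The paper works entirely with the explicit formulas of Theorem~\ref{thm: unique solution single epitope class}: it treats the cases $a<g$, $a=g$, $a>g$ separately, checks the signs of numerator and denominator in each case to get $0\leq x$, manipulates the quotient $g/x$ to obtain $x\leq g$, and finally invokes the symmetry Lemma~\ref{lemma: exchanging a and g} to deduce $x\leq a$ as well. Your invariant-region (phase-line) argument bypasses all of this: by observing that $m=\min\{a,g\}$ is an equilibrium and that the field is strictly positive on $[0,m)$, you trap the trajectory in $[0,m]$ using only uniqueness, never touching the case split or the sign of $g-ae^{k_a(a-g)t}$. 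This is cleaner and more robust---indeed it is essentially the argument the paper later deploys for the multi-epitope case in Section~\ref{subsec: Properties of the multi-epitope accumulation IVP solution}, so your approach anticipates that generalization. The paper's computational approach, on the other hand, has the minor advantage of being fully explicit and of exercising the closed-form solution just derived. Your mention of the symmetry-lemma alternative is exactly what the paper does, with the roles of $a$ and $g$ reversed.
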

\begin{proof}
	The statements for $x_{\textnormal{DF}}$ and for $x(t\ ; g, (g,k_a))$ are obvious. When $a<g$, the numerator and denominator of the expression for $x(t\ ; a, (g,k_a))$ are both positive (or zero), which shows that $0 \leq x(t\ ; a, (g,k_a))$. Furthermore,
	\[\frac{g}{x(t\ ; a , (g,k_a))} = \frac{g- ae^{k_a(a-g)t}}{a(1-e^{k_a (a-g)t})} \leq  \frac{a- ae^{k_a(a-g)t}}{a(1-e^{k_a (a-g)t})} = 1\ , \]
	\[\Rightarrow\qquad x(t\ ; a, (g,k_a)) \leq g\ .\]

	When $a>g$, the numerator and denominator of the expression for $x(t\ ; a, (g,k_a))$ are both negative (or zero), which again shows that $0 \leq x(t\ ; a, (g,k_a))$. Furthermore, 
	\[\frac{g}{x(t\ ; a , (g,k_a))} = \frac{g- ae^{k_a(a-g)t}}{a(1-e^{k_a (a-g)t})} \leq  \frac{g- ge^{k_a(a-g)t}}{a(1-e^{k_a (a-g)t})} = \frac{g}{a} \leq 1\ , \]
	\[\Rightarrow\qquad x(t\ ; a, (g,k_a)) \leq g\ .\]

	Since $x(t\ ; a, (g,k_a)) = x(t\ ; g, (a,k_a))$ by Lemma \ref{lemma: exchanging a and g}, it also follows that $x(t\ ; a , (g,k_a))\leq a$.
\end{proof}

\subsection{Properties of the accumulation models and comparison to the Langmuir isotherm}

With the closed-form expressions for the unique solutions of the single-epitope-class accumulation IVPs (Theorem \ref{thm: unique solution single epitope class}) we have also obtained closed-form expressions for the corresponding accumulation models (cf. Definition \ref{def: accumulation models}).

\begin{theorem}
	\label{thm: accumulation model single epitope class}
	Let $(g,k_a)$ be the only epitope class. Then the accumulation models are
		\[
		x_{\textnormal{DF}}(\tau\ ; a) = g(1-e^{-k_a a \tau}) \quad \text{and}\quad  x(\tau\ ; a) = \left\{\begin{array}{ll}
		 \frac{a g\left(1-e^{k_a(a -g)\tau}\right)}{g-a e^{k_a(a-g)\tau}} &\ , \ a \neq g\\[1em]
		 \frac{g}{1+\frac{1}{k_a g \tau}} & \ ,\ a = g
	\end{array} \right.
	\]
	for all $a > 0$ and all $\tau \geq 0$. Both models are continuously differentiable as a function of $a$ with derivatives
	\[ \frac{d}{da}x_{\textnormal{DF}}(\tau\ ; a) = k_a g \tau e^{-k_a a \tau}\] 
	and
	\[\frac{d}{da} x(\tau \ ; a) =\left\{\begin{array}{ll}
		\frac{g(g-e^{k_a(a-g)\tau}(k_a a \tau (g-a) +g))}{(g-ae^{k_a(a-g)\tau})^2} &\ ,\ a \neq g\\[1em]
		\frac{k_a g \tau (2+k_a g \tau)}{2(1+k_a g \tau)^2} & \ , \  a = g
	\end{array}\right. \ . \]
\end{theorem}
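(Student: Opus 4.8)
The plan is to treat the theorem's two assertions separately: the closed-form expressions for the accumulation models, and their continuous differentiability in $a$ together with the stated derivatives.

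The first assertion is immediate. By Definition \ref{def: accumulation models}, the accumulation model is just the solution of the accumulation IVP evaluated at $t = \tau$, and Theorem \ref{thm: unique solution single epitope class} already supplies the unique solutions $x_{\textnormal{DF}}(t)$ and $x(t)$ for all $t \geq 0$. Substituting $t = \tau$ therefore yields the displayed formulas with nothing further to prove. The depletion-free derivative is equally direct: $x_{\textnormal{DF}}(\tau\,; a) = g(1 - e^{-k_a a \tau})$ is smooth in $a$, and the chain rule gives $\frac{d}{da} x_{\textnormal{DF}}(\tau\,; a) = k_a g \tau e^{-k_a a \tau}$.

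For the depletion model I would split at $a = g$. On the open set $\{a \neq g\}$ the numerator and denominator of $x(\tau\,; a)$ are smooth functions of $a$, and by the well-definedness noted in Lemma \ref{lemma: depletion accumulation IVP single epitope class} the denominator $g - a e^{k_a(a-g)\tau}$ does not vanish; hence $x(\tau\,; a)$ is smooth there and its derivative follows from a routine quotient-rule computation, which I expect to match the stated numerator after simplification. The genuinely delicate point is $a = g$, where continuity of the value and of the derivative must be checked across the two branches. Rather than differentiate the $a = g$ branch directly, I would invoke the standard Mean-Value-Theorem corollary: if $x(\tau\,; \cdot)$ is continuous at $a = g$, differentiable for $a \neq g$, and $\lim_{a \to g} \frac{d}{da} x(\tau\,; a)$ exists, then $x(\tau\,; \cdot)$ is differentiable at $a = g$ with derivative equal to that limit, and the derivative is continuous there. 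Since $g > 0$, the point $a = g$ is interior to the domain $a > 0$, so the two-sided limit is the right object, and the whole question reduces to two limit computations.

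Both limits are $0/0$ indeterminate forms—writing $\varepsilon = a - g$, the factor $e^{k_a \varepsilon \tau} \to 1$ forces numerator and denominator to vanish together—and evaluating them cleanly is the main obstacle. I would substitute $\varepsilon = a - g$ and expand $e^{k_a \varepsilon \tau} = 1 + k_a \tau \varepsilon + \tfrac{1}{2} k_a^2 \tau^2 \varepsilon^2 + O(\varepsilon^3)$. For continuity, the numerator and denominator of $x(\tau\,; a)$ are each $O(\varepsilon)$, and the ratio of their leading coefficients gives $\lim_{a\to g} x(\tau\,; a) = \frac{k_a g^2 \tau}{1 + k_a g \tau}$, which equals the $a = g$ branch $g/(1 + \frac{1}{k_a g \tau})$. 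For the derivative the denominator is a square of order $\varepsilon^2$, so the expansion must be carried to order $\varepsilon^2$ in the numerator; the first-order terms cancel and the surviving $\varepsilon^2$-coefficient yields exactly $\frac{k_a g \tau(2 + k_a g \tau)}{2(1 + k_a g \tau)^2}$. This higher-order bookkeeping is the most error-prone step, but it is otherwise elementary and completes the continuous-differentiability claim.
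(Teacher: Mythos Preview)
Your proposal is correct and covers the same ground as the paper's proof, but with a genuinely different organisation at the delicate point $a=g$. The paper proceeds in two separate steps: it first computes $\left.\frac{d}{da}x(\tau;a)\right|_{a=g}$ directly from the difference-quotient definition, applying L'H\^opital's rule twice, and then \emph{independently} verifies continuity of the derivative by computing $\lim_{a\to g}\left.\frac{d}{da}x(\tau;a)\right|_{a\neq g}$, again via L'H\^opital twice. Your route is more economical: you first check continuity of $x(\tau;\cdot)$ at $a=g$, then compute the single limit $\lim_{a\to g}\frac{d}{da}x(\tau;a)$ and invoke the MVT corollary to obtain differentiability and continuity of the derivative simultaneously. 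This halves the limit work. Your choice of Taylor expansion in $\varepsilon=a-g$ in place of L'H\^opital is also legitimate and arguably more transparent for tracking the $\varepsilon^2$ cancellations you correctly flag as the error-prone step. One small point the paper treats explicitly and you leave implicit is the boundary case $\tau=0$, where $x(0;a)\equiv 0$ and the derivative formulas evaluate to $0$; this is trivial but worth a sentence since the $a=g$ branch $g/(1+\tfrac{1}{k_a g\tau})$ is only defined as a limit at $\tau=0$.
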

\begin{proof}
The existence and the expressions for $x_{\textnormal{DF}}(\tau\ ; a)$ and $x(\tau\ ; a)$ follow from Theorem \ref{thm: unique solution single epitope class} and Definition \ref{def: accumulation models}.

Let now $\tau > 0$.  That $x_{\textnormal{DF}}(\tau \ ; a)$ is continuously differentiable and the expression for the derivative $\frac{d}{da} x_{\textnormal{DF}}(\tau \ ; a)$ can easily be calculated. Furthermore, the derivative $\frac{d}{da} x(\tau\ ; a)$ can easily be calculated at $a\neq g$. The derivative $\frac{d}{da} x(\tau\ ; a)$ at $a=g$ needs to be calculated from the definition of derivatives, which yields
\[\left.\frac{d}{d a } x(a) \right|_{a = g}\  \overset{\text{def}}{=}\  \lim_{a \rightarrow g} \frac{x(\tau\ ; a)-x(\tau\ ; g)}{a-g} = \frac{k_a g \tau (2+k_a g \tau)}{2(1+k_a g \tau)^2}\ ,\]
after applying L'H\^{o}pital's rule twice.
That $\frac{d}{da} x(\tau\ ; a)$ is continuous in $a= g$, i.e.  $\lim_{a \rightarrow g} \left. \frac{d}{d a} x(a)\right|_{a \neq g} = \left.\frac{d}{da} x(a)\right|_{a = g}$, can also be shown by applying L'H\^{o}pital's rule twice.  Furthermore, since all denominators that appear in $\frac{d}{da} x(\tau\ ; a)$ are non-zero for their respective cases, $\frac{d}{da} x(\tau\ ; a)$ is continuous in $a$ as a rational function.

For $\tau = 0$ it holds that $x_{\textnormal{DF}}(0\ ; a) = x(0\ ; a) = 0$, which implies that the derivatives with respect to $a$ must be zero. It can easily be checked that plugging in $\tau = 0$ into the expressions for the derivatives yields zero.
\end{proof}

\begin{remark}
	For a single epitope class, the distinction between the accumulation model and the dose-response curve is pointless, as $x_{\textnormal{DF}}(\tau\ ; a) = X_{\textnormal{DF}}(a)$ and $x(\tau\ ; a) = X(a)$. Thus, we treat ``accumulation model'' and ``dose-response curve'' as synonyms in this section, and stick to the notation $x_{\textnormal{DF}}(a)$, $x(a)$, etc.
\end{remark}

With the derivatives of the accumulation models as a function of the initial antibody concentration, we can show the natural dose-response property. Increasing the initial antibody concentration must not decrease the amount of bound antibodies.

\begin{theorem}
    Let $(g,k_a)$ be the only epitope class and let $\tau > 0$, then both accumulation models are monotonically increasing functions of the antibody concentration:
    \[\frac{d}{da}x_{\textnormal{DF}}(\tau\ ; a) \geq 0 \qquad \text{and}\qquad \frac{d}{da} x(\tau\ ; a) \geq 0 \qquad \forall\ a > 0\ .\]
\end{theorem}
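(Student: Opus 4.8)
The plan is to read off the two derivative expressions from Theorem \ref{thm: accumulation model single epitope class} and check their sign case by case; the only nontrivial work is controlling the sign of the numerator in the depletion branch $a\neq g$. Throughout I use $\tau>0$, $g>0$ and $k_a>0$ as hypothesized.

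First, the depletion-free model is immediate: since $k_a>0$, $g>0$, $\tau>0$ and $e^{-k_a a\tau}>0$, the derivative $\frac{d}{da}x_{\textnormal{DF}}(\tau\ ;a)=k_a g\tau e^{-k_a a\tau}$ is strictly positive for every $a>0$. Likewise, in the depletion case at $a=g$, every factor of $\frac{k_a g\tau(2+k_a g\tau)}{2(1+k_a g\tau)^2}$ is positive, so the derivative is positive there as well. These two cases require no real argument beyond inspecting signs.

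The heart of the proof is the branch $a\neq g$. Here the denominator $(g-a e^{k_a(a-g)\tau})^2$ is a square, hence positive (it is nonzero because $x(\tau\ ;a)$ is well-defined by Theorem \ref{thm: unique solution single epitope class}), and the outer prefactor $g$ is positive, so it suffices to show that the bracketed numerator is nonnegative. I would first simplify it: writing $u:=k_a(a-g)\tau$ and using the identity $k_a a\tau(g-a)=-au$, the numerator collapses to $g-ge^u+au\,e^u$. The clean move is then to eliminate $a$ in favour of $u$ via $a=g+u/(k_a\tau)$, which splits the expression as
\[ g\bigl(1-e^u+u e^u\bigr)+\frac{u^2}{k_a\tau}e^u . \]
The second summand is manifestly nonnegative, so everything reduces to the single-variable inequality $\phi(u):=1+(u-1)e^u\ge 0$.

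This last inequality is the only genuine obstacle, and it is dispatched with elementary calculus: $\phi(0)=0$ and $\phi'(u)=u e^u$, so $\phi$ is decreasing on $(-\infty,0)$ and increasing on $(0,\infty)$, making $u=0$ a global minimum with value $0$; hence $\phi(u)\ge 0$ for every real $u$. Combining the three cases yields $\frac{d}{da}x(\tau\ ;a)\ge 0$ and $\frac{d}{da}x_{\textnormal{DF}}(\tau\ ;a)\ge 0$ for all $a>0$, which is the claim. As a consistency check one could instead obtain monotonicity without the closed form, via a parameter-comparison argument for $\dot x=k_a(a-x)(g-x)$: on the invariant region $0\le x\le\min\{a,g\}$ (Theorem \ref{thm: bounded solutions single epitope}) the right-hand side is nondecreasing in $a$, so a larger $a$ forces a pointwise larger solution; but since the derivatives are already in hand, the direct computation above is shorter.
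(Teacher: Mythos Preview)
Your proof is correct. The overall strategy matches the paper's: dispose of the depletion-free branch and the $a=g$ branch by inspection, then argue that the bracketed numerator in the $a\neq g$ branch is nonnegative since the squared denominator is positive.

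Where you diverge is in how you control that numerator. The paper rewrites it as $d=g(1-e^{k_a(a-g)\tau})+k_a a(a-g)\tau e^{k_a(a-g)\tau}$ and then splits into the two cases $a<g$ and $a>g$; in each case one summand is positive and the other negative, and the paper bounds the absolute ratio by~$1$ via the elementary inequalities $\frac{z}{e^z-1}<1$ and $\frac{1-e^{-z}}{z}<1$ for $z>0$, after also using $a<g$ or $g<a$ to compare $a$ and $g$ in the prefactors. Your route is cleaner: the substitution $u=k_a(a-g)\tau$ together with $a=g+u/(k_a\tau)$ decomposes the numerator as $g\,\phi(u)+\frac{u^2}{k_a\tau}e^u$ with $\phi(u)=1+(u-1)e^u$, and a single calculus argument ($\phi(0)=0$, $\phi'(u)=ue^u$) handles all $u\in\mathbb{R}$ at once, with no case distinction on the sign of $a-g$. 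What your approach buys is uniformity and brevity; what the paper's case split buys is perhaps a slightly more transparent connection to the competing ``more epitopes'' versus ``more antibodies'' regimes, but at the cost of repeating essentially the same estimate twice.
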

\begin{proof}
    The statement is obvious for $x_{\textnormal{DF}}(\tau\ ; a)$ and for the $a=g$ case of $x(\tau\ ; a)$. For $x(\tau\ ; a)$ and $a\neq g$, the denominator of $\frac{d}{da}x(\tau\ ; a)$ is positive. Thus, the sign is determined only by the numerator. We observe that this numerator is positive if
    \[ d \coloneqq g(1-e^{k_a (a-g)\tau}) + k_a a (a-g)\tau e^{k_a (a-g)\tau} > 0\ .  \] 
    For $a<g$, the first summand $g(1-e^{k_a (a-g)\tau})$ is positive, while the second summand $k_a a (a-g)\tau e^{k_a (a-g)\tau}$ is negative. Thus, the condition $d > 0$ is equivalent to
    \[\frac{-k_a a (a-g)\tau e^{k_a (a-g)\tau}}{g(1-e^{k_a (a-g)\tau })} < 1\ .\]
    To check that this is the case, we define $z = -k_a(a-g)\tau$ and calculate
    \begin{align*}
       \frac{-k_a a (a-g)\tau e^{k_a (a-g)\tau}}{g(1-e^{k_a (a-g)\tau })} &\leq  \frac{-k_a a (a-g)\tau e^{k_a (a-g)\tau}}{a(1-e^{k_a (a-g)\tau })} = \frac{-k_a  (a-g)\tau e^{k_a (a-g)\tau}}{1-e^{k_a (a-g)\tau }}\\[1em]
       &\quad = \frac{ze^{-z}}{1-e^{-z}} = \frac{z}{e^{z} - 1}  < 1
    \end{align*} 
    for all $z > 0$. This shows that $d>0$ for $a< g$. 

    For $a> g$, the first summand $g(1-e^{k_a (a-g)\tau})$ is negative, while the second summand $k_a a (a-g)\tau e^{k_a (a-g)\tau}$ is positive. Here, the condition $d > 0$ is equivalent to
    \[\frac{-g(1-e^{k_a (a-g)\tau })}{k_a a (a-g)\tau e^{k_a (a-g)\tau}} < 1\ .\]
    To check that this is the case, we now define $z = k_a(a-g)\tau$ and calculate
   \begin{align*}
       \frac{-g(1-e^{k_a (a-g)\tau })}{k_a a (a-g)\tau e^{k_a (a-g)\tau}} &\leq \frac{-a(1-e^{k_a (a-g)\tau })}{k_a a (a-g)\tau e^{k_a (a-g)\tau}} = \frac{e^{k_a (a-g)\tau } - 1}{k_a (a-g)\tau e^{k_a (a-g)\tau}}\\[1em]
       &\quad = \frac{e^z -1}{ze^z} = \frac{1-e^{-z}}{z} < 1
   \end{align*} 
    for all $z> 0$. This shows that $d>0$ for $a > g$.
\end{proof}

Before we continue to analyze the properties of the accumulation models, it is worthwhile to include the Langmuir isotherm
\[x_{\text{L}}(a) = \frac{g}{1+\frac{k_d}{k_a a}} \]
in the analysis, as its dose-response behavior is well known.
However, the Langmuir isotherm describes a dynamic equilibrium, determined by the binding rate constant $k_a$ and the unbinding rate constant $k_d$. On the other hand, the accumulation model depends on the binding rate constant $k_a$ and the incubation time $\tau$. This complicates a direct comparison. 

However, the dose-response behavior of the Langmuir isotherm does not depend on the individual values of $k_a$ and $k_d$, but on the quotient $K_a = \frac{k_a}{k_d}$, which is called the binding constant. This binding constant determines the dose-response behavior and thus characterizes the system.
In the same way, we can define a system-characterizing constant for the accumulation model. We observe that $k_a$ and $\tau$ always appear as a product $k_a \tau$, and we define
\[K\coloneqq k_a \tau\ , \]
similar to $K_\tau = \frac{1}{k_a \tau}$ from (\PreprintText).  With this constant, the models read
\[x(a) = \left\{\begin{array}{ll}
		\frac{a g\left(1-e^{K (a -g)}\right)}{g-a e^{K (a-g)}} &\ , \ a \neq g\\[1em]
		\frac{g}{1+\frac{1}{K g}} & \ ,\ a = g
\end{array} \right.  \qquad  \begin{array}{l}
	x_{\text{DF}}(a) = g(1-e^{-K a })\\[1em]
	x_{\text{L}}(a) = \frac{g}{1+\frac{1}{K a}}
\end{array}\quad .\]
Despite describing different processes, we can now compare the dose-response behavior of the Langmuir isotherm and the accumulation model, by assuming identical values for the system-characterizing constants $K_a = K$. 

\begin{theorem}
	\label{thm: SEC model inequalities}
	For $K , a ,g  \in (0,\infty)$ it holds that
	\begin{align*}
	x(a) < x_{\textnormal{L}}(a) < x_{\textnormal{DF}}(a) & \qquad \text{if} \ a < g\\[1em]
	x(a) = x_{\textnormal{L}}(a) < x_{\textnormal{DF}}(a) & \qquad \text{if} \ a = g\\[1em]
	 x_{\textnormal{L}}(a) < x(a)  < x_{\textnormal{DF}}(a) & \qquad \text{if} \ a > g\ ,
	\end{align*}
    where $x(a) = x_{\textnormal{L}}(a)$ if $a = g$.
\end{theorem}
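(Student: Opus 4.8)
The plan is to break the three-way chain into the pairwise comparisons $x_{\textnormal{L}}$ versus $x_{\textnormal{DF}}$, $x$ versus $x_{\textnormal{L}}$, and $x$ versus $x_{\textnormal{DF}}$, and to observe that each one reduces to the single elementary estimate $e^u > 1 + u$ for $u \neq 0$. A useful bookkeeping remark at the outset: for $a \le g$ the displayed chain closes by transitivity once I know $x < x_{\textnormal{L}}$ (or $x = x_{\textnormal{L}}$) together with $x_{\textnormal{L}} < x_{\textnormal{DF}}$, so the only case in which $x < x_{\textnormal{DF}}$ must be argued on its own is $a > g$, where $x$ sits strictly above $x_{\textnormal{L}}$.

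First I would dispose of $x_{\textnormal{L}}(a) < x_{\textnormal{DF}}(a)$, valid for every $a > 0$. Writing $x_{\textnormal{L}}(a) = \frac{Ka}{Ka+1}g$ and $x_{\textnormal{DF}}(a) = g(1-e^{-Ka})$, this is equivalent to $\frac{1}{Ka+1} > e^{-Ka}$, i.e. to $e^{Ka} > 1 + Ka$, the elementary inequality with $u = Ka$. The equality at the boundary is immediate, since both $x(g)$ and $x_{\textnormal{L}}(g)$ collapse to $g/(1 + \tfrac{1}{Kg})$.

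The core is the comparison of $x(a)$ with $x_{\textnormal{L}}(a)$ for $a \neq g$. I would put the difference over the common denominator $(g - a e^{K(a-g)})(Ka+1)$; the expected simplification is that the numerator collapses to $ga\big(K(a-g) + 1 - e^{K(a-g)}\big)$, the quadratic exponential terms cancelling in pairs. Writing $u = K(a-g)$, the factor $u + 1 - e^u$ is strictly negative for $u \neq 0$ by the same elementary inequality, and $Ka + 1 > 0$, so the sign of $x(a) - x_{\textnormal{L}}(a)$ is exactly $-\operatorname{sign}\!\big(g - a e^{K(a-g)}\big)$. This denominator is the very quantity whose sign was already settled in Theorem \ref{thm: bounded solutions single epitope}: for $a < g$ one has $e^{K(a-g)} < 1$, hence $a e^{K(a-g)} < a < g$ and the denominator is positive, giving $x(a) < x_{\textnormal{L}}(a)$; for $a > g$ one has $e^{K(a-g)} > 1$, hence $a e^{K(a-g)} > a > g$ and the denominator is negative, giving $x(a) > x_{\textnormal{L}}(a)$.

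Finally I would close the remaining gap $x(a) < x_{\textnormal{DF}}(a)$ for $a > g$. An analogous difference computation gives $x_{\textnormal{DF}}(a) - x(a) = \frac{g^2 a\,(\psi(a) - \psi(g))}{g - a e^{K(a-g)}}$ with $\psi(t) = \frac{1-e^{-Kt}}{t}$, and a short check shows $\psi$ is strictly decreasing (its derivative has numerator $e^{-Kt}(Kt+1) - 1 < 0$, again $e^u > 1+u$ in disguise); since $\psi(a) < \psi(g)$ and the denominator is negative for $a > g$, the difference is positive. Alternatively, and perhaps more transparently, $x < x_{\textnormal{DF}}$ follows from a differential-inequality comparison: both solve $\dot y = k_a(\cdots)(g-y)$ from $y(0)=0$, but the depletion rate carries the extra factor $a - x \le a$ while $0 \le x \le g$ by Theorem \ref{thm: bounded solutions single epitope}, so $x$ never grows faster than $x_{\textnormal{DF}}$. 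I expect the only real friction to be the algebraic cancellation producing $ga(u+1-e^u)$; once that identity and the sign of $g - a e^{K(a-g)}$ are in hand, every inequality is a one-line consequence of $e^u > 1 + u$.
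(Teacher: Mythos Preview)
Your proof is correct and in spirit very close to the paper's: both reduce each pairwise comparison to the single elementary estimate $e^u > 1 + u$ for $u \neq 0$. The algebraic packaging differs slightly. The paper works with the ratio $x/x_{\textnormal{L}}$, splits it as $N/D$, and then checks the sign of $N-D = 1 + K(a-g) - e^{K(a-g)}$; you compute the difference $x(a) - x_{\textnormal{L}}(a)$ directly and obtain the cleaner factorisation with numerator $ag\,(u+1-e^u)$, $u = K(a-g)$, which makes the role of the sign of $g - ae^{K(a-g)}$ completely explicit. For $x_{\textnormal{L}} < x_{\textnormal{DF}}$ the paper introduces an auxiliary function $\Delta_2(z) = 1 - e^{-z} - z e^{-z}$, which after rearranging is again just $e^z > 1+z$, so your one-line reduction is the same argument with less notation.

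The only genuine divergence is the step $x(a) < x_{\textnormal{DF}}(a)$ for $a > g$. The paper invokes the $a\leftrightarrow g$ symmetry of Lemma~\ref{lemma: exchanging a and g} to rewrite $x(a)$ and then bounds the ratio $x_{\textnormal{DF}}/x$; you instead either (i) compute the difference and factor it through the strictly decreasing function $\psi(t) = (1-e^{-Kt})/t$, whose monotonicity is once more $e^u > 1+u$ in disguise, or (ii) use an ODE comparison based on $a - x(t) < a$. Both of your routes are self-contained and do not require the symmetry lemma; the $\psi$-argument in particular keeps the whole theorem unified under the single inequality $e^u > 1+u$, which is a pleasant organisational gain over the paper's more ad hoc treatment of this last case.
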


\begin{proof}
    The case $x(g) = x_{\textnormal{L}}(g)$ is obvious. Thus, let $a < g$, then
    \[\frac{x(a)}{x_{\textnormal{L}}(a)} = \frac{a g (1+K a)(1-e^{K (a-g)})}{K a g (g- a e^{K (a-g)})} = \frac{1+K a - e^{K(a-g)} - Ka e^{K(a-g)}}{K g - K a e^{K(a-g)}} \eqqcolon \frac{N}{D} \ , \]
    where we defined the numerator term $N$ and the denominator term $D$ in the last step. Note that $1+K a$ is always positive. Since $K(a -g) < 0 $ for $a < g$, the term $1-e^{K(a-g)}$ is also positive. Thus, the numerator $N$ is positive. Furthermore, $g-ae^{K(a-g)}$ is also positive, such that the denominator is positive. Thus, $\frac{N}{D}< 1$ if $N-D < 0$:
    \[N-D = 1 + K(a-g) -e^{K(a-g)} \eqqcolon \Delta(K(a-g))\ ,\]
    where we define the function $\Delta(z) = 1+z-e^{z}$. It can easily be shown that $\Delta(z) < 0$ for all $z \neq 0$. Hence, we have proven that
    \[a < g \qquad \Rightarrow \qquad \frac{x(a)}{x_{\textnormal{L}}(a)} = \frac{N}{D} < 1 \qquad \Rightarrow \qquad x(a) < x_{\textnormal{L}}(a)\ .\]
    Let now $a > g$, then $K(a-g) > 0$ and thus $1-e^{K(a-g)} < 0 $. It follows that $N<0$. In the same way, since it follows that $g-ae^{K(a-g)}<0$, i.e. $D < 0$. Thus, $\frac{N}{D} > 1$ if $(-N)-(-D) = D-N > 0$. We observe that
    \[D-N = -1 - K(a-g) + e^{K(a-g)} = -\Delta(K(a -g)) > 0\ ,\]
    since $\Delta(z)< 0$ for $z\neq0$. Thus, we have proven
    \[a > g \qquad \Rightarrow \qquad \frac{x(a)}{x_{\textnormal{L}}(a)} = \frac{N}{D} \geq 1 \qquad \Rightarrow \qquad x_{\textnormal{L}}(a) < x(a)\ .\]
    
    Next, we consider $x_{\textnormal{DF}}(a)-x_{\textnormal{L}}(a)$ for $a \leq g$:
    \[x_{\textnormal{DF}}(a)-x_{\textnormal{L}}(a) = g(1-e^{-K a}) - \frac{K a g}{1+ K a } =\frac{g(1+K a)(1-e^{-Ka})- K a g}{1+K a} \eqqcolon \frac{N_2}{D_2}\ . \]
    Here, we defined a new numerator term $N_2$ and a new denominator term $D_2$. Since $1+K a$ is positive, i.e. $D_2>0$, the sign of $x_{\textnormal{DF}}(a)-x_{\textnormal{L}}(a)$ is determined by $N_2$. We calculate
    \[N_2 = g(1-e^{-Ka} - K a e^{-Ka}) \eqqcolon g\cdot \Delta_2(K a)\ , \]
    where we defined $\Delta_2(z) = 1-e^{-z}- z e^{-z}$. It can easily be shown that $\Delta_2(z) > 0 $ for all $z > 0$. Thus, it follows that $N_2 > 0$, which leads to 
    \[a\leq g \qquad \Rightarrow \qquad x(a) \leq  x_{\textnormal{L}}(a) < x_{\textnormal{DF}}(a)\ , \]
    where we used the inequality $x(a)\leq x_{\textnormal{L}}(a)$ for $a \leq g$, that was proven before.
    
    Finally, we consider $\frac{x_{\textnormal{DF}}(a)}{x(a)}$ for $a \geq g$. For this, we use the symmetry of $x(a)$ with respect to $a$ and $g$ from Lemma \ref{lemma: exchanging a and g} to calculate
    \[\frac{x_{\textnormal{DF}}(a)}{x(a)} =\frac{(a -g e^{K(g-a)})(1-e^{-K a})}{a (1-e^{K(g-a)})}\ . \]
    As before, it can be seen that both the denominator and the numerator are non-negative. By replacing the first $g$  with an $a$ we obtain the following inequality (note that $a \geq g$)
    \[\frac{x_{\textnormal{DF}}(a)}{x(a)}  \leq \frac{(a -a e^{K(g-a)})(1-e^{-K a})}{a (1-e^{K(g-a)})} = 1-e^{-K a} < 1\ . \]
    Thus, it holds that
    \[a \geq g\qquad \Rightarrow \qquad x_{\textnormal{L}}(a) \leq  x(a) < x_{\textnormal{DF}}(a)\ ,\]
    where we used the inequality $x_{\textnormal{L}}(a) \leq x(a)$ for $a\geq 0$, that was proven before.
\end{proof}

\begin{corollary}
	\label{cor: SEC limits}
	For $K, g > 0$ it holds that
	\[\lim_{a \rightarrow 0} x(a) = \lim_{a \rightarrow 0} x_{\textnormal{DF}}(a) = \lim_{a \rightarrow 0} x_{\textnormal{L}}(a) = 0\]
	and 
	\[ \lim_{a \rightarrow \infty} x(a) = \lim_{a \rightarrow \infty} x_{\textnormal{DF}}(a) = \lim_{a \rightarrow \infty} x_{\textnormal{L}}(a) = g \ .\]
\end{corollary}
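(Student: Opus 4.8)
The plan is to evaluate the two models with elementary closed forms directly and then pin down the limits of the depletion model $x(a)$ by squeezing it between them, so that no indeterminate form coming from the more complicated expression for $x(a)$ ever has to be resolved. For $x_{\textnormal{DF}}(a)=g(1-e^{-Ka})$ and $x_{\textnormal{L}}(a)=g/(1+\tfrac{1}{Ka})$ the four limits are immediate from the continuity of the exponential and the reciprocal: as $a\to 0$ we have $e^{-Ka}\to 1$ and $\tfrac{1}{Ka}\to\infty$, so both models tend to $0$; as $a\to\infty$ we have $e^{-Ka}\to 0$ and $\tfrac{1}{Ka}\to 0$, so both tend to $g$. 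This already settles four of the six limits.

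For the remaining two limits of $x(a)$ I would avoid the closed form $\tfrac{ag(1-e^{K(a-g)})}{g-ae^{K(a-g)}}$ altogether and instead invoke the comparisons already established. As $a\to 0$, eventually $a<g$, and Theorem \ref{thm: bounded solutions single epitope} gives $0\le x(a)\le\min\{a,g\}=a$, so the squeeze theorem yields $x(a)\to 0$ (alternatively one could squeeze $0\le x(a)<x_{\textnormal{L}}(a)$ via Theorem \ref{thm: SEC model inequalities}). As $a\to\infty$, eventually $a>g$, and Theorem \ref{thm: SEC model inequalities} gives $x_{\textnormal{L}}(a)<x(a)<x_{\textnormal{DF}}(a)$; since both flanking functions tend to $g$ by the first step, the squeeze theorem forces $x(a)\to g$.

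The only subtlety—and hence the main obstacle, such as it is—is that a direct computation of $\lim_{a\to\infty}x(a)$ is an indeterminate form in which both the numerator and the denominator of the closed expression diverge, so tackling it head-on would require repeated use of L'H\^{o}pital's rule or a careful asymptotic expansion of the exponentials. The inequalities from Theorem \ref{thm: SEC model inequalities} (together with the bound from Theorem \ref{thm: bounded solutions single epitope} for the case $a\to 0$) sidestep this entirely and reduce the corollary to a one-line squeeze once the two elementary limits are in hand. The only care needed is to observe that each squeeze is valid merely for $a$ sufficiently small, respectively sufficiently large, which is all that a limit requires.
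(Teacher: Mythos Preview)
Your proposal is correct and follows essentially the same strategy as the paper: evaluate the elementary limits of $x_{\textnormal{DF}}$ and $x_{\textnormal{L}}$ directly, then obtain $\lim_{a\to\infty}x(a)=g$ by squeezing with the inequalities of Theorem~\ref{thm: SEC model inequalities}. The only cosmetic difference is that the paper declares $\lim_{a\to 0}x(a)=0$ ``obvious'' from the closed form (the numerator carries a factor $a$ while the denominator tends to $g$) and instead squeezes $\lim_{a\to 0}x_{\textnormal{L}}(a)$, whereas you do the reverse; both are equally valid.
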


\begin{proof}
    The limits 
    \[\lim\limits_{a \rightarrow \infty} x_{\textnormal{DF}}(a)  = g\ , \quad \lim\limits_{a \rightarrow \infty} x_{\textnormal{L}}(a) = g\ , \quad  \lim\limits_{a\rightarrow 0} x_{\textnormal{DF}}(a) = 0\quad \text{and}\quad \lim\limits_{a\rightarrow 0} x(a) = 0\]
    are obvious. Since $x(a) \leq x_{\textnormal{L}}(a) < x_{\textnormal{DF}}(a)$ for $a \leq g$ (Theorem \ref{thm: SEC model inequalities}), it follows that
    \[\lim\limits_{a \rightarrow 0} x_{\textnormal{L}}(a) = 0\ .\]
    In the same way, since $x_{\textnormal{L}}(a)\leq x(a) < x_{\textnormal{DF}}(a)$ for $a\geq g$ (Theorem \ref{thm: SEC model inequalities}), it follows that
    \[\lim\limits_{a\rightarrow \infty} x(a) = g\ .\]
\end{proof}

Before we discuss the dose-response behavior and the heuristics of the antibody depletion effect, let us briefly discuss the inequalities of Theorem \ref{thm: SEC model inequalities}. Among the inequalities, only $x(a) < x_{\textnormal{DF}}(a)$ has a simple reason: the effective concentration decreases with time in the depletion case, so more antibodies accumulate for the depletion-free case than for the depletion case. Unfortunately, there appears to be no deeper reason for the remaining inequalities. 

Although the lack of a dissociation term $-k_dx(t)$ implies a faster antibody accumulation a for the depletion-free accumulation IVP, compared to Langmuir kinetics, the accumulation rate hardly matters for the Langmuir isotherm. The accumulation model assumes a finite incubation time. On the other hand, the Langmuir isotherm is assumed to reach the equilibrium state, where even infinite incubation times are allowed. Thus, the accumulation rate does not suffice to motivate the inequality $x_{\textnormal{L}}(a) < x_{\textnormal{DF}}(a)$. Without calculations, it could equally well be assumed that the equilibrium surface concentration of bound antibodies was higher than the surface concentration of antibodies that could accumulate in the finite incubation time $\tau$.

The inequalities $x(a) < x_{\textnormal{L}}(a)$ for $a<g$ and $x_{\textnormal{L}}(a) < x(a)$ for $a>g$ might seem more intuitive, but verbal arguments easily fail. For sufficiently high antibody concentrations, it can be assumed that $x(a)\simeq x_{\textnormal{DF}}(a)$. However, it also holds that $x_{\textnormal{L}}(a)$ is a monotonically increasing function of the antibody concentration. It could very well have been the case that $x(a)< x_{\textnormal{L}}(a)$ remains valid up to the limit $a\rightarrow \infty$ where all models agree. In the same way, the depletion effect is more noticeable for low antibody concentration, because the binding of antibodies produces a stronger reduction of the relative antibody concentration if there are few antibodies to begin with. However, the Langmuir isotherm also decreases with decreasing antibody concentrations. Finally, there is no obvious reason for the equality $x(g) = x_{\textnormal{L}}(g)$.

\subsection{Dose-response behavior and heuristics for antibody depletion}
\label{subsec: heuristics for antibody depletion}

To discuss the dose-response behavior and to get an idea of the antibody depletion effect of the depletion accumulation model, we can consider the graphs for the suitable combinations of $K$ and $g$. Since all models scale with the epitope concentration $g$, it is convenient to consider the \textbf{fractional occupancies} $\theta(a) = \nicefrac{x(a)}{g}$ for the plots, which does not affect the relationship between the models.

\begin{figure}[h!]
	\centering
	\begin{subfigure}[c]{0.49\textwidth}
		\caption{Model plots linear scale}
		\includegraphics[width= \textwidth]{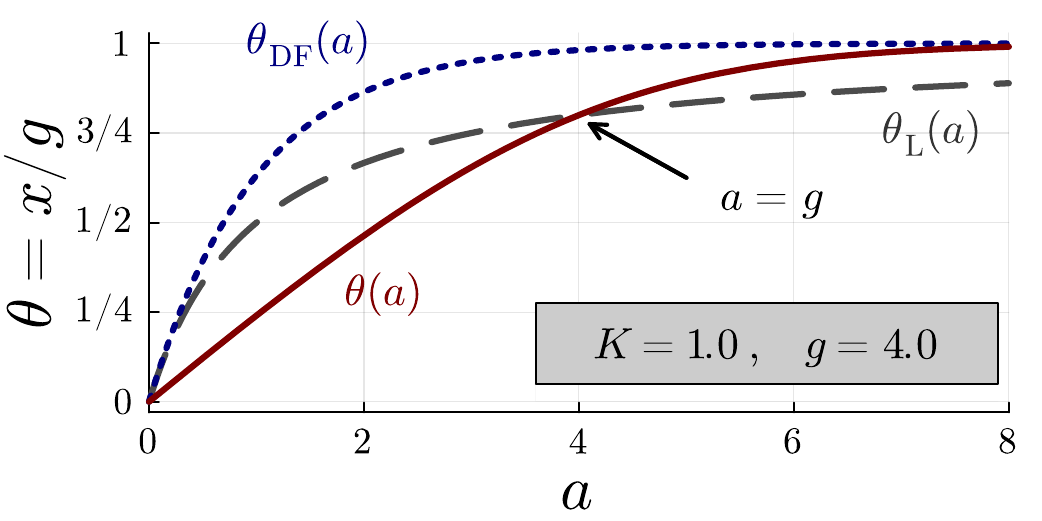}
	\end{subfigure}
	\begin{subfigure}[c]{0.49\textwidth}
		\caption{Model plots logarithmic scale}
		\includegraphics[width=\textwidth]{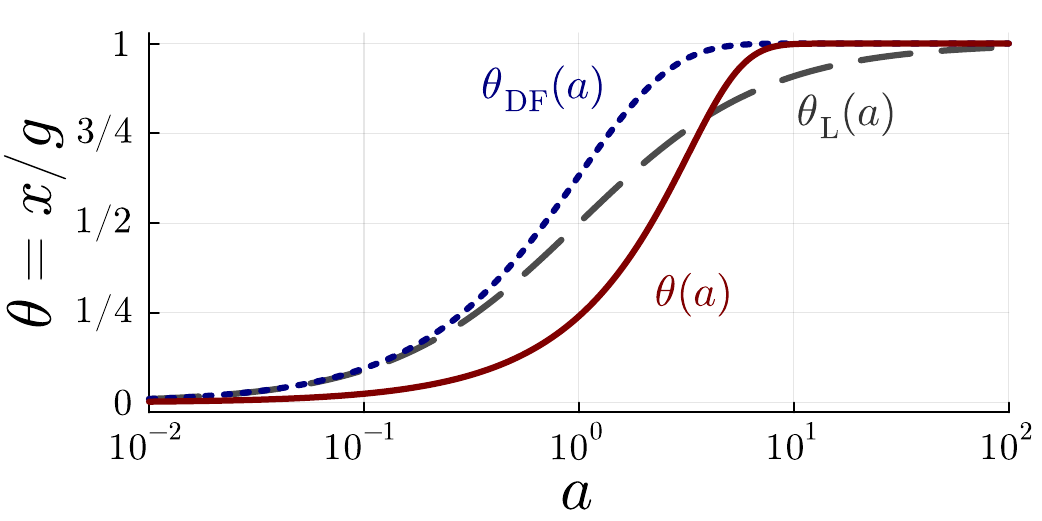}
	\end{subfigure}

	\caption{Plots for $K=1.0$ and $g = 4.0$ of $\theta(a)$ (red, solid line), $\theta_{\textnormal{DF}}(a)$ (blue, dotted line) and $\theta_{\textnormal{L}}(a)$ (gray, dashed line).} 
	\label{fig: model comparison 1}
\end{figure}

Figures \ref{fig: model comparison 1}a and \ref{fig: model comparison 1}b illustrate the inequalities of Theorem \ref{thm: SEC model inequalities} and the limiting behavior of Corollary \ref{cor: SEC limits} quite well. On the linear scale (Figure \ref{fig: model comparison 1}a), it can be seen that all models begin at $\theta =0$ for $a=0$. Until $a = g = 4$, the Langmuir isotherm $\theta_{\textnormal{L}}(a)$ is larger than the accumulation model $\theta(a)$. As stated in Theorem \ref{thm: SEC model inequalities} both models coincide at $a = g = 4$, from where the depletion accumulation model is larger than the Langmuir isotherm. Also, as stated in Theorem \ref{thm: SEC model inequalities}, the depletion-free accumulation model $\theta_{\textnormal{DF}}(a)$ is an upper bound for $\theta(a)$ and $\theta_{\textnormal{L}}(a)$ for all antibody concentrations. Finally, on the logarithmic scale (Figure \ref{fig: model comparison 1}b), it can be seen that all models converge to the same point for $a\rightarrow \infty$.

\begin{figure}[h!]
	\centering
	\begin{subfigure}[c]{0.49\textwidth}
		\caption{Model plots linear scale}
		\includegraphics[width= \textwidth]{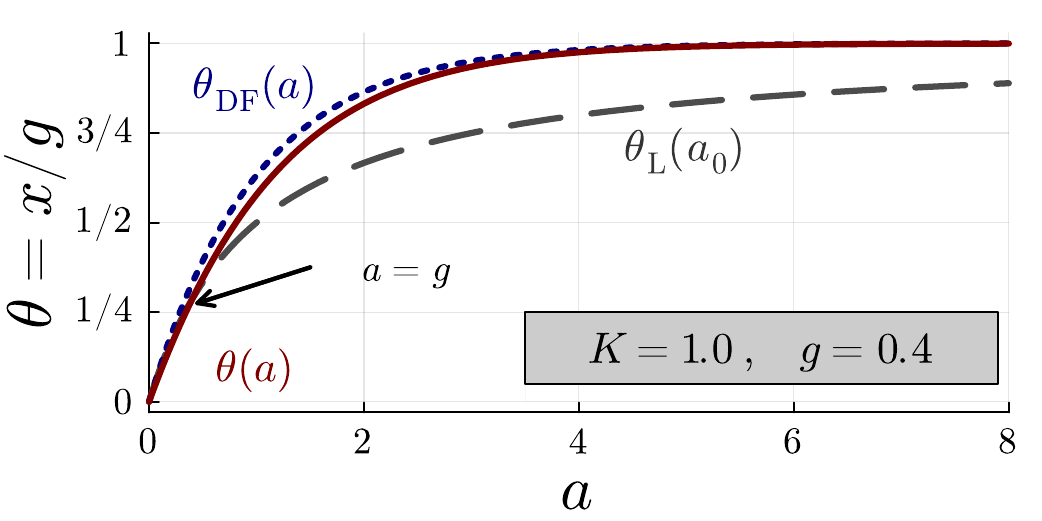}
	\end{subfigure}
	\begin{subfigure}[c]{0.49\textwidth}
		\caption{Model plots logarithmic scale}
		\includegraphics[width=\textwidth]{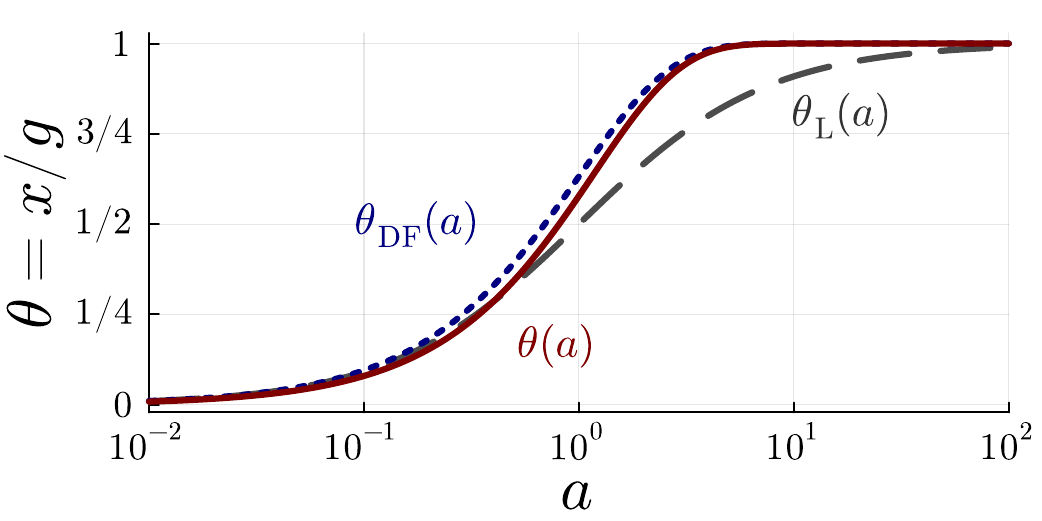}
	\end{subfigure}

	\caption{Plots for $K=1.0$ and $g = 0.4$ of $\theta(a)$ (red, solid line), $\theta_{\textnormal{DF}}(a)$ (blue, dotted line) and $\theta_{\textnormal{L}}(a)$ (gray, dashed line). } 
	\label{fig: model comparison 2}
\end{figure}

\begin{figure}[h!]
	\centering
	\begin{subfigure}[c]{0.49\textwidth}
		\caption{Model plots linear scale}
		\includegraphics[width= \textwidth]{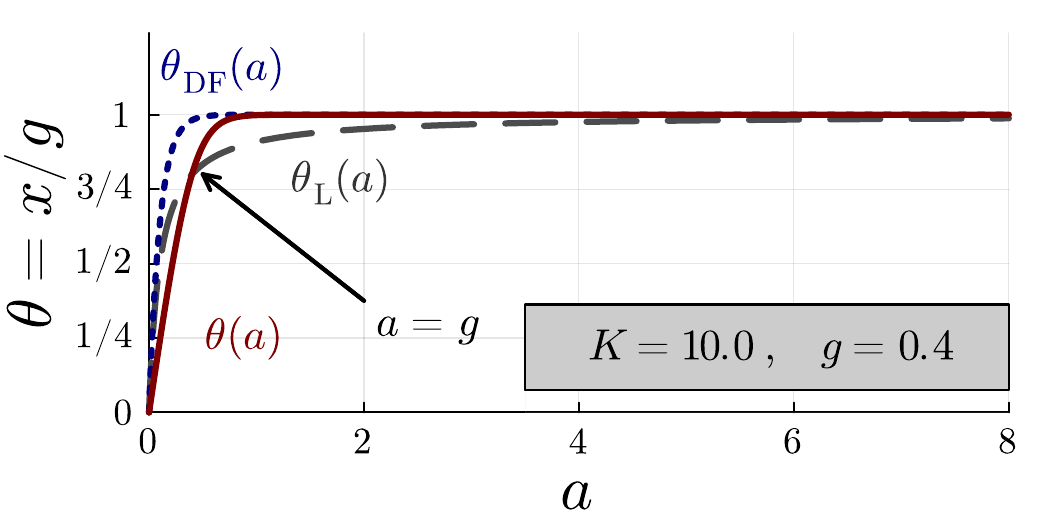}
	\end{subfigure}
	\begin{subfigure}[c]{0.49\textwidth}
		\caption{Model plots logarithmic scale}
		\includegraphics[width=\textwidth]{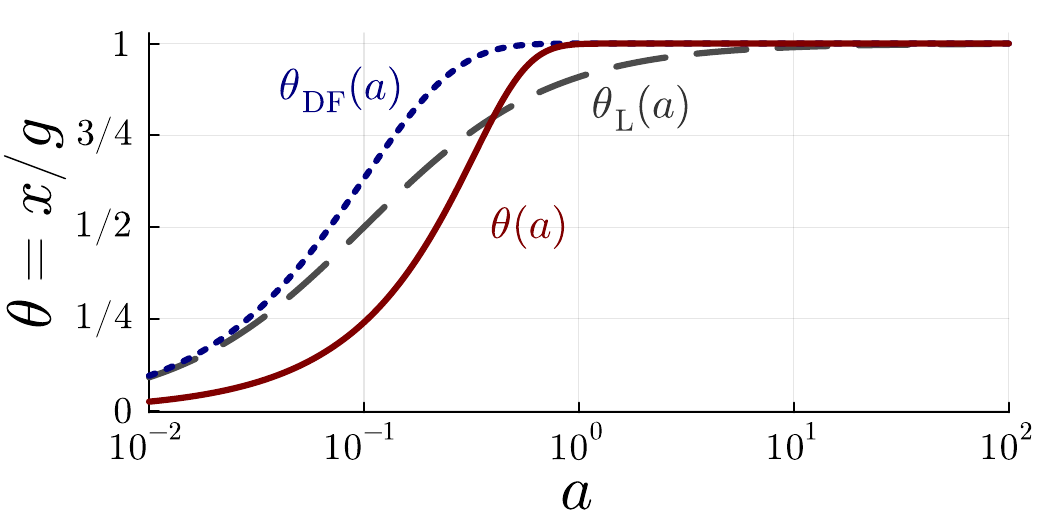}
	\end{subfigure}

	\caption{Plots for $K=10.0$ and $g = 0.4$ of $\theta(a)$ (red, solid line), $\theta_{\textnormal{DF}}(a)$ (blue, dotted line) and $\theta_{\textnormal{L}}(a)$ (gray, dashed line). } 
	\label{fig: model comparison 3}
\end{figure}

To get a better idea of the depletion behavior, different parameter configurations should be investigated. For this purpose, Figures \ref{fig: model comparison 2} and \ref{fig: model comparison 3} contain the same plots as Figure \ref{fig: model comparison 1} but with different values for $K$ and $g$. There are two important observations to be made. 

First, the graphs in Figure \ref{fig: model comparison 1}b are the same graphs as in Figure \ref{fig: model comparison 3}b, just shifted to the left. In the linear scale, the graphs are the same between Figures \ref{fig: model comparison 1}a and \ref{fig: model comparison 3}a up to a rescaling of the horizontal axis, because of the relationship between linear and logarithmic scales. In consequence, the depletion is the same in Figure \ref{fig: model comparison 1} ($K=1.0$ and $g=4.0$) and Figure \ref{fig: model comparison 3} ($K=10.0$ and $g = 0.4$). This suggests that the depletion effect depends on the product $Kg$.

The second observation is that the relationship between the Langmuir isotherm and the depletion-free accumulation model is the same in all plots. This is not an artifact of our choice for the system parameters $K$ and $g$, but a general principle. For both models, $g$ only acts as a scale parameter for the vertical axis and $K$ only acts as a scale parameter for the horizontal axis:
\[x_{\textnormal{DF}}(a) = g\cdot (1-e^{-(Ka)})\qquad \text{and}\qquad x_{\textnormal{DF}}(a) = g\cdot \frac{1}{1+\frac{1}{(K a)}}\ .\]
In fact, expressing the antibody concentration $a$ relative to the parameter $K$, i.e. $a = \nicefrac{b}{K}$, we obtain fractional occupancies whose shape does not depend on the system parameters $K$ and $g$:
\[\theta_{\textnormal{DF}}(b) = (1-e^{-b})\qquad \text{and}\qquad \theta_{\textnormal{DF}}(b) = \frac{1}{1+\frac{1}{b}}\ .\]
For the depletion accumulation model, the parameters $K$ and $g$ remain in the equation, but we observe that the shape only depends on the product $Kg$:
\[\theta(b) = \left\{ \begin{array}{ll}
	\frac{b(1-e^{b-Kg})}{Kg - be^{b-Kg}} & \ , \quad b \neq Kg\\[1em]
	\frac{1}{1+\frac{1}{Kg}} & \ , \quad b = Kg
\end{array}  \right. \ .\]
Thus, the depletion effect only depends on the product $Kg$. We can use these parameter-free fractional occupancies, together with Theorem \ref{thm: SEC model inequalities}, to analyze how $Kg$ affects the shape of the depletion. 

\begin{figure}[h!]
	\centering
	\begin{subfigure}[c]{0.49\textwidth}
		\caption{Model plots linear scale}
		\includegraphics[width= \textwidth]{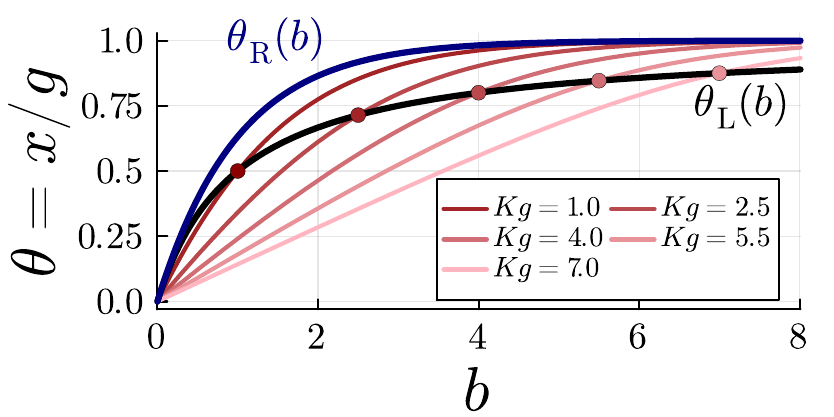}
	\end{subfigure}
	\begin{subfigure}[c]{0.49\textwidth}
		\caption{Model plots logarithmic scale}
		\includegraphics[width=\textwidth]{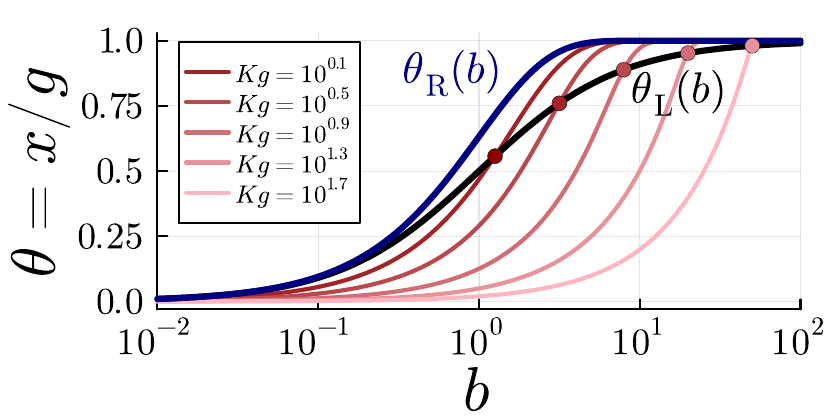}
	\end{subfigure}

	\caption{Plots of $\theta_{\textnormal{DF}}(b)$ and $\theta_{\textnormal{L}}(b)$ (which do not depend on $Kg$) and of $\theta(b)$ for several values of $Kg$, illustrating the $Kg$-dependence of the depletion shape. Note that $a = \frac{b}{K}$. So, the shapes of the depletion curves $\theta(b)$ for different values of $Kg$ should not be compared directly with each other, only with the shapes of $\theta_{\textnormal{DF}}(b)$ and $\theta_{\textnormal{L}}(b)$. } 
	\label{fig: model comparison 4}
\end{figure}

When the intersection point between the depletion accumulation model $\theta(b)$ and the Langmuir isotherm $\theta_{\textnormal{L}}(b)$ is further to the right, the depletion effect is more pronounced. Here, ``more pronounced'' means that the depletion effect covers a longer part of the non-trivial curvature of the depletion-free accumulation model. That is, the depletion accumulation model approximates the depletion-free accumulation model further to the right with respect to the point where the depletion-free accumulation model becomes almost constant (complete occupancy of epitopes). Since the intersection point is $a = g$, that is, $b = Kg$, we find that larger $Kg$ values lead to more pronounced depletion effects, which is illustrated in Figure \ref{fig: model comparison 4}.

\begin{remark}[\textbf{Depletion heuristic}]
    The larger $Kg$, the more pronounced the depletion effect.
\end{remark}

\section{Accumulation model for multiple epitope classes}

When there is more than one epitope class, the depletion accumulation IVP does not seem to have a general analytical solution. Nevertheless, we can show that there is a unique solution that satisfies most of the properties that we have found in the last section.

While we cannot find an analytical solution for the depletion accumulation IVP, the analytical solution of the depletion-free accumulation IVP is almost trivial. Because the antibody concentration is constant, the different epitope classes do not compete for available antibodies, becoming independent of each other:
\[\frac{d}{dt} x_{\textnormal{DF},i}(t) = k_{a,i} a (g_i -x_{\textnormal{DF},i}(t))\qquad \forall\ i \in \{1,\ldots,N\}\ .\]
Thus, the solution for each epitope class is given by the solution for a single epitope class:
\[x_{\textnormal{DF},i}(t) = g_i (1-e^{-k_{a,i} a t}) \qquad \forall\ i \in \{1,\ldots,N\}\ ,\]
\begin{equation}
	\label{eq: depletion-free multi-epitope solution}
	 \Rightarrow\qquad \boldsymbol{x}_{\textnormal{DF}}(t) = \begin{pmatrix}
	    g_1 (1-e^{-k_{a,1} a t})\\ \vdots \\ g_1 (1-e^{-k_{a,1} a t})
	\end{pmatrix}  \ .
\end{equation}
As for Theorem \ref{thm: unique solution single epitope class}, uniqueness of this solution for $t\geq 0$ follows from a general existence and uniqueness theorem \cite[Chapter III, \S10, Theorem VI]{Walter_1998}.

\subsection{Properties of the multi-epitope accumulation IVP solution}
\label{subsec: Properties of the multi-epitope accumulation IVP solution}

Because of the general existence and uniqueness theorem \cite[Chapter III, \S10, Theorem VI]{Walter_1998}, we know that there is a unique solution for the depletion accumulation IVP around $t=0$. Let $\xi > 0$ denote the maximal time for which the unique solution $\boldsymbol{x}\colon [0,\xi)\rightarrow\mathbb{R}^N$ exists. Of course, this \textbf{maximal existence time} depends on the initial antibody concentration $a$ and the epitope classes $\{(g_i,k_{a,i})\}_{i=1}^N$, which we suppress in the notation.

Eventually, we want to prove that $\xi = \infty$. However, since we do not have an analytical solution for the depletion accumulation IVP if there is more than one epitope class, we need to prove additional properties of the unknown, locally unique solution first.

\begin{theorem}
	\label{theorem: integral equations}
	Let $\{(g_i,k_{a,i})\}_{i=1}^N$  be epitope classes, $a>0$ and let $\xi > 0$ be the corresponding maximal existence time. A vector-valued function $\boldsymbol{v}\colon [0,\xi)\rightarrow \mathbb{R}^N$ with $\boldsymbol{v}(0) = 0$ is the solution of the depletion accumulation IVP if and only if its components satisfy the following integral equations
	\[v_i(t) = g_i\left(1-e^{-k_{a,i}\int_0^t \left(a - \sum_{j=1}^N v_j(s)\right)\  ds}\right)\qquad \forall\  i\in \{1,\ldots, N\}\ .\]
\end{theorem}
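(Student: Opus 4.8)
The plan is to exploit the fact that, for each fixed $i$, the right-hand side $k_{a,i}(a - \sum_j x_j(t))(g_i - x_i(t))$ becomes a \emph{linear} first-order equation in the single unknown $x_i$ once the scalar quantity $a - \sum_j x_j(t)$ is regarded as a prescribed, time-dependent coefficient. This is the crucial structural observation: the coupling between the classes enters only through the common factor $a - \sum_j x_j(t)$, so every component equation can be solved explicitly by an integrating factor, and the common factor is precisely what appears in the exponent of the claimed integral equation. Note in particular that the exponent integral $\Phi(t) \coloneqq \int_0^t \left(a - \sum_{j=1}^N v_j(s)\right)\,ds$ is the same for all $i$.

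For the ``only if'' direction, suppose $\boldsymbol{v}$ solves the depletion accumulation IVP on $[0,\xi)$. Then $\boldsymbol{v}$ is $C^1$, so $\Phi$ is well defined and $C^1$. Rewriting the $i$-th equation as $v_i' + k_{a,i}(a - \sum_j v_j) v_i = k_{a,i}(a - \sum_j v_j) g_i$ and multiplying by the integrating factor $\mu_i(t) = e^{k_{a,i}\Phi(t)}$, the left-hand side becomes $\frac{d}{dt}(\mu_i v_i)$ while the right-hand side equals $g_i \mu_i'$. Integrating from $0$ to $t$ and using $v_i(0)=0$ together with $\mu_i(0)=1$ gives $\mu_i(t) v_i(t) = g_i(\mu_i(t)-1)$, which rearranges directly to the claimed integral equation.

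For the ``if'' direction, suppose $\boldsymbol{v}$ satisfies the integral equations, so each $v_i(t) = g_i(1 - e^{-k_{a,i}\Phi(t)})$ is a smooth function of $\Phi(t)$. A short bootstrap settles the regularity: assuming $\boldsymbol{v}$ is locally integrable, $\Phi$ is continuous, hence each $v_i$ is continuous, hence $\Phi$ is in fact $C^1$ with $\Phi'(t) = a - \sum_j v_j(t)$, and therefore each $v_i$ is $C^1$. Differentiating $v_i(t) = g_i(1 - e^{-k_{a,i}\Phi(t)})$ then yields $v_i'(t) = k_{a,i}\Phi'(t)\, g_i e^{-k_{a,i}\Phi(t)}$; substituting the identities $g_i e^{-k_{a,i}\Phi(t)} = g_i - v_i(t)$ and $\Phi'(t) = a - \sum_j v_j(t)$ recovers the ODE exactly, while the initial condition $v_i(0) = g_i(1 - e^0) = 0$ is immediate.

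The main obstacle I anticipate is the bookkeeping around regularity rather than any deep difficulty: one must justify that a function merely satisfying the integral equations is smooth enough to be differentiated, which is handled by the bootstrap above, and one must confirm that the integrating-factor manipulation is legitimate for a genuine $C^1$ solution, which it is since every quantity involved is continuous. Everything else reduces to a direct computation.
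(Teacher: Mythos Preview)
Your proof is correct. For the ``if'' direction (integral equations imply the ODE) you argue exactly as the paper does: differentiate the exponential expression and recognise $g_i e^{-k_{a,i}\Phi(t)}$ as $g_i - v_i(t)$. For the ``only if'' direction the two proofs diverge. You treat $a - \sum_j v_j(t)$ as a known time-dependent coefficient, apply the integrating factor $e^{k_{a,i}\Phi(t)}$ to each scalar component, and integrate from $0$ to $t$ to obtain the formula directly. The paper instead proves only the ``if'' direction by computation and then appeals to the uniqueness of the IVP solution on $[0,\xi)$: since any function satisfying the integral equations has been shown to solve the IVP, and the IVP solution is unique, the IVP solution must coincide with it and therefore satisfy the integral equations too. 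Your route is more constructive and self-contained---it never invokes the Picard--Lindel\"of uniqueness theorem and makes the underlying linear-in-$v_i$ structure explicit---whereas the paper's route is shorter once uniqueness is granted, but as phrased it tacitly presupposes that some function satisfying the integral equations is already in hand for the comparison step.
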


\begin{proof}
    Let $\boldsymbol{v}\colon [0,\xi) \rightarrow \mathbb{R}^N$ be a vector-valued function with $\boldsymbol{v}(0) = 0$ that satisfies the integral equations
    \[v_i(t) = g_i\left(1-e^{-k_{a,i}\int_0^t \left(a - \sum_{j=1}^N v_j(s)\right)\  ds}\right)\qquad \forall\  i\in \{1,\ldots, N\}\ .\]
    For the time derivatives of the components we calculate
    \begin{align*}
    	\frac{d}{dt} v_i(t) &= k_{a,i}g_i e^{-k_{a,i}\int_0^t \left(a - \sum_{j=1}^N v_j(s)\right)   ds}\quad  \frac{d}{dt} \int_0^t \left(a - \sum_{j=1}^N v_j(s)\right)  ds \\[1em]
    	&= k_{a,i}\left(a-\sum_{j=1}^N v_j(t)\right) g_i e^{-k_{a,i}\int_0^t \left(a - \sum_{j=1}^N v_j(s)\right)   ds}\\[1em]
    	&= k_{a,i}\left(a-\sum_{j=1}^N v_j(t)\right)\left(g_i - g_i + g_i e^{-k_{a,i}\int_0^t \left(a - \sum_{j=1}^N v_j(s)\right)  ds}\right)\\[1em]
    	&= k_{a,i}\left(a-\sum_{j=1}^N v_j(t)\right)\left(g_i - g_i \left(1- e^{-k_{a,i}\int_0^t \left(a - \sum_{j=1}^N v_j(s)\right)  ds}\right)\right)\\[1em]
    	&= k_{a,i}\left(a-\sum_{j=1}^N v_j(t)\right)\left(g_i- v_{i}(t)\right)\quad ,
    \end{align*}
    where we have used $v_i(t) = g_i\big(1-e^{-k_{a,i}\int_0^t \left(a - \sum_{j=1}^N v_j(s)\right)\  ds}\big)$ in the last step. Thus, $\boldsymbol{v}$ solves the depletion accumulation IVP.

    Let now $\boldsymbol{y}\colon [0,\xi)\rightarrow \mathbb{R}^N$ solve the multi-epitope accumulation IVP. Due to the local uniqueness of the solution, it follows that $\boldsymbol{y}(t) =  \boldsymbol{v}(t)$ for all $t\in [0,\xi)$. But then $\boldsymbol{y}$ also satisfies the integral equations.
\end{proof}

The integral equations have a simple and intuitive interpretation. Without antibody depletion, the concentration of bound antibodies for the individual epitope classes is given by
\[x_{\textnormal{DF},i}(t) =  g_i(1-e^{-k_{a,i}a t})\ .\]
Since the antibody concentration is constant, we can express the term $a t$ as time integral 
\[a t = \int_0^t a\ ds\ .\]
We may understand the integral as a cumulative effect of the antibody concentration over time. In the depletion case, the antibody concentration is given by $a(t) = a -\sum_{j=1}^N x_j(t)$ and the cumulative antibody concentration effect is
\[\int_0^t \alpha(s)\ ds = \int_0^t \left(a -\sum_{j=1}^N x_j(s)\right)\ ds\ .\]
This is precisely the term in the exponent of the integral equations of Theorem \ref{theorem: integral equations}. Thus, the depletion-free and the depletion cases behave similarly. The only difference is that the antibody concentration varies over time in the depletion case. For later use, we may already define here the cumulative antibody concentration.

\begin{definition}
    \label{def: cumulative antibody concentration}
    Let $\alpha(t\ ; a)$ denote the antibody concentration at time $t$ for the initial antibody concentration $a$, then we define the \textbf{cumulative antibody concentration} as
    \[A(t\ ; a) = \int_{0}^t \alpha(s\ ; a)\  ds\ .\]
    In particular, we call $\alpha(t\ ; a) = a - \sum_{j=1}^N x_j(t)$ the \textbf{depletion antibody concentration} and the corresponding $A(t\ ; a)$ the \textbf{cumulative depletion antibody concentration}.
\end{definition}

Given the similarity between the solution of the depletion-free accumulation model and the integral equations, it is reasonable to assume that also the solution of the depletion accumulation model is increasing monotonically. However, to prove this property, we first need the following lemma.

\begin{lemma}
	\label{lemma: all derivatives must be zero}
	Let $\{(g_i , k_{a,i})\}_{i=1}^N$ be epitope classes, let $a>0$ and let $\xi > 0$ be the corresponding maximal existence time. Furthermore, let $\boldsymbol{x}\colon [0,\xi)\rightarrow \mathbb{R}^N$ denote the unique solution of the depletion accumulation IVP. Then it holds for all $t\in [0,\xi)$ that
	\begin{enumerate}
		\item $\displaystyle \quad \exists\  i \in \{1,\ldots,N\} \colon  \frac{d}{dt} x_i(t) = 0\qquad \Leftrightarrow\qquad \frac{d}{dt} x_i(t) = 0 \quad \forall \ i \in \{1,\ldots, N\} $
		\item $\displaystyle \quad \exists\  i \in \{1,\ldots,N\} \colon  \frac{d}{dt} x_i(t) > 0\qquad \Leftrightarrow\qquad \frac{d}{dt} x_i(t) > 0 \quad \forall \ i \in \{1,\ldots, N\} $
	\end{enumerate}
\end{lemma}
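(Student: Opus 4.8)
The plan is to show that, at any fixed time $t$, the sign of each component derivative $\frac{d}{dt} x_i(t)$ is governed \emph{entirely} by the common depletion antibody concentration $\alpha(t) = a - \sum_{j=1}^N x_j(t)$ of Definition \ref{def: cumulative antibody concentration}, which carries no dependence on the index $i$. Once this is established, both equivalences are immediate, since ``$\alpha(t) = 0$'' and ``$\alpha(t) > 0$'' are index-free conditions. The decisive tool is the integral representation of Theorem \ref{theorem: integral equations}, which writes each component of the unique solution as
\[x_i(t) = g_i\left(1 - e^{-k_{a,i} A(t)}\right), \qquad A(t) = \int_0^t \alpha(s)\, ds .\]

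First I would use this representation to rewrite the second factor appearing in the depletion accumulation IVP, obtaining
\[g_i - x_i(t) = g_i\, e^{-k_{a,i} A(t)},\]
which is \emph{strictly positive} for every $t \in [0,\xi)$ because $g_i > 0$ (empty classes are excluded) and the exponential never vanishes. Substituting this back into the right-hand side of the IVP gives
\[\frac{d}{dt} x_i(t) = k_{a,i}\, \alpha(t)\, g_i\, e^{-k_{a,i} A(t)} .\]
Here $k_{a,i} > 0$, $g_i > 0$, and $e^{-k_{a,i} A(t)} > 0$, so all index-dependent factors are strictly positive, and the sign of $\frac{d}{dt} x_i(t)$ coincides with the sign of $\alpha(t)$ for every $i$.

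Both statements then follow at once. For the first equivalence, if $\frac{d}{dt} x_i(t) = 0$ for some $i$, the strict positivity of $k_{a,i} g_i e^{-k_{a,i} A(t)}$ forces $\alpha(t) = 0$, whence $\frac{d}{dt} x_j(t) = 0$ for all $j$; the converse holds trivially since $N \geq 1$. Replacing ``$= 0$'' by ``$> 0$'' and ``$\alpha(t) = 0$'' by ``$\alpha(t) > 0$'' throughout yields the second equivalence verbatim.

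I do not expect a genuine obstacle: the only nontrivial point is the non-vanishing of $g_i - x_i(t)$, which would otherwise permit a component derivative to vanish while $\alpha(t) \neq 0$ (namely if some $x_i$ reached its saturation value $g_i$). The exponential representation rules this out directly; arguing it from the ODE alone — say, via uniqueness of the solution through a saturation point — would be noticeably more delicate, so leaning on Theorem \ref{theorem: integral equations} is the clean route.
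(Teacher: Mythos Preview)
Your proposal is correct and follows essentially the same approach as the paper's proof: both use the integral representation of Theorem \ref{theorem: integral equations} to rewrite the derivative as $k_{a,i}\,\alpha(t)\,g_i\,e^{-k_{a,i}A(t)}$, observe that the index-dependent factors $k_{a,i}$, $g_i$, and $e^{-k_{a,i}A(t)}$ are all strictly positive, and conclude that the sign of each component derivative is governed solely by the index-free quantity $\alpha(t)$. The paper treats only the ``$=0$'' case explicitly and remarks that the ``$>0$'' case is identical, exactly as you do.
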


\begin{proof}
    We only show the first case, $\frac{d}{dt}x_i(t)=0$, as all steps and arguments will be the same for the second case $\frac{d}{dt}x_i(t)>0$.
    
    First, we observe that the direction ``$\Leftarrow$'' is trivial. For the opposite direction, assume that $\frac{d}{dt}x_i(t) = 0$ for an arbitrary $i \in \{1,\ldots,N\}$. Because of Theorem \ref{theorem: integral equations}, the function $x_i$ satisfies the integral equation
    \[x_i(t) = g_i\left(1-e^{-k_{a,i}\int_0^t \left(a - \sum_{j=1}^N x_j(s)\right)  ds}\right)\ .\]
    In the proof of Theorem \ref{theorem: integral equations}, we have already calculated that we can express the derivative as
    \[\frac{d}{dt} x_i(t) = k_{a,i}\left(a-\sum_{j=1} x_j(t)\right) g_i e^{-k_{a,i}\int_0^t \left( a - \sum_{j=1}^N x_j(s)\right)  ds}\ .\]
    Since all $x_i(t)$ are differentiable, and thus continuous on $[0,\xi)$, it follows that ${s \mapsto x_i(s)}$  is bounded on $[0,t]$. So, $\displaystyle \int_0^t \left( a -\sum_{j=1}^N x_i(s)\right) ds$ is finite and $e^{\int_0^t \left( a -\sum_{j=1}^N x_i(s)\right) ds}$ is positive.  By Definition \ref{def: accumulation IVPs},  all $g_i$ and $k_{a,i}$ are also positive. Hence, the condition $\frac{d}{dt}x_i(t)= 0$ leads to
    \[0 = \frac{d}{dt} x_i(t) = k_{a,i}\left(a-\sum_{j=1} x_j(t)\right) g_i e^{-k_{a,i}\int_0^t \left( a - \sum_{j=1}^N x_j(s)\right)  ds}\ ,\]
    \[\Rightarrow \qquad 0 =  a -\sum_{j=1}^N x_j(t) \ .\]
    Finally, we note that $0 =  a -\sum_{j=1}^N x_j(t)$ is not a condition for a specific $i$, but applies to all $i\in \{1,\ldots, N\}$. That is, let $i \in \{1,\ldots,N\}$ be arbitrary. Then, since $0 =  a -\sum_{j=1}^N x_j(t)$, it follows that
    \[\frac{d}{dt} x_i(t) = k_{a,i}\left(a-\sum_{j=1} x_j(t)\right) g_i e^{-k_{a,i}\int_0^t \left(a - \sum_{j=1}^N x_j(s)\right)  ds} = 0\ .\]
    Thus, $\frac{d}{dt}x_i(t)= 0$ for all $i\in \{1,\ldots,N\}$, which concludes the proof.
\end{proof}

With this lemma, we can show that the concentration of bound antibodies increases monotonically for all epitope classes.

\begin{theorem}
	\label{theorem: positive derivatives}
	Let $\{(g_i , k_{a,i})\}_{i=1}^N$ be epitope classes, let $a> 0$ and let $\xi > 0$ be the corresponding maximal existence time. Furthermore, let $\boldsymbol{x}\colon [0,\xi)\rightarrow \mathbb{R}^N$ denote the unique solution of the depletion accumulation IVP. Then it holds that
	\[\frac{d}{dt} x_i(t) > 0\qquad \forall\  t \in [0,\xi) \qquad \forall \ i \in \{1,\ldots, N\}\ .\]
\end{theorem}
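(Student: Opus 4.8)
The plan is to collapse the entire vector statement into a single scalar inequality about the total bound concentration and then to forbid that total from ever reaching the available antibody budget. Write $\sigma(t)\coloneqq\sum_{j=1}^N x_j(t)$. From the computation carried out in the proof of Theorem~\ref{theorem: integral equations}, each component of the unique solution satisfies
\[
\frac{d}{dt} x_i(t) = k_{a,i}\bigl(a - \sigma(t)\bigr)\, g_i\, e^{-k_{a,i}\int_0^t \left(a - \sigma(s)\right)\,ds}.
\]
Since $k_{a,i}>0$, $g_i>0$, and the exponential factor is strictly positive, the sign of $\frac{d}{dt}x_i(t)$ equals the sign of $a-\sigma(t)$, and this sign is the same for every index $i$; this is exactly the dichotomy packaged in Lemma~\ref{lemma: all derivatives must be zero}. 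Hence the whole theorem reduces to proving the strict inequality $a-\sigma(t)>0$ for all $t\in[0,\xi)$.

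To establish this I would argue by contradiction using the autonomous structure. At $t=0$ we have $\sigma(0)=0<a$, so $a-\sigma(0)>0$, and $t\mapsto a-\sigma(t)$ is continuous on $[0,\xi)$. If the desired inequality failed, the intermediate value theorem would produce a first time $t^\ast\in(0,\xi)$ with $a-\sigma(t^\ast)=0$ and $a-\sigma(t)>0$ throughout $[0,t^\ast)$. The crucial observation is that $\boldsymbol{x}(t^\ast)$ is then an equilibrium of $\dot{\boldsymbol{x}}=f(\boldsymbol{x})$: every component of $f$ carries the common factor $a-\sum_j x_j$, which vanishes at $\boldsymbol{x}(t^\ast)$, so $f(\boldsymbol{x}(t^\ast))=0$. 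Because $f$ is locally Lipschitz, the constant map $t\mapsto\boldsymbol{x}(t^\ast)$ is the unique solution of the IVP taking that value at $t^\ast$ (uniqueness theorem~\cite[Chapter II, \S 6, Theorem VII]{Walter_1998}); uniqueness then forces the actual trajectory to coincide with this constant, giving $\boldsymbol{x}(0)=\boldsymbol{x}(t^\ast)$. This is impossible, since $\sigma(t^\ast)=a>0=\sigma(0)$. Therefore no such $t^\ast$ exists, $a-\sigma(t)>0$ on all of $[0,\xi)$, and consequently $\frac{d}{dt}x_i(t)>0$ for every $i$ and every $t$.

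The step I expect to be the main obstacle is precisely the exclusion of a finite-time passage to $\sigma=a$. Plain monotonicity bookkeeping does not suffice, because one must rule out the trajectory sliding onto the stationary manifold $\{\sum_j x_j=a\}$ where all right-hand sides vanish simultaneously; the equilibrium-plus-uniqueness argument handles this cleanly and, unlike a Gr\"onwall-type estimate, requires nothing beyond the uniqueness theorem already invoked for Theorem~\ref{thm: unique solution single epitope class}. The only bookkeeping detail to verify along the way is that the first vanishing of $a-\sigma$ is genuinely the first sign change, so that $a-\sigma>0$ holds strictly on $[0,t^\ast)$; this is immediate from continuity and the intermediate value theorem, since $a-\sigma$ cannot become negative without first crossing zero.
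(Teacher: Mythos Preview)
Your proof is correct and follows essentially the same route as the paper: reduce the sign of every $\frac{d}{dt}x_i$ to that of the common factor $a-\sum_j x_j$ (which is precisely the content of Lemma~\ref{lemma: all derivatives must be zero}), then kill a first zero of this factor by noting that the trajectory would be sitting at an equilibrium of the autonomous, locally Lipschitz system and hence, by uniqueness, would have to be constant. Your endgame is even slightly tidier than the paper's, since you contradict directly via $\sigma(0)=0\neq a=\sigma(t^\ast)$ instead of first deducing $\boldsymbol{x}\equiv 0$ and then invoking $\frac{d}{dt}x_i(0)>0$.
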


\begin{proof}
    From the differential equations of the depletion accumulation IVP
    \[\frac{d}{dt} x_i(t) = k_{a,i} \left(a- \sum_{j=1}^N x_j(t)\right)(g_i - x_i(t)) \qquad \forall\ i\in \{1,\ldots, N\}\]
    and the initial value $\boldsymbol{x}(0) = 0$, it follows that
    \[\frac{dx_i}{dt}(0) = k_{a,i} a g_i  > 0 \qquad \forall \ i \in \{1,\ldots, N\}\ . \]
    Assume now, that there is a $t_- \in (0,\xi)$ such that $\frac{dx_i}{dt}(t_-) \leq 0$ for an arbitrary $i\in \{1,\ldots, N\}$. Because of the continuity of $\frac{d}{dt}x_i(t)$, which follows from the continuity of the right-hand side of the differential equations, there must then be a $0 < t_0\leq t_-$ for which $\frac{dx_i}{dt}(t_0) = 0$. By lemma \ref{lemma: all derivatives must be zero}, this implies that all $\frac{dx_i}{dt}(t_0) = 0$, i.e. $\frac{d\boldsymbol{x}}{dt}(t_0) = 0$.

    Since the depletion accumulation IVP is an autonomous system with a locally Lipschitz continuous map, general properties of autonomous systems \cite[Chapter III, \S 10, XI]{Walter_1998} would imply that $\boldsymbol{x}(t) \equiv 0$. Yet, $\frac{dx_i}{dt}(0) > 0$ implies that there is an $\varepsilon > 0$ such that $x_i(\varepsilon) > 0$. This is a contradiction. Hence, there is no $t_0$ for which any $\frac{dx_i}{dt}(t_0) = 0$ and thus $\frac{d}{dt}x_i(t) > 0$ for all $t\in [0,\xi)$ for all $i \in \{1,\ldots, N\}$.
    
\end{proof}

We can use the last theorem to show that the unique solution of the depletion accumulation IVP is bounded. Before we can do that, however, we need a simple result from calculus.

\begin{lemma}
	\label{lemma: larger derivative and larger function}
	Let $f\colon I \longrightarrow \mathbb{R}$ be a differentiable function on a non-empty interval $I$. If 
	\[f'(t)\geq 0 \qquad \forall\  t\in I\qquad \text{and}\qquad \exists\  a\in I\colon f(a)\geq 0\]
	then it holds that $f(t) \geq 0$ for all $t\in I$ with $t\geq a$. If the inequalities are strict
	\[f'(t) > 0 \qquad \text{and}\qquad f(a)> 0\]
	then the strict inequality $f(t)>0$  holds for all $t\in I$ with $t \geq a$.
\end{lemma}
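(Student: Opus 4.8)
The plan is to reduce everything to the Mean Value Theorem, which converts the sign condition on $f'$ directly into the desired ordering of function values. The key observation is that since $I$ is an interval containing $a$, for any $t \in I$ with $t \geq a$ the closed subinterval $[a,t]$ lies entirely in $I$; hence $f$ is continuous on $[a,t]$ and differentiable on $(a,t)$, and the Mean Value Theorem applies on this subinterval.

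First I would treat the weak inequalities. Fix an arbitrary $t \in I$ with $t \geq a$. If $t = a$, the claim $f(t) \geq 0$ is just the hypothesis $f(a) \geq 0$. If $t > a$, the Mean Value Theorem supplies a point $c \in (a,t)$ with $f(t) - f(a) = f'(c)\,(t-a)$. Since $f'(c) \geq 0$ by assumption and $t - a > 0$, the right-hand side is non-negative, so $f(t) \geq f(a) \geq 0$. As $t$ was arbitrary, this establishes $f(t) \geq 0$ for all $t \in I$ with $t \geq a$.

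Next I would handle the strict inequalities by the identical argument, now reading off strictness: for $t > a$ the same representation $f(t) - f(a) = f'(c)\,(t-a)$ with $f'(c) > 0$ and $t - a > 0$ forces $f(t) - f(a) > 0$, whence $f(t) > f(a) > 0$; and at $t = a$ we invoke $f(a) > 0$ directly. Both the endpoint and the interior are thus covered, giving $f(t) > 0$ for every $t \in I$ with $t \geq a$.

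I do not expect any genuine obstacle here, as this is an elementary calculus fact. The only point deserving care is that the conclusion is correctly restricted to $t \geq a$: the Mean Value Theorem relates $f(t)$ to $f(a)$ only through the intervening interval $[a,t]$, so from a lower bound on $f$ at the single point $a$ together with a one-sided sign condition on $f'$ one can control $f$ only to the \emph{right} of $a$, and nothing can be asserted about values of $f$ at points of $I$ lying to the left of $a$.
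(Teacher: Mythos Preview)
Your proof is correct and follows essentially the same approach as the paper, which likewise reduces the claim to a direct application of the Mean Value Theorem on $[a,t]$ and treats the strict case by the identical argument. If anything, your write-up is slightly more careful in noting explicitly why $[a,t]\subseteq I$ and why the conclusion is confined to $t\geq a$.
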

\begin{proof}
    This is a straightforward consequence of the mean value theorem. 
\end{proof}

Now, we can prove the bounds for the unique solution of the depletion accumulation IVP.

\begin{theorem}
	\label{theorem: constant bounds for multi-epitope solution}
	Let $\{(g_i, k_{a,i})\}_{i=1}^N$ be epitope classes, let $a>0$ and let $\xi\geq 0$ be the corresponding maximal existence time. Furthermore, let $\boldsymbol{x}\colon [0,\xi)\rightarrow \mathbb{R}^N$ denote the solution of the multi-epitope accumulation IVP. Then it holds for all $t\in [0,\xi)$ and for all $i\in \{1,\ldots, N\}$ that
	\[0\leq x_i(t)< \min\{a,g_i\}\qquad \text{and}\qquad \sum_{j=1}^N x_j(t)< a\ .\]
\end{theorem}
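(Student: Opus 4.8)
The plan is to obtain all three bounds directly from the strict positivity of the derivatives established in Theorem \ref{theorem: positive derivatives}, together with the factored form of the right-hand side of the IVP; no new differential-equation estimates are needed. The guiding idea is that a trajectory whose every component is strictly increasing can never touch the boundaries $x_i = g_i$ or $\sum_j x_j = a$, because at such a boundary the corresponding factor in the right-hand side would vanish and force a zero derivative, contradicting positivity.

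First I would settle the lower bound. Since $x_i(0) = 0$ and $\frac{d}{dt} x_i(t) > 0$ on $[0,\xi)$ by Theorem \ref{theorem: positive derivatives}, Lemma \ref{lemma: larger derivative and larger function} (applied with base point $t = 0$) yields $x_i(t) \ge 0$ for all $t$; in fact $x_i$ is strictly increasing. For the two upper bounds I would argue uniformly via the sign of the factors. Theorem \ref{theorem: positive derivatives} gives $0 < \frac{d}{dt} x_i(t) = k_{a,i}\left(a - \sum_{j=1}^N x_j(t)\right)\left(g_i - x_i(t)\right)$, and since $k_{a,i} > 0$ the product $\left(a - \sum_j x_j(t)\right)\left(g_i - x_i(t)\right)$ is strictly positive for every $t \in [0,\xi)$. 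In particular neither factor ever vanishes. Each factor is continuous in $t$, so on the connected interval $[0,\xi)$ it must keep the sign it has at $t = 0$, where $a - \sum_j x_j(0) = a > 0$ and $g_i - x_i(0) = g_i > 0$. Hence $\sum_j x_j(t) < a$ and $x_i(t) < g_i$ throughout. Combining $x_i(t) \le \sum_j x_j(t) < a$ (using the already-proven $x_j \ge 0$) with $x_i(t) < g_i$ gives $x_i(t) < \min\{a, g_i\}$.

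I expect the only delicate point to be the step that turns ``the factor never vanishes'' into ``the factor is positive everywhere'', which is exactly the intermediate value theorem on the connected domain $[0,\xi)$. An equivalent and perhaps more self-contained phrasing is a first-exit-time argument: if $\sum_j x_j$ first reached $a$ at some $t_* \in (0,\xi)$, then every $\frac{d}{dt} x_i(t_*)$ would equal zero, contradicting Theorem \ref{theorem: positive derivatives}. As a cross-check, the bound $x_i(t) < g_i$ can also be read off instantly from the integral equation of Theorem \ref{theorem: integral equations}: writing $A(t) = \int_0^t \left(a - \sum_j x_j(s)\right) ds$, which is finite since the continuous $x_j$ are bounded on $[0,t]$, one has $x_i(t) = g_i\bigl(1 - e^{-k_{a,i} A(t)}\bigr) < g_i$ because the exponential is strictly positive; this alternative route needs neither the sign of $A(t)$ nor the positivity theorem.
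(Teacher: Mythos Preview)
Your proof is correct and follows the same overall strategy as the paper: derive the lower bound from Theorem \ref{theorem: positive derivatives} via Lemma \ref{lemma: larger derivative and larger function}, then read off the two upper bounds from the strict positivity of the derivative. The one technical difference is in how the positivity of the individual factors is extracted. The paper first invokes the alternative expression of the derivative obtained in the proof of Theorem \ref{theorem: integral equations}, namely $\frac{d}{dt}x_i(t) = k_{a,i}\bigl(a-\sum_j x_j(t)\bigr)\,g_i\,e^{-k_{a,i}A(t)}$, in which the second factor $g_i e^{-k_{a,i}A(t)}$ is manifestly positive; this isolates $a-\sum_j x_j(t)>0$ without any sign-continuity argument, and then $g_i-x_i(t)>0$ follows from the original factored form. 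You instead stay with the original factored right-hand side and use the intermediate value theorem: the product never vanishes, so each continuous factor keeps its sign from $t=0$. Your route is a touch more self-contained (no appeal to the integral-equation representation), while the paper's route avoids the IVT step by peeling off a visibly positive factor; either way the argument is immediate once Theorem \ref{theorem: positive derivatives} is available.
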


\begin{proof}
    Since $\boldsymbol{x}$ solves the multi-epitope accumulation IVP, it holds that $\boldsymbol{x}(0) = 0$. Furthermore, according to theorem \ref{theorem: positive derivatives} it holds that $\frac{d}{dt}x_i(t)>0$ for all $t\in [0,\xi)$ and for all $i\in\{1,\ldots,N\}$. Thus, because of Lemma \ref{lemma: larger derivative and larger function} it follows that
    \[0\leq x_i(t)\qquad \forall\ t\in [0,\xi)\ ,\quad \forall\ i \in \{1,\ldots,N\}\ . \]
    
    By Theorem \ref{theorem: integral equations} and the corresponding proof it holds that
    \[\frac{d}{dt} x_i(t) = k_{a,i}\left(a-\sum_{j=1}^N x_j(t)\right) g_i e^{-k_{a,i}\int_0^t \left(a - \sum_{j=1}^N x_j(s)\right)  ds}\qquad \forall\ i \in \{1,\ldots,N\}\ .\]
    Since $\frac{d}{dt}x_i(t) >0$, according to theorem \ref{theorem: positive derivatives}, it follows that
    \[0< a - \sum_{j=1}^N x_j(t) \quad \Rightarrow \quad  \sum_{j=1}^N x_j(t) < a \ .\]
    Note that we have already proven that the sum consists only of non-negative values, i.e. $0\leq x_i(t)$. So, it also follows that
    \[x_i(t)< a \quad \forall \ i\in \{1,\ldots,N\}\ .\]
    
    It remains to show that $x_i(t) < g_i$. Since $\boldsymbol{x}$ solves the multi-epitope accumulation IVP, it holds that
    \[\frac{d}{dt}x_i(t) = k_{a,i}\left(a - \sum_{j=1}^N x_j(t)\right)(g_i - x_i(t))\qquad \forall\ i\in \{1,\ldots,N\}\ .\]
    Since $\frac{d}{dt}x_i(t) > 0$ and since we have already shown that $a - \sum_{j=1}^N x_j(t)> 0$, it follows that $g_i-x_i(t)> 0$, which is equivalent to $x_i(t)<g_i$, for all $i\in \{1,\ldots,N\}$. 
\end{proof}

With this theorem, we can derive maybe the most important result of this section as a simple corollary.

\begin{corollary}[\textbf{Existence of global solution}]
	\label{cor: multi-epitope global existence and uniqueness}
    For all epitope class configurations $\{(g_i, k_{a,i})\}_{i=1}^N$ and all $a>0$, there exists a unique solution $\boldsymbol{x}\colon [0,\infty) \rightarrow \mathbb{R}^N$ for the depletion  accumulation IVP. That is, the maximal existence time is $\xi = \infty$.
\end{corollary}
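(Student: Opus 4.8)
The plan is to argue by contradiction, using the a priori bounds already secured in Theorem \ref{theorem: constant bounds for multi-epitope solution} together with the standard continuation (escape) theorem for autonomous ODEs. Local existence and a locally unique solution around $t=0$ are guaranteed because the right-hand side $f$ is locally Lipschitz on $\mathbb{R}^N$, and by definition $\boldsymbol{x}\colon[0,\xi)\to\mathbb{R}^N$ is the maximal such solution. Since local uniqueness already forces any two solutions agreeing at $t=0$ to coincide on the overlap of their intervals of definition, uniqueness on whatever interval we end up with is automatic; the entire content of the corollary therefore reduces to ruling out a finite maximal existence time $\xi<\infty$.

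First I would recall the continuation principle in the form available from \cite{Walter_1998}: for an autonomous system $\frac{d}{dt}\boldsymbol{x}=f(\boldsymbol{x})$ with $f$ locally Lipschitz on $\mathbb{R}^N$, if the maximal forward solution exists only on $[0,\xi)$ with $\xi<\infty$, then the trajectory must eventually leave every compact subset of $\mathbb{R}^N$; equivalently, it cannot remain confined to a fixed compact set up to time $\xi$. Next I would supply the boundedness that defeats this escape alternative. By Theorem \ref{theorem: constant bounds for multi-epitope solution}, for every $t\in[0,\xi)$ and every $i\in\{1,\ldots,N\}$ we have $0\leq x_i(t)<\min\{a,g_i\}$, so the whole trajectory stays inside the fixed compact box
\[ K \coloneqq \prod_{i=1}^N \left[\,0,\ \min\{a,g_i\}\,\right]\subset\mathbb{R}^N\ , \]
and this confinement is uniform in $t$.

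Finally I would combine the two observations: were $\xi$ finite, the continuation principle would force the trajectory out of the compact set $K$ before time $\xi$, contradicting the fact that Theorem \ref{theorem: constant bounds for multi-epitope solution} keeps it inside $K$ for all $t\in[0,\xi)$. Hence $\xi=\infty$, and the locally unique solution extends to a unique solution on $[0,\infty)$. The point worth stressing is that the genuine work is already done upstream: the only substantive ingredient is the uniform a priori bound of Theorem \ref{theorem: constant bounds for multi-epitope solution}, so the main obstacle is not in this argument at all. Once that bound is in hand, the corollary is a routine application of the ODE continuation theorem and requires no further computation.
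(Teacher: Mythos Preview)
Your argument is correct and mirrors the paper's own proof: both invoke Theorem~\ref{theorem: constant bounds for multi-epitope solution} to trap the trajectory in a compact box and then appeal to the continuation principle for locally Lipschitz autonomous systems from \cite{Walter_1998} to conclude $\xi=\infty$. The only difference is presentation---you spell out the contradiction explicitly whereas the paper compresses it to a single sentence---but the mathematical content is identical.
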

\begin{proof}
    Because of Theorem \ref{theorem: constant bounds for multi-epitope solution}, the half-trajectory $\boldsymbol{x}([0,\xi))$ is contained in a compact set of $\mathbb{R}^N$. The rest follows from \cite[Chapter III, \S 10, XI]{Walter_1998} since the depletion accumulation IVP is a locally Lipschitz continuous autonomous system.
\end{proof}

A second useful property can easily be derived with Theorem \ref{theorem: positive derivatives} and Theorem \ref{theorem: constant bounds for multi-epitope solution}.

\begin{corollary}
	\label{cor: concave accumulation}
    Let $\boldsymbol{x}\colon [0,\infty)\rightarrow \mathbb{R}^N$ be the unique solution of a depletion accumulation IVP. Then the components $x_i\colon [0,\infty)\rightarrow \mathbb{R}$ are strictly concave.
\end{corollary}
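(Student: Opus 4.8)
The plan is to establish strict concavity of each component $x_i$ by showing that its second derivative is strictly negative on all of $[0,\infty)$, which is equivalent to $x_i'$ being strictly decreasing and hence to strict concavity. First I would record that the solution is twice (indeed infinitely) continuously differentiable: the vector field $f$ defining the depletion accumulation IVP is a polynomial in the components $x_j$, hence $C^\infty$, so by the standard smoothness theory for autonomous systems the solution $\boldsymbol{x}\colon[0,\infty)\to\mathbb{R}^N$ is $C^\infty$. In particular, the right-hand side of the IVP may legitimately be differentiated once more in $t$.

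Abbreviating the depletion antibody concentration by $\alpha(t)=a-\sum_{j=1}^N x_j(t)$ as in Definition \ref{def: cumulative antibody concentration}, the IVP reads $\frac{d}{dt}x_i(t)=k_{a,i}\,\alpha(t)\,(g_i-x_i(t))$. Differentiating by the product rule gives
\[
\frac{d^2}{dt^2}x_i(t)=k_{a,i}\Big[\alpha'(t)\,(g_i-x_i(t))-\alpha(t)\,x_i'(t)\Big],\qquad \alpha'(t)=-\sum_{j=1}^N x_j'(t).
\]
The next step is to read off the signs of the four factors from results already proven. By Theorem \ref{theorem: positive derivatives}, $x_j'(t)>0$ for every $j$ and all $t\in[0,\infty)$, so $\alpha'(t)<0$ and in particular $x_i'(t)>0$. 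By Theorem \ref{theorem: constant bounds for multi-epitope solution}, $\sum_{j}x_j(t)<a$ forces $\alpha(t)>0$, and $x_i(t)<g_i$ forces $g_i-x_i(t)>0$. Substituting these signs, the summand $\alpha'(t)(g_i-x_i(t))$ is negative and the summand $-\alpha(t)x_i'(t)$ is negative; since $k_{a,i}>0$, it follows that $\frac{d^2}{dt^2}x_i(t)<0$ throughout $[0,\infty)$, which yields strict concavity of $x_i$.

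I expect no genuine obstacle here: every sign required is supplied directly by the two preceding theorems, so the argument is essentially a one-line computation once the signs are in hand. The only point meriting a word of justification is the twice-differentiability needed to form $\frac{d^2}{dt^2}x_i$, and this is immediate from the smoothness of the polynomial vector field; alternatively, one can observe that $x_i'$ is a product of $C^1$ functions and is therefore itself $C^1$, so that $x_i$ is $C^2$ without invoking full smoothness.
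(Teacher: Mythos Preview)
Your proof is correct and follows essentially the same approach as the paper: both show that $x_i'$ is strictly decreasing by combining the sign information from Theorem~\ref{theorem: positive derivatives} and Theorem~\ref{theorem: constant bounds for multi-epitope solution}. The paper argues slightly more informally---observing that $x_i'(t)=k_{a,i}\,\alpha(t)(g_i-x_i(t))$ is a product of positive factors each of which strictly decreases as the $x_j(t)$ increase, and then invoking Theorem~\ref{theorem: positive derivatives} to get that the $x_j$ are increasing in $t$---whereas you carry out the explicit differentiation to obtain $x_i''<0$ and also address twice-differentiability, which the paper leaves implicit.
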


\begin{proof}
As a solution of the depletion accumulation IVP, the components of $\boldsymbol{x}$ satisfy.
\[\frac{d}{dt}x_i(t) = k_{a,i}\left(a -\sum_{j=1}^N x_j(t)\right)(g_i-x_i(t))\ .\]
From Theorem \ref{theorem: constant bounds for multi-epitope solution} it follows that $a-\sum_{j=1}^N x_j(t) \geq 0$ and $g_i -x_i(t) \geq 0$, such that $\frac{d}{dt}x_i(t)$ decreases as the $x_j(t)$ increase. Finally, since $\frac{d}{dt}x_i(t)>0$ by Theorem \ref{theorem: positive derivatives}, all $x_i(t)$ are strictly monotonically increasing, which means that all $\frac{d}{dt}x_i(t)$ are strictly monotonically decreasing.
\end{proof}

For a single epitope class, we had closed-form expressions, allowing us to simply calculate that the natural dose-response property is satisfied. That is, increasing the initial antibody concentration must not decrease the amount of bound antibodies. Since we do not have a close-form expression of the solution for multiple epitope classes, we need to derive the dose-response property from the properties that we have derived so far in this section. For this, we need the following lemma on the antibody concentration at time $t$.

\begin{lemma}
	\label{lemma: antibody concentration inequalities}
    Let $\boldsymbol{x}(t\ ; a) = \boldsymbol{x}(t\ ; a, \{(g_i,k_{a,i})\}_{i=1}^N)$ denote the solution of a depletion accumulation IVP for the initial antibody concentration $a$ and let $\alpha(t\ ; a)$ denote the depletion antibody concentration at time $t$ for the initial antibody concentration $a$. If $b > a$ it holds that $\alpha(t\ ; b)> \alpha(t\ ; a)$ and $A(t\ ; b) > A(t\ ; a)$ for all $t\in [0,\infty)$.
\end{lemma}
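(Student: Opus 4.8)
The plan is to collapse the $N$-dimensional system onto a single scalar quantity and then run a first-crossing comparison argument. By the integral equations of Theorem \ref{theorem: integral equations} each component satisfies $x_i(t\ ; a) = g_i\bigl(1 - e^{-k_{a,i}A(t\ ; a)}\bigr)$, so summing gives $\sum_{j=1}^N x_j(t\ ; a) = G(A(t\ ; a))$ with $G(u) := \sum_{i=1}^N g_i\bigl(1 - e^{-k_{a,i}u}\bigr)$. Hence the depletion antibody concentration is $\alpha(t\ ; a) = a - G(A(t\ ; a))$, and since $\frac{d}{dt}A(t\ ; a) = \alpha(t\ ; a)$ by Definition \ref{def: cumulative antibody concentration}, the cumulative concentration solves the scalar autonomous ODE $\frac{d}{dt}A(t\ ; a) = a - G(A(t\ ; a))$ with $A(0\ ; a) = 0$. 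The function $G$ is continuously differentiable and strictly increasing, and, crucially, its derivative $G'(u) = \sum_i g_i k_{a,i} e^{-k_{a,i}u}$ is strictly decreasing. Differentiating $\alpha$ once more gives the key identity $\frac{d}{dt}\alpha(t\ ; a) = -G'(A(t\ ; a))\,\alpha(t\ ; a)$, and I recall from Theorem \ref{theorem: constant bounds for multi-epitope solution} that $\alpha(t\ ; a) = a - \sum_j x_j(t\ ; a) > 0$ throughout.

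Next I would prove the inequality for $\alpha$ first and deduce the one for $A$ by integration. Set $\delta(t) := \alpha(t\ ; b) - \alpha(t\ ; a)$, which is $C^1$ with $\delta(0) = b - a > 0$. Suppose, for contradiction, that $\delta$ is not strictly positive on all of $[0,\infty)$; by continuity there is a first time $t^\ast > 0$ with $\delta(t^\ast) = 0$ and $\delta > 0$ on $[0,t^\ast)$. Integrating $\delta > 0$ over $[0,t^\ast)$ yields $A(t^\ast\ ; b) > A(t^\ast\ ; a)$. Writing $\alpha^\ast := \alpha(t^\ast\ ; a) = \alpha(t^\ast\ ; b) > 0$ and using the identity above,
\[\delta'(t^\ast) = \alpha^\ast\bigl(G'(A(t^\ast\ ; a)) - G'(A(t^\ast\ ; b))\bigr) > 0,\]
since $G'$ is strictly decreasing and $A(t^\ast\ ; b) > A(t^\ast\ ; a)$. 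But $\delta(t) > 0 = \delta(t^\ast)$ for $t < t^\ast$ forces the left derivative $\delta'(t^\ast) \le 0$, a contradiction. Hence $\alpha(t\ ; b) > \alpha(t\ ; a)$ for all $t \ge 0$.

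Finally, integrating this strict inequality over $[0,t]$ gives $A(t\ ; b) > A(t\ ; a)$ for every $t > 0$ (at $t=0$ both cumulative concentrations vanish). I expect the main obstacle to be exactly the point the crossing argument is built to avoid: the naive decomposition $\alpha(t\ ; b) - \alpha(t\ ; a) = (b-a) - \bigl(G(A(t\ ; b)) - G(A(t\ ; a))\bigr)$ is a difference of two positive quantities and is not manifestly positive, so no direct estimate closes the gap. The decisive input is the strict concavity of $G$, i.e. the strict monotonicity of $G'$ (a concavity in the spirit of Corollary \ref{cor: concave accumulation}): it guarantees that at any hypothetical crossing the more depleted, larger-$a$ trajectory would be forced to rebound, contradicting the crossing. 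One must also justify that $t^\ast$ is well defined and that $\alpha^\ast > 0$, both of which rest on the positivity bounds of Theorem \ref{theorem: constant bounds for multi-epitope solution}.
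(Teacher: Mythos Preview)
Your proof is correct and takes a genuinely different route from the paper's. The paper also argues by contradiction from a first crossing time $t_0$, but it never collapses the system to a scalar equation; instead it works directly with the components $x_i$, shows $x_i(t;b)>x_i(t;a)$ on $[0,t_0]$, introduces three auxiliary constants $\Gamma,\Psi,\Phi>0$, and uses these to bound $\frac{d}{dt}x_i(t;a)-\frac{d}{dt}x_i(t;b)>0$ on a whole interval $(t_0-\varepsilon,t_0]$ before invoking Lemma~\ref{lemma: larger derivative and larger function} for the contradiction. Your reduction $A'(t;a)=a-G(A(t;a))$ with $G(u)=\sum_i g_i(1-e^{-k_{a,i}u})$ makes the mechanism completely transparent: the whole comparison hinges on the single scalar fact that $G'$ is strictly decreasing, and the contradiction is obtained by a one-line evaluation of $\delta'$ at the crossing point rather than an interval estimate. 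This is cleaner and shorter, and it isolates exactly why the lemma holds (strict concavity of $G$), whereas the paper's argument, while entirely valid, somewhat obscures this structure behind the constants $\Gamma,\Psi,\Phi$. You are also slightly more careful than the paper at $t=0$, where $A(0;a)=A(0;b)=0$ and the strict inequality for $A$ can only hold for $t>0$.
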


\begin{proof}
    First, we recall form Definition \ref{def: cumulative antibody concentration} that $A(t\ ; a) = \int_0^t \alpha(s\ ; a)\ ds$. Thus, if we can show that $\alpha(t\ ; b)\geq \alpha(t\ ; a)$ for all $t\in [0,\infty)$, the inequality $A(t\ ; b)\geq \alpha(t_0\ ; a)$ follows from the properties of integration. Hence, we will focus on $\alpha(t\ ; b) \geq \alpha(t\ ; a)$.
    
    Since $\boldsymbol{x}(0\ ; a) = 0$ and $b > a$, it holds that
    \[\alpha(0\ ; a) = a < b = \alpha(0\ ; b) \qquad\Rightarrow \qquad \alpha(0\ ; b)-\alpha(0\ ; a)> 0\ .\]
    Since the $x_i(t)$ are differentiable, so are $\alpha(t\ ; a)$, $\alpha(t\ ; b)$ and thus $\alpha(t\ ; b)-\alpha(t\ ; a)$. Our goal is now to show that $\alpha(t\ ; b)-\alpha(t\ ; a) > 0$ for all $t\in [0,\infty)$. 
    
    Let us assume that there is a $t_{-}\in [0,\infty)$ such that $\alpha(t_-\ ; b)-\alpha(t_-\ ; a)\leq 0$. Because of the continuity of $\alpha(t\ ; b)-\alpha(t\ ; a)$, there must be a smallest $t_0> 0$ such that $\alpha(t_0\ ; b)-\alpha(t_0\ ; a) = 0$. If we can show that no such $t_0$ exists, then we will have proven that $\alpha(t\ ; b)-\alpha(t\ ; a) > 0$ for all $t\in [0,\infty)$.

    Since $t_0 > 0$ and since by assumption $t_0$ is the smallest value such that $\alpha(t_0\ ; b)-\alpha(t_0\ ; a) = 0$, it holds that $\alpha(t\ ; b)-\alpha(t\ ; a)> 0$ for all $t\in [0,t_0)$. Thus,
    \[A(t\ ; b) = \int_0^t \alpha(s\ ; b)\  ds > \int_0^t \alpha(s\ ; a)\ ds = A(t\ ; a)\quad \forall t\in [0,t_0]\ .\]
    This also implies that $1-e^{-k_{a,i}A(t\ ; b)}> 1-e^{-k_{a,i}A(t\ ; a)}$. Because of Theorem \ref{theorem: integral equations} and Definition \ref{def: cumulative antibody concentration}, it follows that
    \begin{align*}
    	x_i(t\ ; b) &= g_i(1-e^{-k_{a,i}\int_0^t a -\sum_{j=1}^N x_j(s\ ;b)\ ds}) = g_i(1-e^{-k_{a,i}A(t\ ; b)})\\[1em]
    	&> g_i(1-e^{-k_{a,i}A(t\ ; a)}) = x_i(t\ ;a)
    \end{align*}
    for all $t\in [0,t_0]$ and for all $i\in \{1,\ldots, N\}$. This implies that
    \[\Gamma \coloneqq \min_{t\in [0,t_0], i \in \{1,\ldots,N\}}\  x_i(t\ ;b) - x_i(t\ ;a) > 0 \ .\]
    Because of the bounds from Theorem \ref{theorem: constant bounds for multi-epitope solution}, it holds that $0\leq x_i(t\ ;b)< g_i$ for all $t\in [0,\infty)$ and for all $i\in\{1,\ldots,N\}$. Furthermore, the theorem states that $a - \sum_{j=1}^N x_j(t\ ;a) > 0$, such that
    \[\Psi \coloneqq \max_{t\in [0,t_0], i \in \{1,\ldots,N\}}\  \frac{1}{g_i - x_i(t\ ;b)} > 0\]
    \[\text{and}\qquad \Phi \coloneqq \min_{t\in [0,t_0]} \alpha(t\ ; a) = \min_{t\in [0,t_0]}  a -\sum_{j=1}^N x_j(t\ ;a) > 0\ . \]
    In summary, we have
    \[\Gamma \Psi \Phi > 0 \ .\]
    Since $\alpha(t\ ; b)-\alpha(t\ ; a)$ is continuous and, by assumption, $t_0$ is the smallest value with $\alpha(t_0\ ; b)-\alpha(t_0\ ; a) = 0$, there is an $\varepsilon > 0$ such that
    \[\alpha(t\ ; b)- \alpha(t\ ; a) < \Gamma \Psi \Phi\qquad \forall\ t\in (t_0-\varepsilon,t_0]\ . \]
    \[\Leftrightarrow \qquad \alpha(t\ ; b)  < \Gamma \Psi \Phi + \alpha(t\ ; a) \qquad \forall\ t\in (t_0-\varepsilon,t_0]\ . \]
    Now, since $\boldsymbol{x}$ is the solution of the multi-epitope accumulation IVP it follows for all $i\in \{1,\ldots,N\}$ and all $t\in (t_0-\varepsilon,t_0]$ that
    \begin{align*}
    	\frac{1}{k_{a,i}}&\left( \frac{d}{dt}x_i(t\ ;a) - \frac{d}{dt}x_i(t\ ;b) \right) \\[1em] 
    	&=\left(a - \sum_{j= 1}^N x_j(t\ ; a)\right)(g_i-x_i(t\ ;a)) - \left(b - \sum_{j= 1}^N x_j(t\ ; b)\right)(g_i-x_i(t\ ;b))\\[1em]
    	&= \alpha(t\ ; a)(g_i -x_i(t\ ;a)) - \alpha(t\ ; b)(g_i -x_i(t\ ; b))\\[1em]
    	& > \alpha(t\ ; a)(g_i -x_i(t\ ;a)) - (\alpha(t\ ; a) + \Gamma\Psi\Phi)(g_i -x_i(t\ ; b))\\[1em]
    	&\quad = \alpha(t\ ; a)(x_i(t\ ; b) - x_i(t\ ; a)) - \Phi \Gamma \Psi (g_i-x_i(t\ ; b))\\[1em]
    	&\quad \geq \alpha(t\ ; a)(x_i(t\ ; b) - x_i(t\ ; a)) - \alpha(t\ ; a)(x_i(t\ ; b) - x_i(t\ ; a)) \cdot 1 = 0\ .
    \end{align*}
    Note that we have used the definitions of $\Gamma$, $\Psi$ and $\Phi$ in the last step. With the last calculation, we have obtained the statement
    \[\frac{d}{dt}x_i(t\ ;a) - \frac{d}{dt}x_i(t\ ;b) > 0\qquad \forall i \in \{1,\ldots,N\}\ ,\quad \forall t \in (t_0-\varepsilon,t_0]\ .\]
    For the time-derivatives of $\alpha(t\ ; b)$ and $\alpha(t\ ; a)$ this means that
    \[\frac{d}{dt}\alpha(t\ ; b) = - \sum_{j=1}^N \frac{d}{dt}x_j(t\ ;b) > - \sum_{j=1}^N\frac{d}{dt}x(t\ ;a) = \frac{d}{dt}\alpha(t\ ; a)\qquad \forall \  t\in (t_0-\varepsilon,t_0]\]
    \[\Leftrightarrow \qquad \frac{d}{dt}\alpha(t\ ; b) - \frac{d}{dt}\alpha(t\ ; a) > 0 \qquad \forall\ t\in (t_0-\varepsilon,t_0]\ .\]
    However, by definition of $t_0$, it also holds that $\alpha(t_0 - \delta\ ; b)-\alpha(t_0-\delta\ ; a) > 0$ for $0<\delta<\varepsilon.$ Since $\frac{d}{dt}\alpha(t\ ; b) - \frac{d}{dt}\alpha(t\ ; a) > 0$ for $t\in (t_0-\varepsilon,t_0]$ and because of Lemma \ref{lemma: larger derivative and larger function}, it follows that $\alpha(t_0\ ; b)- \alpha(t_0\ ; a) > 0$. This contradicts the definition of $t_0$ to be the smallest value such that $\alpha(t_0\ ; b) - \alpha(t_0\ ; a) = 0$. Hence, there is no finite $t_0$ for which $\alpha(t_0\ ; b)-\alpha(t_0\ ; a) = 0$, which concludes the proof. 
\end{proof}

Having derived the properties of the antibody concentration, we can prove the natural dose-response property.

\begin{theorem}
\label{theorem: larger antibody concentrations}
Let $\boldsymbol{x}(t\ ; a) = \boldsymbol{x}(t\ ; a, \{(g_i,k_{a,i})\}_{i=1}^N)$ denote the solution of the depletion accumulation IVP for the initial antibody concentration $a$ and the epitope classes $\{(g_i,k_{a,i})\}_{i=1}^N$. If $a < b$, then it holds that
\[x_i(t\ ; a) <  x_i(t\ ; b)\qquad \forall\ i \in\{1,\ldots,N\}\ , \quad \forall t \in [0,\infty).\]
Furthermore, if $k_{a,i} < k_{a,j}$ it holds that
\[\frac{x_i(t\ ; a)}{g_i} < \frac{x_j(t\ ; a)}{g_j}\qquad \forall \ t\in [0,\infty)\ .\]
\end{theorem}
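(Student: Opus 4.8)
The plan is to route everything through the integral-equation representation of Theorem~\ref{theorem: integral equations}, which writes each component as
\[ x_i(t\ ; a) = g_i\bigl(1 - e^{-k_{a,i} A(t\ ; a)}\bigr), \]
where $A(t\ ; a) = \int_0^t \alpha(s\ ; a)\,ds$ is the cumulative depletion antibody concentration of Definition~\ref{def: cumulative antibody concentration}. The decisive structural feature is that $A(t\ ; a)$ is a single scalar shared by \emph{all} epitope classes; the class index enters only through the rate constant $k_{a,i}$ in the exponent and through the prefactor $g_i$. Since $z \mapsto g_i(1 - e^{-k_{a,i} z})$ is strictly increasing, and $r \mapsto 1 - e^{-rA}$ is strictly increasing whenever $A > 0$, both claims collapse to comparisons of a single monotone scalar quantity. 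This reduces the theorem to two facts established earlier.

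For the first claim I would invoke Lemma~\ref{lemma: antibody concentration inequalities} directly: for $b > a$ it gives $A(t\ ; b) > A(t\ ; a)$ for every $t > 0$. Feeding this into the monotone map $z \mapsto g_i(1 - e^{-k_{a,i} z})$ yields
\[ x_i(t\ ; b) = g_i\bigl(1 - e^{-k_{a,i} A(t\ ; b)}\bigr) > g_i\bigl(1 - e^{-k_{a,i} A(t\ ; a)}\bigr) = x_i(t\ ; a) \]
for all $i$ and all $t > 0$. At $t = 0$ both solutions start at $0$, so the comparison is an equality there and becomes strict immediately afterward.

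For the second claim the key observation is that, at a fixed initial concentration $a$, the argument $A(t\ ; a)$ of the exponential is identical for classes $i$ and $j$; dividing by $g_i$ and $g_j$ removes the prefactors and leaves
\[ \frac{x_i(t\ ; a)}{g_i} = 1 - e^{-k_{a,i} A(t\ ; a)}, \qquad \frac{x_j(t\ ; a)}{g_j} = 1 - e^{-k_{a,j} A(t\ ; a)}. \]
By Theorem~\ref{theorem: constant bounds for multi-epitope solution} the depletion antibody concentration satisfies $\alpha(s\ ; a) = a - \sum_{\ell} x_\ell(s\ ; a) > 0$, so $A(t\ ; a) > 0$ strictly for every $t > 0$. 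Then $k_{a,i} < k_{a,j}$ forces $e^{-k_{a,i}A(t\ ; a)} > e^{-k_{a,j}A(t\ ; a)}$, hence $\frac{x_i(t\ ;a)}{g_i} < \frac{x_j(t\ ;a)}{g_j}$ for all $t > 0$, with equality only at the degenerate point $t = 0$.

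Because the substantive dynamical estimate---the continuity-plus-contradiction argument establishing $A(t\ ; b) > A(t\ ; a)$---has already been carried out inside Lemma~\ref{lemma: antibody concentration inequalities}, I do not expect a genuine obstacle: once the integral form is in hand, both parts are one-line consequences of the monotonicity of the exponential. The only point requiring care is the degenerate boundary $t = 0$, where all the relevant quantities coincide, so the stated strict inequalities are really assertions on $(0,\infty)$.
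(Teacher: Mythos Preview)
Your proposal is correct and follows essentially the same route as the paper: both use the integral representation $x_i(t\ ;a)=g_i(1-e^{-k_{a,i}A(t\ ;a)})$ from Theorem~\ref{theorem: integral equations}, invoke Lemma~\ref{lemma: antibody concentration inequalities} for the first claim, and compare rate constants inside the shared exponential for the second. Your explicit justification that $A(t\ ;a)>0$ for $t>0$ via Theorem~\ref{theorem: constant bounds for multi-epitope solution}, and your remark that the strict inequalities degenerate to equalities at $t=0$, are welcome points of care that the paper leaves implicit.
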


\begin{proof}
    Because of Theorem \ref{theorem: integral equations} and Definition \ref{def: cumulative antibody concentration} it holds that
    \[x_i(t\ ;a) = g_i(1-e^{-k_{a,i}A(t\ ; a)})\ .\]
    According to Lemma \ref{lemma: antibody concentration inequalities}, we have $A(t\ ; a) < A(t\ ; b)$ for $a < b$. Thus,
    \[x_i(t\ ;a) = g_i(1-e^{-k_{a,i}A(t\ ; a)}) < g_i(1-e^{-k_{a,i}A(t\ ; b)}) = x_i(t\ ;b)\ .\]
    For the second part of the theorem, we observe that $k_{a,i} <  k_{a,j}$ implies that
    \[\frac{x_i(t\ ;a)}{g_i} = 1-e^{-k_{a,i}A(t\ ; a)} < 1-e^{-k_{a,j}A(t\ ; a)} = \frac{x_j(t\ ;a)}{g_j}\ .\]
\end{proof}

\subsection{Dose-response behavior and approximations}
As before, with the solution of the depletion-free accumulation IVP \eqref{eq: depletion-free multi-epitope solution} we have also obtained the depletion-free accumulation model and the corresponding dose-response curve
\begin{equation}
\label{eq: depletion-free accumulation model}
\boldsymbol{x}_{\textnormal{DF}}(\tau\ ; a) = \begin{pmatrix} g_1 (1-e^{-k_{a,1} a \tau})\\ \vdots \\ g_1 (1-e^{-k_{a,1} a \tau}) \end{pmatrix}\qquad \Rightarrow \qquad X_{\textnormal{DF}}(a) = \sum_{i=1}^N g_i (1-e^{-k_{a,i}}a \tau)\ . 
\end{equation}
However, we have no closed-form expression for the solution of the depletion accumulation IVP and thus no expressions for the depletion accumulation model and the corresponding dose-response curve. At this point, we only know about their existence for all $\tau\geq 0$, because of Corollary \ref{cor: multi-epitope global existence and uniqueness}. But, we can use the derived properties to characterize the depletion accumulation model. The key idea will be to consider the cumulative depletion antibody concentration $A(\tau\ ; a)$, which determines the depletion accumulation model according to Theorem \ref{theorem: integral equations} and Definition \ref{def: cumulative antibody concentration}:
\[x_i(\tau\ ;a) = g_i(1-e^{-k_{a,i}A(t\ ; a)})\ .\]

We can derive simple bounds for the multi-epitope accumulation model $X(a)$ if we find an $a_*(a)$ with $0\leq a_*(a) \leq  \alpha(t\ ; a)$ for all $t\in [0,\tau]$, as then
\[a_*(a) \tau \leq A(\tau\ ; a)\leq a \tau\ .\]

\begin{lemma}
\label{lemma: weak bounds multi-epitope accumulation model}
Let $a_*(a) \coloneqq \max\{0\ ; a - X_{\textnormal{DF}}(a)\}$ then it holds that
\[X_{\textnormal{DF}}(a_*(a))\leq X(a)\leq X_{\textnormal{DF}}(a)\qquad \forall \  a > 0\ .\]
\end{lemma}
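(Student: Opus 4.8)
The plan is to reduce both inequalities to a single sandwich estimate on the cumulative depletion antibody concentration $A(\tau\ ; a)$. By Theorem \ref{theorem: integral equations} together with Definition \ref{def: cumulative antibody concentration}, each component of the depletion solution has the form $x_i(\tau\ ; a) = g_i\big(1 - e^{-k_{a,i} A(\tau\ ; a)}\big)$, so that $X(a) = \sum_{i=1}^N g_i\big(1-e^{-k_{a,i}A(\tau\ ; a)}\big)$, whereas $X_{\textnormal{DF}}(b) = \sum_{i=1}^N g_i\big(1 - e^{-k_{a,i} b \tau}\big)$ for any argument $b$. Since each map $u \mapsto g_i(1 - e^{-k_{a,i} u})$ is strictly increasing, the chain $a_*(a)\tau \le A(\tau\ ; a) \le a\tau$ immediately yields $X_{\textnormal{DF}}(a_*(a)) \le X(a) \le X_{\textnormal{DF}}(a)$ after summing over $i$. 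Hence it suffices to bound the integrand $\alpha(s\ ; a)$ pointwise on $[0,\tau]$, since $A(\tau\ ; a) = \int_0^\tau \alpha(s\ ; a)\ ds$.

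For the upper bound I would use that $\alpha(t\ ; a) = a - \sum_{j=1}^N x_j(t\ ; a) \le a$, which is immediate from the nonnegativity $x_j(t\ ; a) \ge 0$ established in Theorem \ref{theorem: constant bounds for multi-epitope solution}. Integrating over $[0,\tau]$ gives $A(\tau\ ; a) \le a\tau$, and the monotonicity observation above turns this into $X(a) \le X_{\textnormal{DF}}(a)$.

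The lower bound is the more delicate part, because $a_*(a)$ is defined through $X_{\textnormal{DF}}$ rather than through $X$ itself, so it cannot be read off directly. Here I would first invoke Theorem \ref{theorem: positive derivatives}, which shows each $x_j(\cdot\ ; a)$ is increasing in $t$, whence $\sum_{j=1}^N x_j(t\ ; a) \le \sum_{j=1}^N x_j(\tau\ ; a) = X(a)$ for every $t \in [0,\tau]$. Combining this with the upper bound $X(a) \le X_{\textnormal{DF}}(a)$ just obtained gives $\alpha(t\ ; a) = a - \sum_{j=1}^N x_j(t\ ; a) \ge a - X_{\textnormal{DF}}(a)$ on $[0,\tau]$; together with the strict positivity $\alpha(t\ ; a) > 0$ (again Theorem \ref{theorem: constant bounds for multi-epitope solution}) this yields $\alpha(t\ ; a) \ge \max\{0,\ a - X_{\textnormal{DF}}(a)\} = a_*(a)$ uniformly in $t$. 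Integrating gives $A(\tau\ ; a) \ge a_*(a)\tau$, and the monotonicity argument then produces $X_{\textnormal{DF}}(a_*(a)) \le X(a)$.

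The only genuine subtlety, the main obstacle, is this bootstrapping structure of the lower bound: the sharp pointwise estimate one would naively write is $\alpha(t\ ; a) \ge a - X(a)$, but since $a_*$ is stated in terms of the \emph{weaker} (and computable) quantity $X_{\textnormal{DF}}$, one must first establish the upper bound and then feed $X(a) \le X_{\textnormal{DF}}(a)$ back into the lower estimate. Everything else is routine: nonnegativity and strict positivity of the components, time-monotonicity, and the monotonicity of $u \mapsto 1 - e^{-k_{a,i}u}$.
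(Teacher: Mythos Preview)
Your proposal is correct and follows essentially the same approach as the paper: both arguments reduce to the sandwich $a_*(a)\tau \le A(\tau\ ; a) \le a\tau$, obtain the upper bound from $x_j \ge 0$ (Theorem \ref{theorem: constant bounds for multi-epitope solution}), and obtain the lower bound by first establishing $X(a)\le X_{\textnormal{DF}}(a)$ and then feeding this back via the time-monotonicity of $x_j$ (Theorem \ref{theorem: positive derivatives}) together with $\alpha>0$. Your identification of the bootstrapping step as the only genuine subtlety matches exactly how the paper structures its proof.
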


\begin{proof}
    Let $\alpha(t\ ; a)$ be the depletion accumulation concentration at time $t$. Because of Theorem \ref{theorem: constant bounds for multi-epitope solution}, it holds that $\alpha(t\ ; a) \leq a$ for all $t \geq 0$ and thus by Definition \ref{def: cumulative antibody concentration} it also holds that $A(t\ ; a) \leq a t$ for all $t\geq 0$. From Theorem \ref{theorem: integral equations} it then follows that
    \[x_i(t\ ; a)  = g_i(1-e^{-k_{a,i}A(t\ ; a)}) \leq g_i(1-e^{-k_{a,i}a t}) = x_{\textnormal{DF},i}(t\ ; a)\]
    for all $a> 0$, all $t \geq 0$ and all $i \in \{1,\ldots, N\}$. Setting $t=\tau$ proves the upper bound $X(a) \leq X_{\textnormal{DF}}(a)$.
    
    But the upper bound also proves that
    \[a_*(a) = \max\{0\ ; a-X_{\textnormal{DF}}(a)\} \leq a - X(a) = a- \sum_{j=1}^N x_j(\tau\ ; a) \ .\]
    Since all $x_i(t\ ;a)$ are strictly monotonically increasing functions of time $t$, according to Theorem \ref{theorem: positive derivatives}, it follows that $a-\sum_{j=1}^N x_j(t\ ;a)$ is strictly monotonically decreasing. Hence, 
    \[a - \sum_{j=1}^N x_j(\tau\ ; a) \leq   a - \sum_{j=1}^N x_j(t\ ; a)\qquad \forall \ t\in [0,\tau]\]
    such that
    \[a_*(a)\tau = \int_0^\tau a_*(a) \ dt \leq \int_0^\tau \left( a-\sum_{j=1}^{N} x_j(t\ ;a)\right) dt\ ,\]
    which implies
    \begin{align*}
     X_{\textnormal{DF}}(a_*(a)) &= \sum_{i=1}^{N} g_i(1-e^{-k_{a,i}a_*(a)\tau})\\[1em]
    	& \leq \sum_{i=1}^N g_i(1-e^{-k_{a,i}\int_0^\tau \left( a -\sum_{j=1}^N x_j(t\ ;a)\right)  dt}) = X(a)\ . 
    \end{align*}
\end{proof}

Unfortunately, the bounds of Lemma \ref{lemma: weak bounds multi-epitope accumulation model} are rather weak and can become useless for some system parameters. For example, as discussed in subsection \ref{subsec: heuristics for antibody depletion}, the depletion effect can become very pronounced when $k_{a,i}\cdot\tau \cdot g_i$ is sufficiently large. In these cases, the depletion-free dose-response curve $X_{\textnormal{DF}}(a)$ is a weak upper bound that is much larger than the depletion dose-response curve $X(a)$. At least for a certain range of initial antibody concentrations. On the other hand, $a - X_{\textnormal{DF}}(a) > 0$ is only guaranteed for $a \geq \sum_{j=1}^N g_j$. Thus, $\max\{0,X_{\textnormal{DF}}(a_*(a))\}$ can be zero for a large range of initial antibody concentrations, which constitutes a weak lower bound. Figure \ref{fig: bounds 1} illustrates these problems. Note that the gray area, in which the depletion dose-response curve must be contained, does not help to gauge the shape of the depletion dose-response curve.

\begin{figure}[h!]
	\centering
	\begin{subfigure}[c]{0.49\textwidth}
		\caption{Linear scale}
		\includegraphics[width= \textwidth]{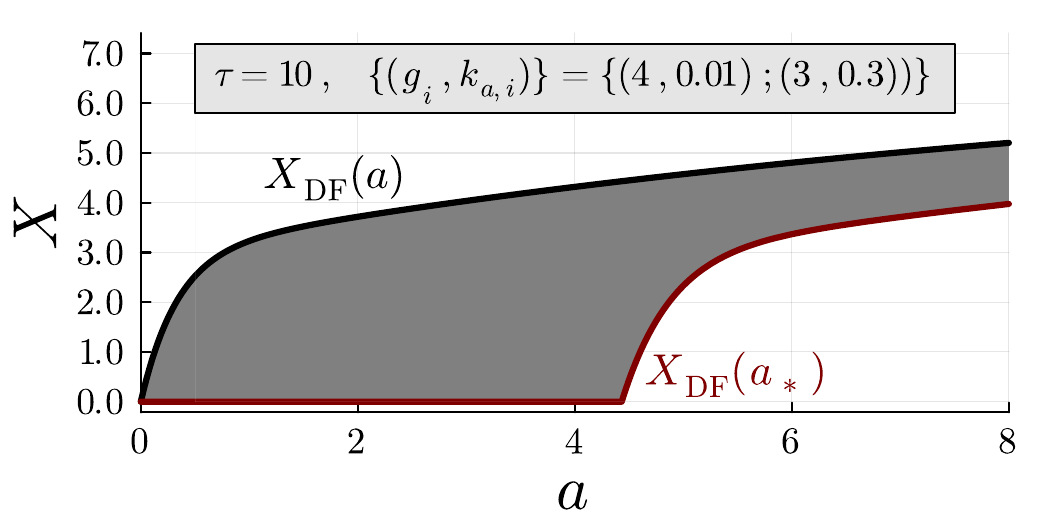}
	\end{subfigure}
	\begin{subfigure}[c]{0.49\textwidth}
		\caption{Logarithmic scale}
		\includegraphics[width= \textwidth]{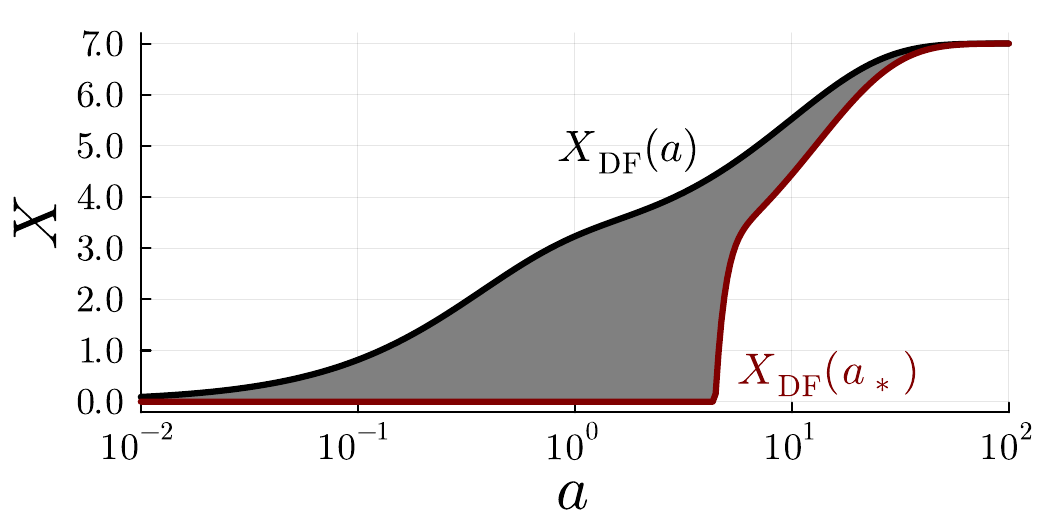}
	\end{subfigure}

	\caption{Plots for the upper and lower bounds from Lemma \ref{lemma: weak bounds multi-epitope accumulation model}. The gray area marks the region in which the depletion dose-response curve must be contained.} 
	\label{fig: bounds 1}
\end{figure}

\begin{figure}[h!]
	\centering
	\begin{tikzpicture}
		\filldraw[fill = gray!10, draw = black!50] (-2,-1) rectangle (10.5,5.5);

		\draw[line width = 1.5pt,-stealth] (0,0) -- (10,0) node[below]{$t$};
		\draw[line width = 1.5pt,-stealth] (0,0) -- (0,5) node[right]{$a$};

		\draw[line width = 1pt, dashed, color = black!60] (-0.3,4) node[left]{$a$} -- (9.8,4);
		\draw[line width = 1pt, dashed, color = black!60] (-0.3,4/5) node[left]{$\alpha(\tau\ ; a)$} -- (9.8,4/5);
		\draw[line width = 1pt, color = black!60] (8,-0.3) node[below]{$\tau$} -- (8,4.8);

		\node[color = red!50!black, above] at (9,0) {$\alpha(t\ ; a)$};

		\draw[line width = 2pt, color = blue!70!black] (0,4) -- (2,0) node[above right]{$a + t \frac{d \alpha}{dt}(0\ ; a)$};
		\draw[line width = 2pt,color = orange] (0,4) -- node[pos = 0.5, above right]{$a + t \frac{\alpha(\tau\ ; a)-a}{\tau}$} (8,4/5);

		\draw[line width = 2pt, color = red!50!black, domain = 0:9.8, samples = 100] plot({\x},{4/(0.5*\x + 1)});

	\end{tikzpicture}
	\caption{Illustration of the tangent line as lower bound of the antibody concentration and the secant line as upper bound for the antibody concentration.}
	\label{fig: convex antibody concentration}
\end{figure}

To obtain stronger bounds for the depletion accumulation model, we need to include more results from the last subsection and consider the time-development of the depletion antibody concentration more closely. Since all $x_i(t\ ;a)$ are concave and monotonically increasing, according to Theorem \ref{theorem: positive derivatives} and Corollary \ref{cor: concave accumulation}, the depletion antibody concentration $\alpha(t\ ; a) = a - \sum_{j=1}^N x_j(t\ ; a)$ is convex and monotonically decreasing. So, we can define a lower bound for $A(t,a)$, using the tangent line at $\alpha(0\ ; a)$ as a lower bound for the time development of the depletion antibody concentration. Figure \ref{fig: convex antibody concentration} illustrates the idea. In fact, since $\alpha(t\ ; a)$ is convex, the secant line between $(0,a)$ and $(\tau,\alpha(\tau\ ; a))$ would be an upper bound. However, since $\boldsymbol{x}(\tau\ ; a)$ is unknown, so is $\alpha(\tau\ ; a)$. Yet, to improve the upper bound, we recall theorem \ref{theorem: constant bounds for multi-epitope solution} which implies that $X(a)\leq a$. The following theorem summarizes the new bounds.

\begin{theorem}
	\label{theorem: improved bounds}
	Let $\{(g_i,k_{a,i})\}_{i=1}^N$ be epitope classes and $a>0$. Define
	\[\widetilde{A}(\tau\ ; a)\coloneqq \left\{ \begin{array}{ll}
		a \tau \left(1-\frac{\tau}{2}\sum_{i=1}^N k_{a,i}g_i\right) &\quad , \  \tau \leq \frac{1}{\sum_{i=1}^N k_{a,i}g_i}\\[1em]
		\frac{a}{2\sum_{i=1}^{N} k_{a,i}g_i} & \quad , \text{otherwise}
	\end{array} \right.\]
	and 
	\[Y(a) \coloneqq Y(\tau\ ; a) \coloneqq \sum_{i=1}^{N} g_i(1-e^{-k_{a,i}\widetilde{A}(\tau\ ; a)})\ .\]
	Then it holds that
	\[\max\{Y(a), X_{\textnormal{DF}}(a_*(a))\} \leq X(a)\leq \min\{a, X_{\textnormal{Df}}(a)\}\qquad \forall a > 0\ . \]
\end{theorem}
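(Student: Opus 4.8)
The plan is to observe that three of the four inequalities packed into the claim are already available, so that only the lower bound $Y(a)\le X(a)$ demands new work. Indeed, evaluating Theorem \ref{theorem: constant bounds for multi-epitope solution} at $t=\tau$ gives $X(a)=\sum_j x_j(\tau\ ;a)<a$, which is the inequality $X(a)\le a$; and both $X(a)\le X_{\textnormal{DF}}(a)$ and $X_{\textnormal{DF}}(a_*(a))\le X(a)$ are exactly the content of Lemma \ref{lemma: weak bounds multi-epitope accumulation model}. Together these already yield $X(a)\le\min\{a,X_{\textnormal{DF}}(a)\}$ and $X_{\textnormal{DF}}(a_*(a))\le X(a)$. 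After disposing of them by citation, I would concentrate entirely on proving $Y(a)\le X(a)$.

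For that, the organizing object is the cumulative depletion antibody concentration $A(\tau\ ;a)=\int_0^\tau\alpha(s\ ;a)\,ds$. By Theorem \ref{theorem: integral equations} and Definition \ref{def: cumulative antibody concentration} we have $x_i(\tau\ ;a)=g_i\bigl(1-e^{-k_{a,i}A(\tau\ ;a)}\bigr)$, and since $u\mapsto 1-e^{-k_{a,i}u}$ is increasing, the whole statement reduces to the \emph{single scalar inequality} $A(\tau\ ;a)\ge\widetilde A(\tau\ ;a)$. Once that is established, applying the increasing map componentwise gives $x_i(\tau\ ;a)\ge g_i(1-e^{-k_{a,i}\widetilde A(\tau\ ;a)})$, and summing over $i$ yields $X(a)\ge Y(a)$ immediately.

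To bound $A(\tau\ ;a)$ from below I would exploit convexity. By Corollary \ref{cor: concave accumulation} each $x_i$ is strictly concave, so $\sum_j x_j$ is concave and hence $\alpha(t\ ;a)=a-\sum_j x_j(t\ ;a)$ is convex; a convex differentiable function lies above its tangent line at $t=0$. Using $\boldsymbol{x}(0)=0$, the initial slope is $\alpha'(0\ ;a)=-\sum_i\frac{d}{dt}x_i(0\ ;a)=-a\sum_i k_{a,i}g_i$, so with $S\coloneqq\sum_i k_{a,i}g_i$ the tangent line is $\ell(t)=a(1-St)$ and $\alpha(t\ ;a)\ge\ell(t)$. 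Crucially, Theorem \ref{theorem: constant bounds for multi-epitope solution} also gives $\alpha(t\ ;a)>0$, so the sharper bound $\alpha(t\ ;a)\ge\max\{\ell(t),0\}$ holds. Integrating this over $[0,\tau]$ reproduces the two cases in the definition of $\widetilde A$: when $\tau\le 1/S$ the line stays nonnegative on $[0,\tau]$ and $\int_0^\tau\ell=a\tau(1-\tfrac{\tau}{2}S)$; when $\tau>1/S$ the line is nonnegative only up to $1/S$, and integrating $\ell$ over $[0,1/S]$ while discarding the nonnegative remainder gives $\tfrac{a}{2S}$. In both cases $A(\tau\ ;a)\ge\widetilde A(\tau\ ;a)$, finishing the proof.

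The only place demanding care is the second case $\tau>1/S$: one must \emph{not} integrate $\ell$ over all of $[0,\tau]$, since $\ell$ turns negative and the naive value $a\tau(1-\tfrac{\tau}{2}S)$ eventually becomes negative and useless. The remedy is precisely the strict positivity $\alpha>0$ from Theorem \ref{theorem: constant bounds for multi-epitope solution} (itself a consequence of the monotonicity in Theorem \ref{theorem: positive derivatives}): combining it with the tangent bound gives $\alpha\ge\max\{\ell,0\}$ and legitimizes truncating the integration at $1/S$. Everything else is two elementary integral evaluations, which I would carry out directly.
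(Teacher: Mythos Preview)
Your proposal is correct and follows essentially the same route as the paper's own proof: cite Theorem~\ref{theorem: constant bounds for multi-epitope solution} and Lemma~\ref{lemma: weak bounds multi-epitope accumulation model} for three of the four inequalities, then reduce $Y(a)\le X(a)$ to the scalar inequality $\widetilde A(\tau\,;a)\le A(\tau\,;a)$ via the tangent-line lower bound for the convex function $\alpha(t\,;a)$, truncated at zero using Theorem~\ref{theorem: constant bounds for multi-epitope solution}, and integrated explicitly in the two regimes $\tau\le 1/S$ and $\tau>1/S$. Your identification of the pitfall in the second case (not integrating the raw tangent line past $1/S$) is exactly the point the paper handles by introducing $\widetilde\alpha(t\,;a)=\max\{0,\ell(t)\}$.
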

\begin{proof}
    According to Corollary \ref{cor: concave accumulation}, the $x_i(t\ ;a)$ are strictly convex as function of $t$. Hence, $X(t\ ; a) = \sum_{i=1}^N x_i(t\ ; a)$ is  strictly convex and $\alpha(t\ ; a) = a - X(t\ ; a)$ is strictly concave as function of $4$. According to Theorem \ref{theorem: positive derivatives}, all $x_i(t\ ; a)$ are strictly monotonically increasing, so $\alpha(t\ ; a)$ is strictly monotonically decreasing. This implies that $\frac{d}{dt} \alpha(t\ ; a) < 0$. From the convexity it thus follows that
    \[\frac{d a}{dt}(0\ ; a)\leq \frac{d}{dt} \alpha(t\ ; a)\qquad \forall\ t\in [0,\infty)\ .\]
    Using Lemma \ref{lemma: larger derivative and larger function}, we obtain
    \[\alpha(t\ ; a) - \left(a + t \frac{d a}{dt}(0\ ; a)\right) \geq 0 \qquad \forall\ t\in [0,\infty)\ ,\]
    which is equivalent to
    \[\alpha(t\ ; a)\geq \left(a + t \frac{d a}{dt}(0\ ; a)\right) \qquad \forall\ t\in [0,\infty)\ .\]
    From Theorem \ref{theorem: constant bounds for multi-epitope solution} it follows that $\alpha(t\ ; a) = a - \sum_{j=1}^N x_j(t\ ; a) \geq 0$ for all $t\in[0,\infty)$, implying
    \[\widetilde{\alpha}(t\ ; a) \coloneqq \max\left\{ 0\ ; a + t \frac{d a}{dt}(0\ ; a)\right\} \leq  \alpha(t\ ; a) \qquad \forall\  t \in [0,\infty)\ .\]
    
    Next, let us calculate at which point $\widetilde{\alpha}(t\ ; a) = 0$. For this, we use that $\boldsymbol{x}$ is a solution of the depletion accumulation IVP:
    \begin{align*}
    \frac{da}{dt}(0\ ; a) &= \left. \frac{d}{dt} \left(a -\sum_{i=1}^N x_i(t\ ; a)\right) \right|_{t = 0}\\[1em] 
    &= - \sum_{i=1}^N k_{a,i}\left(a -\sum_{j=1}^N x_j(0\ ; a)\right)(g_i - x_i(0\ ; a))\\[1em]
    &= -\sum_{i=1}^{N} k_{a,i} a g_i\ .
    \end{align*}
    In summary, we have
    \[\widetilde{\alpha}(t\ ; a) = \max\left\{0,a - a t \sum_{i=1}^N k_{a,i}g_i \right\} = 0 \qquad \Leftrightarrow \qquad t \geq \frac{1}{\sum_{i=1}^N k_{a,i}g_i} \eqqcolon t_0\ . \]
    Now, we can integrate $\widetilde{\alpha}(t\ ; a)$ over $[0,\tau]$, where $\tau\geq t_0$:
    \begin{align*}
    	\int_0^\tau \widetilde{\alpha}(t\ ; a) \ dt &= \int_0^{t_0} \left(a - a t \sum_{j=1}^N k_{a,i}g_i\right)  dt + \int_{t_0}^\tau 0 \ dt = \left[ a t - a \frac{t^2}{2}\sum_{i=1}^N k_{a,i}g_i  \right]_0^{t_0}\\[1em]
    	&= a t_0 \left(1- \frac{t_0}{2}\sum_{j=1}^N k_{a,i}g_i\right) = \frac{a}{\sum_{i=1}^N k_{a,i}g_i} \left(1-\frac{1}{2\sum_{i=1}^{N}k_{a,i}g_i} \sum_{i=1}^N k_{a,i}g_i\right) \\[1em]
    	&= \frac{a}{2\sum_{i=1}^N k_{a,i}g_i}\ .
    \end{align*}
    On the other hand, if $\tau < t_0$:
    \begin{align*}
    	\int_0^\tau \widetilde{\alpha}(t\ ; a) \ dt = \int_0^\tau \left( a - a t \sum_{j=1}^N k_{a,i}g_i\right)  dt = a \tau \left(1- \frac{\tau}{2}\sum_{j=1}^N k_{a,i}g_i\right)\ .
    \end{align*}
    Both integrals show that
    \[\int_{0}^{\tau} \widetilde{\alpha}(t\ ; a)\  dt = \widetilde{A}(\tau\ ; a)\ .\]
    Since we have already proven that $\widetilde{\alpha}(t\ ; a)\leq \alpha(t\ ; a)$ for all $t\geq 0$, it follows that
    \[\widetilde{A}(\tau\ ; a) = \int_0^\tau \widetilde{\alpha}(t\ ; a)\  dt \leq \int_0^\tau \alpha(t\ ; a) \ dt = A(\tau\ ; a)\quad \forall\ \tau \geq 0 \ .\]
    Thus, we have
    \[Y(a) = \sum_{i=1}^{N} g_i(1-e^{-k_{a,i}\widetilde{A}(\tau\ ; a)}) \leq \sum_{i=1}^{N} g_i(1-e^{-k_{a,i}A(\tau\ ; a)}) = X(a)\ .\]
    The remaining bounds follow from Theorem \ref{theorem: constant bounds for multi-epitope solution} and Lemma \ref{lemma: weak bounds multi-epitope accumulation model}. 
\end{proof}

We can apply the improved bounds to the same parameters that were used in Figure \ref{fig: bounds 1}. The results are shown in Figure \ref{fig: bounds 2}.

\begin{figure}[h!]
	\centering
	\begin{subfigure}[c]{0.49\textwidth}
		\caption{Linear scale}
		\includegraphics[width= \textwidth]{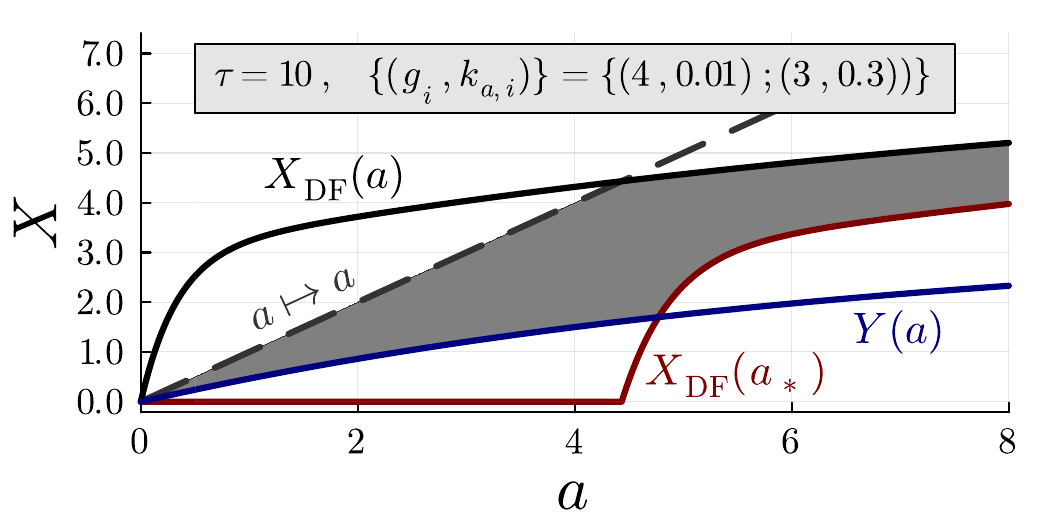}
	\end{subfigure}
	\begin{subfigure}[c]{0.49\textwidth}
		\caption{Logarithmic scale}
		\includegraphics[width= \textwidth]{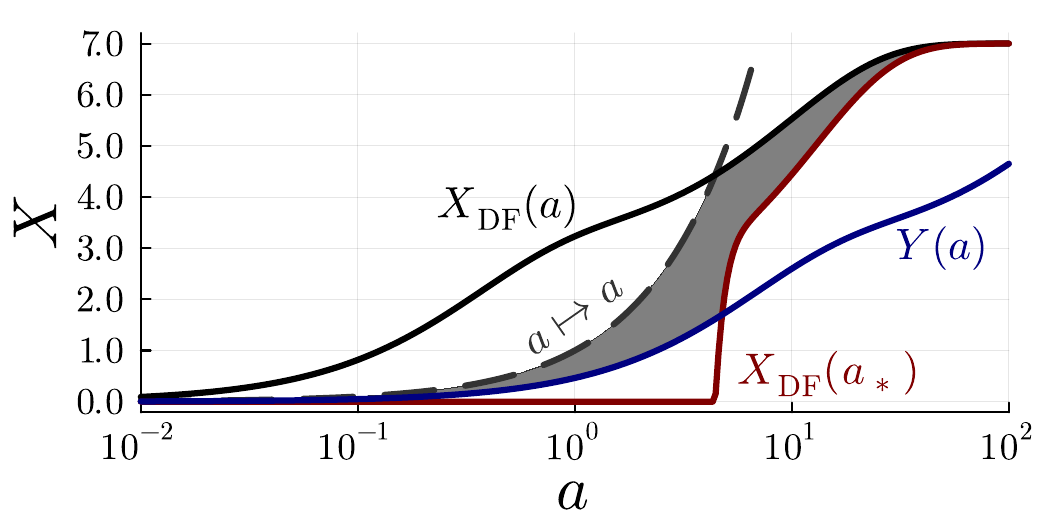}
	\end{subfigure}

	\caption{Plots for the improved upper and lower bounds from theorem \ref{theorem: improved bounds}, using the same system parameters as for Figure \ref{fig: bounds 1}. As before, the gray area marks the region in which the depletion dose-response curve must be contained.} 
	\label{fig: bounds 2}
\end{figure}

To investigate the bounds from Theorem \ref{theorem: improved bounds}, let us vary the system parameters. Figure \ref{fig: bounds 3} shows the plots for two cases: a ten-fold increase in the largest binding rate constant $k_{a,i}$ and a ten-fold decrease in both epitope concentrations.

\begin{figure}[h!]
	\centering
	\begin{subfigure}[c]{0.49\textwidth}
		\caption{Model plots linear scale}
		\includegraphics[width= \textwidth]{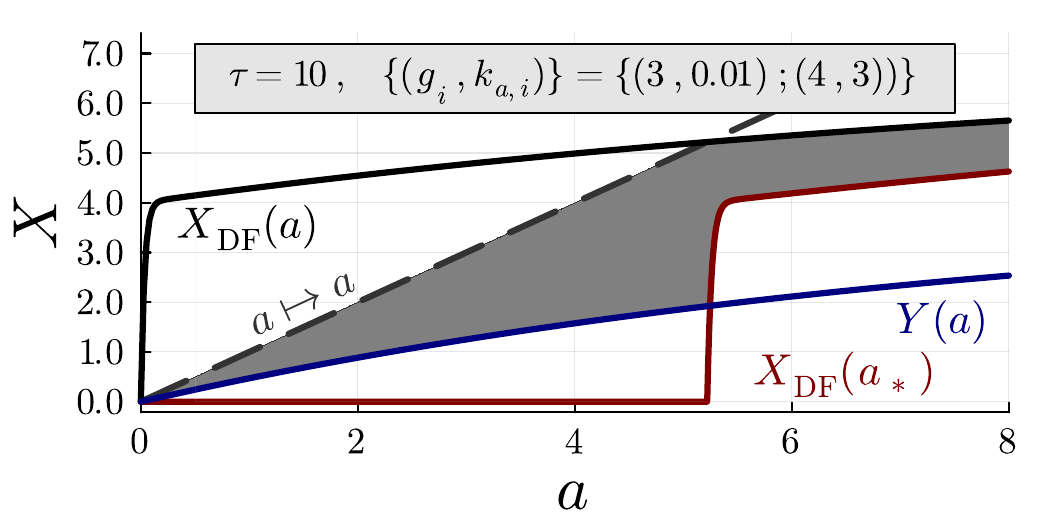}
	\end{subfigure}
	\begin{subfigure}[c]{0.49\textwidth}
		\caption{Model plots logarithmic scale}
		\includegraphics[width= \textwidth]{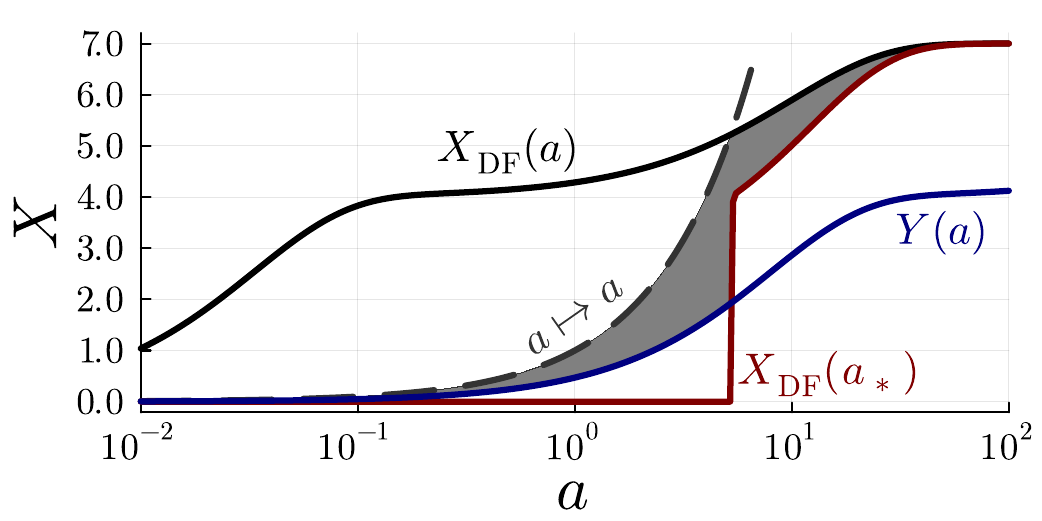}
	\end{subfigure}

	\begin{subfigure}[c]{0.49\textwidth}
		\caption{Model plots linear scale}
		\includegraphics[width= \textwidth]{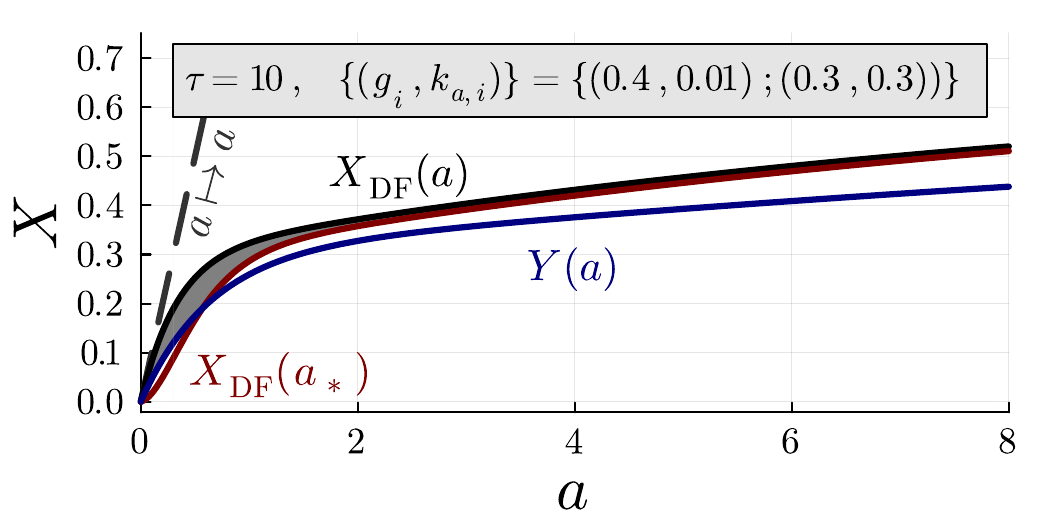}
	\end{subfigure}
	\begin{subfigure}[c]{0.49\textwidth}
		\caption{Model plots logarithmic scale}
		\includegraphics[width= \textwidth]{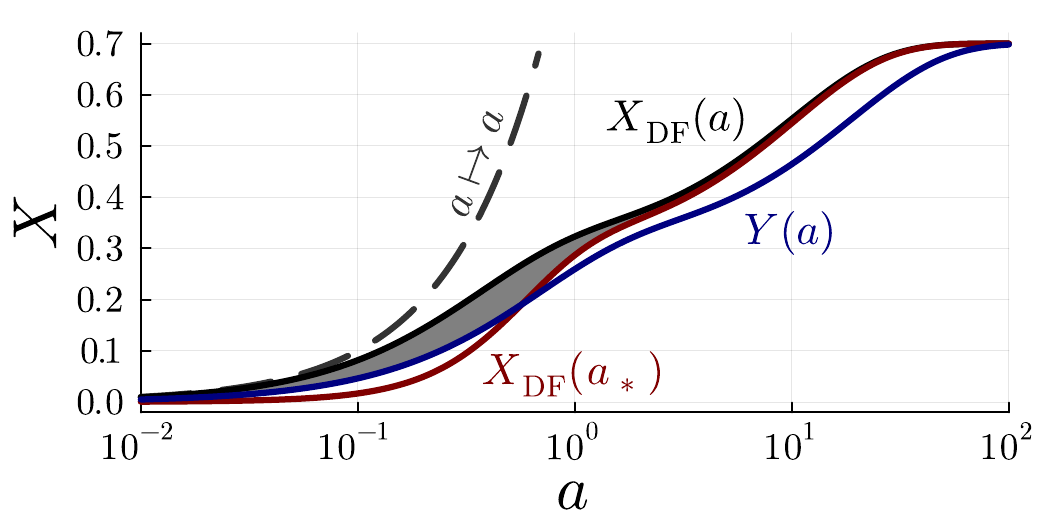}
	\end{subfigure}
	\caption{Plots for the improved bounds from theorem \ref{theorem: improved bounds} for different system parameters, compared to figure \ref{fig: bounds 2}. (a,b) show plots for a tenfold increase of the largest binding rate constant $k_{a,i}$. (c,d) show plots for a tenfold decrease of the epitope concentrations.} 
	\label{fig: bounds 3}
\end{figure}

Observe that increasing the binding rate constant has led to a stronger depletion effect. This can be explained with the initial antibody concentration, which is an upper bound of the depletion accumulation model but not of the depletion-free accumulation model. The latter has essentially an infinite amount of antibodies at its disposal. Thus, the concentration of bound antibodies increases, when the binding rate constants are increased. Since the initial antibody concentration is independent of the binding rate constants, the depletion accumulation model cannot follow, increasing the depletion effect.

Decreasing the epitope concentration, on the other hand, reduced the depletion effect. This is in line with the heuristics from subsection \ref{subsec: heuristics for antibody depletion}, which can be adapted to $\sum_{i=1}^{N} k_{a,i} g_i \tau$ as a rule of thumb for the depletion effect.

In addition, also the lower bounds improved for low epitope concentrations. To explain this observation, note that the approximation $a_*(a) = \max\{0,a -X_{\textnormal{DF}}(a)\}\approx a$  is valid for $a \gg X_{\textnormal{DF}}(a)$. This condition is essentially determined by $\sum_{i=1}^{N}g_i$, which is an upper bound of $X_{\textnormal{DF}}(a)$. Thus, $X_{\textnormal{DF}}(a_*(a)) \approx X_{\textnormal{DF}}(a)$, i.e. the difference between $X_{\textnormal{DF}}(a_*(a))$ and $X_{\textnormal{DF}}(a)$ decreases, when $\sum_{i=1}^{N}g_i$ decreases. Since $X_{\textnormal{DF}}(a_*(a)) \leq X(a)\leq X_{\textnormal{DF}}(a)$, this means that the lower bound improves.

Finally, note that we did not specify the number of epitope classes $N$ to derive the results of this section. Thus, the bounds of Theorem \ref{theorem: improved bounds} should also apply when there is only a single epitope class, where we have the analytical solution of the depletion accumulation model. This allows us to check the bounds by plotting them together with the true solution of the depletion accumulation model (Fig. \ref{fig: bounds single-epitope-class}).

\begin{figure}[h!]
	\centering
	\begin{subfigure}[c]{0.49\textwidth}
		\caption{Linear scale}
		\includegraphics[width= \textwidth]{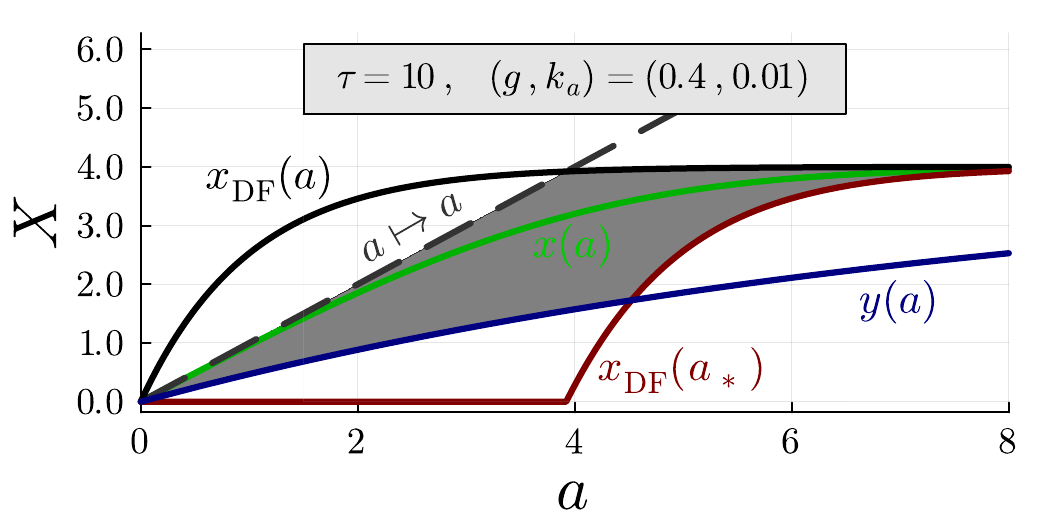}
	\end{subfigure}
	\begin{subfigure}[c]{0.49\textwidth}
		\caption{Logarithmic scale}
		\includegraphics[width= \textwidth]{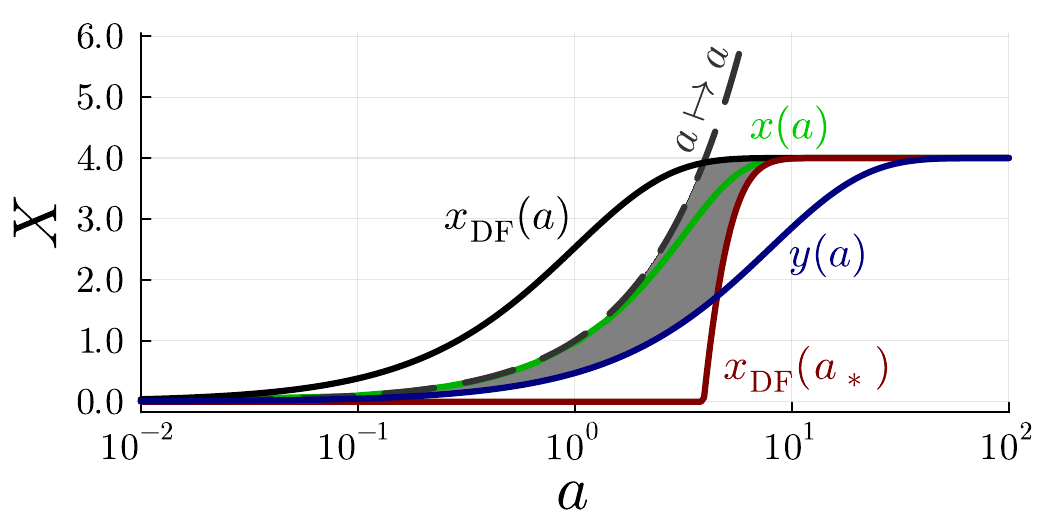}
	\end{subfigure}

	\caption{Plots of the improved bounds form theorem \ref{theorem: improved bounds} for a single epitope class, together with the analytical solution $x(a)$. Here, $y(a)$ denotes $Y(a)$ for the single-epitope-class case.} 
	\label{fig: bounds single-epitope-class}
\end{figure}

\subsection{Implications for inverse problems}

In inverse problems, the goal is to estimate the epitope classes $\{(g_i,k_{a,i})\}_{i=1}^N$ from dose-response data $\{(\mathfrak{a}_{i}, \mathfrak{X}_i)\}_{i=1}^m$. Here, $\mathfrak{a}_{i}$ denotes the initial antibody concentration of the $i$-th data point and $\mathfrak{X}_i$ denotes the correspondingly measured concentration of bound antibodies.
Since the epitope classes are unknown, the bounds from Theorem \ref{theorem: improved bounds} cannot be applied to inverse problems. However, we can do the opposite: Instead of approximating the depletion accumulation model, we can use the measured values $\mathfrak{X}_i$ to approximate the data that we would have obtained if there was no depletion. Then, after correcting the data points accordingly, the depletion-free accumulation model could be used for the inference problem. 

Without information about the epitope classes $\{(g_i,k_{a,i})\}_{i=1}^N$, it will be difficult to obtain approximations for the response values that would have arisen without depletion. However, using the measured response values, we can obtain bounds for the initial antibody concentrations that would have been necessary to obtain the measured response values in a system without depletion. Let us call these initial antibody concentrations \textbf{transformed initial antibody concentrations}. Using the convexity of the antibody concentration and the ideas depicted in the corresponding sketch (Figure \ref{fig: convex antibody concentration}) we can derive the following bounds.

\begin{theorem}
	Let $\{(\mathfrak{a}_{j}, \mathfrak{X}_j)\}_{j=1}^m$ be data points with $\mathfrak{a}_{j} > 0$ for all $j\in \{1,\ldots, m\}$, obtained from the depletion accumulation model for fixed but unknown epitope classes $\{(g_i,k_{a,i})\}_{i=1}^N$:
	\[\mathfrak{X}_j  = X(\tau\ ; \mathfrak{a}_{j}, \{(g_i,k_{a,i})\}_{i=1}^N)  \qquad \forall\ j \in \{1,\ldots,m\} \ .\]
	Let $\{\mathfrak{b}_{j}\}_{j=1}^m$ denote transformed initial antibody concentrations such that
	\[X_{\textnormal{DF}}(\tau\ ;\mathfrak{b}_{j}, \{(g_i,k_{a,i})\}_{i=1}^N) = X(\tau\ ; \mathfrak{a}_{j}, \{(g_i,k_{a,i})\}_{i=1}^N) \qquad \forall\ j  \in \{1,\ldots,m\} \ . \]
	Then it holds that
	\[\mathfrak{a}_{j} - \mathfrak{X}_j\  \leq\  \mathfrak{b}_{j}\ \leq\ \mathfrak{a}_{j} - \frac{1}{2}\mathfrak{X}_j \qquad \forall \ j \in  \{1,\ldots,m\}\ . \]
\end{theorem}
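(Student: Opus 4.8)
The plan is to identify the transformed concentration $\mathfrak{b}_j$ explicitly with the time-average of the depletion antibody concentration, and then squeeze that average between the two claimed bounds using convexity. First I would fix $j$, assume $\tau > 0$ (the case $\tau = 0$ being degenerate), and recall from Theorem \ref{theorem: integral equations} together with Definition \ref{def: cumulative antibody concentration} that the depletion model can be written as $X(\tau\ ; \mathfrak{a}_{j}) = \sum_{i=1}^N g_i\bigl(1 - e^{-k_{a,i} A(\tau\ ; \mathfrak{a}_{j})}\bigr)$, whereas the depletion-free model is $X_{\textnormal{DF}}(\tau\ ; b) = \sum_{i=1}^N g_i\bigl(1 - e^{-k_{a,i} b \tau}\bigr)$. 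Comparing these expressions shows that the choice $b = A(\tau\ ; \mathfrak{a}_{j})/\tau$ already produces the defining equality $X_{\textnormal{DF}}(\tau\ ; b) = X(\tau\ ; \mathfrak{a}_{j}) = \mathfrak{X}_j$. Since each summand of $X_{\textnormal{DF}}(\tau\ ; \cdot)$ has derivative $g_i k_{a,i}\tau e^{-k_{a,i} b \tau} > 0$ in $b$, the map $b \mapsto X_{\textnormal{DF}}(\tau\ ; b)$ is strictly increasing and hence injective, so $\mathfrak{b}_{j}$ is uniquely determined and must equal $A(\tau\ ; \mathfrak{a}_{j})/\tau$. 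This reduces the whole statement to bounding the average $\frac{1}{\tau}\int_0^\tau \alpha(s\ ; \mathfrak{a}_{j})\,ds$.

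The second step is the convexity argument, reusing the facts already established in the proof of Theorem \ref{theorem: improved bounds}: $\alpha(\cdot\ ; \mathfrak{a}_{j})$ is convex (from Corollary \ref{cor: concave accumulation}) and strictly decreasing (from Theorem \ref{theorem: positive derivatives}), with boundary values $\alpha(0\ ; \mathfrak{a}_{j}) = \mathfrak{a}_{j}$ and $\alpha(\tau\ ; \mathfrak{a}_{j}) = \mathfrak{a}_{j} - \mathfrak{X}_j$. For the upper bound, convexity places $\alpha(s\ ; \mathfrak{a}_{j})$ below the secant line through $(0,\mathfrak{a}_{j})$ and $(\tau, \mathfrak{a}_{j} - \mathfrak{X}_j)$, namely $\mathfrak{a}_{j} - (\mathfrak{X}_j/\tau)s$; integrating this secant over $[0,\tau]$ and dividing by $\tau$ yields $\mathfrak{b}_{j} \le \mathfrak{a}_{j} - \tfrac12 \mathfrak{X}_j$. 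For the lower bound, monotonicity gives the crude pointwise estimate $\alpha(s\ ; \mathfrak{a}_{j}) \ge \alpha(\tau\ ; \mathfrak{a}_{j}) = \mathfrak{a}_{j} - \mathfrak{X}_j$ on $[0,\tau]$, whose average is exactly $\mathfrak{a}_{j} - \mathfrak{X}_j$, so $\mathfrak{b}_{j} \ge \mathfrak{a}_{j} - \mathfrak{X}_j$. Combining the two estimates gives the claimed inequalities.

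The only genuinely delicate point is the first step, the identification $\mathfrak{b}_{j}\tau = A(\tau\ ; \mathfrak{a}_{j})$: one must be careful that the defining equation for $\mathfrak{b}_{j}$ constrains only the \emph{sum} $X_{\textnormal{DF}}$, not the individual components, so I cannot argue componentwise that $b\tau = A$; instead the equality of the cumulative exponents is forced globally by the strict monotonicity, and hence injectivity, of $b \mapsto X_{\textnormal{DF}}(\tau\ ; b)$. Once this identification is secured, the remaining bounds are routine consequences of convexity and are precisely the rigorous form of the tangent/secant picture in Figure \ref{fig: convex antibody concentration}. I would also remark that well-definedness of $\mathfrak{b}_{j}$ needs $0 < \mathfrak{X}_j < \sum_{i=1}^N g_i$, which follows from the bounds $0 \le x_i(\tau\ ; \mathfrak{a}_{j}) < g_i$ of Theorem \ref{theorem: constant bounds for multi-epitope solution}.
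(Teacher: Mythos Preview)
Your proof is correct and uses the same core ingredients as the paper: the integral-equation representation $X(\tau;\mathfrak{a}_j)=\sum_i g_i(1-e^{-k_{a,i}A(\tau;\mathfrak{a}_j)})$, monotonicity of $\alpha(\cdot;\mathfrak{a}_j)$ for the lower bound, convexity of $\alpha(\cdot;\mathfrak{a}_j)$ (secant bound) for the upper bound, and strict monotonicity of $b\mapsto X_{\textnormal{DF}}(\tau;b)$ to pass from cumulative-concentration inequalities to inequalities on $\mathfrak{b}_j$.

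The one organizational difference is that you first \emph{identify} $\mathfrak{b}_j = A(\tau;\mathfrak{a}_j)/\tau$ explicitly via injectivity of $X_{\textnormal{DF}}(\tau;\cdot)$, and then bound this time-average directly. The paper never writes down this formula; instead it establishes $(\mathfrak{a}_j-\mathfrak{X}_j)\tau \le A(\tau;\mathfrak{a}_j) \le (\mathfrak{a}_j-\tfrac12\mathfrak{X}_j)\tau$, feeds these into $X_{\textnormal{DF}}$ to get $X_{\textnormal{DF}}(\mathfrak{a}_j-\mathfrak{X}_j)\le X_{\textnormal{DF}}(\mathfrak{b}_j)\le X_{\textnormal{DF}}(\mathfrak{a}_j-\tfrac12\mathfrak{X}_j)$, and only then invokes monotonicity. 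Your route is marginally cleaner and yields the extra information that $\mathfrak{b}_j$ is unique and equals the average depletion concentration; the paper's route has the minor advantage that it never needs to assert uniqueness of $\mathfrak{b}_j$ (the bounds hold for \emph{any} $\mathfrak{b}_j$ satisfying the defining equation). Substantively the arguments are the same.
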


\begin{proof}
    To keep the notation short, we may drop the epitope classes $\{(g_i,k_{a,i})\}_{i=1}^N$ as extra arguments, as they are the same for all models being considered here.
    
    Theorem \ref{theorem: positive derivatives} and Corollary \ref{cor: concave accumulation} imply that the depletion antibody concentration $\alpha(t\ ; \mathfrak{a}_{j})$ is a strictly convex and monotonically decreasing function of $t$. It follows that
    \[\alpha(\tau\ ; \mathfrak{a}_{j}) = \mathfrak{a}_{j} - X(\mathfrak{a}_{j}) = \mathfrak{a}_{j}- \mathfrak{X}_j \leq \alpha(t\ ;\mathfrak{a}_{j})\]
    for all $t\in [0,\tau]$ and all $j \in \{1,\ldots, m\}$. Thus, we have
    \[(\mathfrak{a}_{j}-\mathfrak{X}_j)\tau = \int_0^\tau \mathfrak{a}_{j}-\mathfrak{X}_j \ dt \leq A(t\ ; \mathfrak{a}_{j})\]
    for all $t\in [0,\tau]$ and all $j \in \{1,\ldots, m\}$, implying that
    \begin{align*}
    	X_{\textnormal{DF}}(\mathfrak{a}_{j} - \mathfrak{X}_j) & = \sum_{i=1}^N g_i(1-e^{-k_{a,i} (\mathfrak{a}_{j}-\mathfrak{X}_j)\tau})\\[1em]
    	& \leq \sum_{i=1}^N g_{i}(1-e^{-k_{a,i}A(\tau\ ; \mathfrak{a}_{j})}) = X(\mathfrak{a}_{j})  \ \overset{\textnormal{def}}{=} \  X_{\textnormal{DF}}(\mathfrak{b}_{j})\ .
    \end{align*}
    Because of Theorem \ref{theorem: larger antibody concentrations} the $x_i(\tau\ ; a)$ and thus $X_{\textnormal{DF}}(\tau\ ; a)$ are monotonically increasing functions of $a$. This shows that $\mathfrak{a}_{j}-\mathfrak{X}_j < \mathfrak{b}_{j}$. So, $\mathfrak{a}_{j}- \mathfrak{X}_j$ are indeed lower bounds.
    
    For the upper bounds, we use again that $\alpha(t\ ; \alpha)$ is a convex function of $t$. Using the definition of convexity, we find 
    \[\mathfrak{a}_{j} - t \ \frac{\mathfrak{a}_{j}- \alpha(\tau\ ; \alpha)}{\tau} \geq \alpha(t\ ; \alpha)\qquad \forall \  t \in [0,\tau]\ , \ \forall\ j \in \{1,\ldots,m\}\ . \]
    With $\alpha(\tau\ ; \mathfrak{a}_{j}) = \mathfrak{a}_{j} - X(\mathfrak{a}_{j}) = \mathfrak{a}_{j} - \mathfrak{X}_j$ the last inequality implies
    \[\alpha(t\ ; \mathfrak{a}_{j})\leq \mathfrak{a}_{j} - t \ \frac{\mathfrak{a}_{j}, \alpha(\tau\ ; \alpha)}{\tau} = \mathfrak{a}_{j} - \frac{t}{\tau} \mathfrak{X}_j \ \]
    for all $t\in [0,\tau]$ and all $j\in \{1,\ldots, m\}$. Hence,
    \[A(\tau\ ; \mathfrak{a}_{j}) = \int_0^\tau \alpha(\tau\ ; \mathfrak{a}_{j})\ dt  \leq \int_0^\tau \mathfrak{a}_{j} - \frac{t}{\tau} \mathfrak{X}_j\ dt = \left(\mathfrak{a}_{j}-\frac{\mathfrak{X}_j}{2}\right) \tau \]
    for all $j\in \{1,\ldots, m\}$. It follows for all $j\in \{1,\ldots,m\}$ that
    \begin{align*}
    X_{\textnormal{DF}}(\mathfrak{b}_{j})\ \overset{\text{def}}{=}\  X(\mathfrak{a}_{j}) & = \sum_{i=1}^N g_i(1-e^{-k_{a,i}A(\tau\ ; \mathfrak{a}_{j})})\\[1em] 
    & \leq \sum_{i=1}^N g_i(1-e^{-k_{a,i}(\mathfrak{a}_{j}-\frac{\mathfrak{X}_j}{2})\tau}) = X_{\textnormal{DF}}(\mathfrak{a}_{j}-\tfrac{1}{2}\mathfrak{X}_j)\ .
    \end{align*}
    This proves that $\mathfrak{b}_j \leq \mathfrak{a}_{j} - \frac{1}{2}\mathfrak{X}_j$. So $\mathfrak{a}_{j} - \frac{1}{2}\mathfrak{X}_j$ is indeed an upper bound. 
\end{proof}

The lower bound $\mathfrak{a}_{j} - \mathfrak{X}_j$  was already used in (\PreprintText) but without formal proof.

So far, we assumed that the units of all quantities are converted to surface concentrations (see Remark \ref{rem: unit conversion}). If this is not the case, the conversion factor $\eta$ between the unit of $\mathfrak{a}_{j}$ and the unit of $\mathfrak{X}_j$ is required: $\mathfrak{a}_{j} - \eta \mathfrak{X}_j$. If there is substantial noise, or if the conversion factor is unknown, as was the case in (\PreprintText), the conversion factor must be estimated. In this case, lower bounds for the transformed initial antibody concentrations can be obtained by choosing the largest $\eta$ such that all $\mathfrak{a}_{j}-\eta \mathfrak{X}_j$ are positive.

Similarly, if different units are used, the upper bound becomes $\mathfrak{a}_{j}- \eta\frac{1}{2}\mathfrak{X}_j$. However, the upper bound cannot be used if the unit conversion factor $\eta$ is unknown. Here, instead of the largest possible $\eta$, one would have to use the smallest possible unit conversion factor to preserve the inequality for the upper bound. Assuming sensible units, the unit conversion factor must be a positive number. But there is no smallest positive unit conversion factor and the infimum is $\eta = 0$, i.e. no transformation at all.

\section{Conclusion and outlook}

In this paper, the depletion accumulation IVP was solved analytically, which yielded an analytical expression for the depletion accumulation model. With this expression, we derived inequalities between the depletion-free accumulation model, the depletion accumulation model, and the Langmuir isotherm. Although there seems to be no general analytical solution for the depletion accumulation IVP, existence, uniqueness and additional properties of the solution were proven. Among others, these properties were expected from the analytical solution for a single epitope class. We used the properties to derive bounds for the depletion accumulation model, which allowed us to investigate the depletion dose-response curve qualitatively. Finally, inverse problems were discussed and bounds for data transformations that undo the depletion effect were derived. 

Although the depletion behavior was characterized in this paper, exact statements with formal proofs are still missing. At least in case of a single epitope class, the analytical solution suggests that the described heuristic may be proven. However, for the case of multiple epitope classes, it is not clear if accordingly generalized statements could be proven.

Finally, numerical solutions and properties of inverse problems need to be considered in greater detail. Further research of properties and limitations is necessary. In particular, the use of numerical solutions for the depletion accumulation model should be considered.

\backmatter

\bmhead{Author contributions}

The research project was designed by Tim Hucho and Steffen Waldherr. The biological background and biological scope were supervised by Tim Hucho. The modeling was supervised by Steffen Waldherr. Development of the model, deriving the proofs, and writing the manuscript drafts were done by Dominik Tschimmel. The manuscript was revised and approved by all authors.

\begin{appendices}
   \section{Additional steps for some proofs}

\subsection{Proof of Lemma \ref{lemma: depletion accumulation IVP single epitope class}}
\label{appendix-proof: IVP solutions}

\subsubsection*{Generic case $a\neq g$}
When $a<g$, it follows that $e^{k_a (a-g)t} \leq 1$ and thus $g-a e^{k_a (a-g)t} > 0$. On the other hand, when $a>g$, it follows that $e^{k_a (a-g)t} \geq 1$ and thus $g-a e^{k_a (a-g)t} > 0$. So, $x(t; a)$ is finite for all $t\geq 0$.

First, we calculate the derivative of $x(t\ ; a)$. Let us define $E(t) = e^{k_a(a-g)t}$ to shorten the notation. We observe that
\[\frac{d}{dt}E(t) = \frac{d}{dt}e^{k_a(a-g)t} = k_a(a-g)e^{k_a(a-g)t} = k_a(a-g)E(t)\ .\]
Thus, we calculate
\begin{align*}
	\frac{d}{dt}x(t\ ;a) &= \frac{d}{dt}\frac{a g(1-E(t))}{g-a E(t)} \\[1em]
	&= \left(\frac{d}{dt} a g(1-E(t))\right)(g-a E(t))^{-1} + ag(1-E(t))\frac{d}{dt}\left(g-aE(t)\right)^{-1}\\[1em]
	&= -\frac{k_a(a-g)agE(t)}{(g-aE(t))}+ ag(1-E(t)) \frac{-1}{(g-aE(t))^2}(-ak_a(a-g)E(t))\\[1em]
	&= \frac{-k_a(a-g)agE(t)(g-aE(t))+ a g(1-E(t))ak_a(a-g)E(t)}{(g-aE(t))^2}\\[1em]
	&= \frac{k_a a g(a-g)\left(-gE(t) + a(E(t))^2+aE(t)-a(E(t))^2\right)}{(g-aE(t))^2}\\[1em]
	&= \frac{k_a a g (a-g)^2E(t)}{(g-aE(t))^2} = \frac{k_a a g (a-g)^2 e^{k_a(a-g)t}}{(g-a e^{k_a(a-g)t})^2}\ .
\end{align*} 
From the observations of the beginning of the proof it also follows that $\frac{d}{dt} x(t\ ; a)$ is finite for all $t\geq 0$.

Finally, we can plug in the expression for $x(t\ ;a)$ into the differential equation, using again the term $E(t) = e^{k_a(a-g)t}$ to shorten the notation:
\begin{align*}
	(a-x(t\ ; a))&(g-x(t\ ; a)) = ag - (a+g)x(t\ ; a) + (x(t\ ; a))^2\\[1em]
	&= a g - \frac{ag(a+g)(1-E(t))}{g-a E(t)} + \frac{a^2g^2(1-E(t))^2}{(g-aE(t))^2}\\[1em]
	& = \frac{ag(g-aE(t))^2- a g (a+g)(1-E(t))(g-aE(t)) + a^2g^2(1-E(t))^2}{(g-aE(t))^2}\\[2em]
	&= \frac{ \begin{array}{c}
		\cancel{ag^3}	- 2a^2g^2E(t) + \cancel{a^3g(E(t))^2}-\cancel{a^2g^2} \\ 
		+ a^3gE(t) + \cancel{a^2g^2E(t)}-\cancel{a^3g(E(t))^2} - \cancel{ag^3} + \cancel{a^2g^2E(t)}\\ 
		 + ag^3E(t) - \cancel{a^2g^2(E(t))^2} +\cancel{a^2g^2}-\cancel{2a^2g^2E(t)} + \cancel{a^2g^2(E(t))^2}
	\end{array}}{(g-aE(t))^2}\\[2em]
	&= \frac{ag^3E(t)+a^3gE(t)-2a^2g^2E(t)}{(g-aE(t))^2}\\[1em]
	&= \frac{ag(a-g)^2E(t)}{(g-aE(t))^2}\\[1em]
    &= \frac{d}{dt} x(t\ ;a,g)\ .
\end{align*}
Thus, it only remains to check the initial value:
\[x(0; a) = \frac{a g - a g e^{k_a (a-g)0}}{g - a e^{k_a (a -g)0}} = \frac{a g - a g}{g-a} = 0 \ .\]

\subsubsection*{Special case $a = g$}
Again, $x(t\ ; g)$ is well-defined and finite for all $t\geq 0$. As before, we calculate the derivative of $x(t\ ;g)$:
\begin{align*}
	\frac{d}{dt}x(t\ ;g) &= \frac{d}{dt} \frac{g}{1+\frac{1}{k_a g t}} = \frac{d}{dt} \frac{k_a g^2 t}{1+k_a g t}\\[1em] 
	&= \frac{k_a g^2}{1 + k_a g t} - \frac{k_a g^2 t}{(1+k_a g t)^2} k_a g = \frac{k_a g^2(k_a g t+1) - k_a^2g^3 t}{(1+k_a g t)^2}\\[1em] 
	&= \frac{k_a g^2}{(1+k_a g t)^2}\ ,
\end{align*}
which is also well-defined and finite for all $t\geq 0$. Next, we plug in the expression for $x(t\ ;g)$ into the differential equation:
\begin{align*}
	k_a(g-x(t\ ; g))^2 &= k_a \left(g-\frac{g}{1+\frac{1}{k_a gt}}\right)^2 = k_a \left(g-\frac{k_ag^2 t}{k_a g t+1}\right)^2\\[1em] 
	& = k_a \left(\frac{k_a g^2t + g - k_ag^2t}{k_a g t + 1}\right)^2 = \frac{k_a g^2}{(1+k_a g t)^2}\\[1em] 
	&= \frac{d}{dt}x(t\ ;g)\ .
\end{align*}
It remains to check the initial value:
\[x(0\ ;g)\coloneqq \lim_{t\searrow 0} \frac{g}{1+\frac{1}{k_a g t}} = 0\ .\]
\qed

\subsection{Proof of Theorem \ref{thm: bounded solutions single epitope}}

Since $k_a, a, t \geq 0$ it follows that
\[0\leq x_{\textnormal{DF}}(t) = g(1-e^{-k_a a t}) \leq g\ .\]
For $x(t)$, and $a = g$ it also follows that 
\[0 \leq x(t\ ; g, (g,k_a)) = \frac{g}{1+\frac{1}{k_a g t}} \leq g\ .\]

Let now $a < g$. Then $e^{k_a(a-g)t} \leq 1$ and thus
\[g-a e^{k_a(a-g)t}\geq 0 \qquad \text{and}\qquad 1-e^{k_a(a-g)t}\geq 0\ .\]
Hence, we have $0\leq x(t\ ; a ,(g,k_a))$. Next, we calculate 
\[\frac{g}{x(t\ ; a , (g,k_a))} = \frac{g- ae^{k_a(a-g)t}}{a(1-e^{k_a (a-g)t})} \leq  \frac{a- ae^{k_a(a-g)t}}{a(1-e^{k_a (a-g)t})} = 1\ . \]
Thus, it holds that $ x(t\ ; a , (g,k_a))\leq g$.

If $a > g$, it follows that $e^{k_a(a-g)t} \geq 1$ and thus 
\[g-a e^{k_a(a-g)t}\leq 0 \qquad \text{and}\qquad 1-e^{k_a(a-g)t}\leq 0\ .\]
Again, this implies $0\leq x(t\ ; a , (g,k_a))$ since both the numerator and denominator are negative.
Finally, we calculate
\[\frac{g}{x(t\ ; a , (g,k_a))} = \frac{g- ae^{k_a(a-g)t}}{a(1-e^{k_a (a-g)t})} \leq  \frac{g- ge^{k_a(a-g)t}}{a(1-e^{k_a (a-g)t})} = \frac{g}{a} \leq 1\ , \]
which shows that $x(t\ ; a , (g,k_a))\leq g$.

Since $x(t\ ; a, (g,k_a)) = x(t\ ; g, (a,k_a))$ by Lemma \ref{lemma: exchanging a and g}, the same steps also show that $x(t\ ; a , (g,k_a))\leq a$, which completes the proof.
\qed

\subsection{Proof for Theorem \ref{thm: accumulation model single epitope class}}

The existence and the expressions for $x_{\textnormal{DF}}(\tau\ ; a)$ and $x(\tau\ ; a)$ follow from Theorem \ref{thm: unique solution single epitope class} and definition \ref{def: accumulation models}.

Let $\tau > 0$, then the continuous differentiability and the expression for the derivative $\frac{d}{da} x_{\textnormal{DF}}(\tau\ ; a)$ are obvious. 

Next, we calculate $\frac{d}{da}x(\tau\ ; a)$ at $a \neq g$, which we denote by $\left. \frac{d}{da}x(a)\right|_{a \neq g}$:
\begin{align*}
	\left.\frac{d}{da} x(a)\right|_{a\neq g}  &= \frac{d}{da} \frac{a g(1-e^{k_a(a-g)\tau})}{g-ae^{k_a(a-g)\tau}}\\[1em]
	& = \frac{g(1-e^{k_a (a-g)\tau}) - k_a a g \tau e^{k_a(a-g)\tau}}{g-a e^{k_a(a-g)\tau}}\\[1em]
	&\qquad - \frac{a g(1-e^{k_a(a-g)\tau})(-e^{k_a(a-g)\tau} - k_a a \tau e^{k_a(a-g)\tau})}{(g-ae^{k_a(a-g)\tau})^2}\\[2em]
	&= \frac{ \begin{array}{c}
		(g(1-e^{k_a (a-g)\tau}) - k_a a g \tau e^{k_a(a-g)\tau})(g-ae^{k_a(a-g)\tau})\\
	+ a g(1-e^{k_a(a-g)\tau})(e^{k_a(a-g)\tau} + k_a a \tau e^{k_a(a-g)\tau})
	\end{array}
	 }{(g-ae^{k_a(a-g)\tau})^2}\\[2em]
	&= \frac{\begin{array}{c}
		 g^2 - g^2e^{k_a(a-g)\tau} - k_a a g^2\tau e^{k_a(a-g)\tau} - \cancel{age^{k_a(a-g)\tau}} + \cancel{a g e^{2k_a(a-g)\tau}}  \\
	+ \cancel{k_a a^2 g \tau e^{2k_a(a-g)\tau}} + \cancel{age^{k_a(a-g)\tau}} + k_a a^2 g \tau e^{k_a(a-g)\tau}\\
	 - \cancel{a g e^{2k_a(a-g)\tau}} - \cancel{k_a a^2 g \tau e^{2k_a(a-g)\tau}}
	\end{array} }{(g-ae^{k_a(a-g)\tau})^2}\\[2em]
	&= \frac{g^2 - g^2e^{k_a(a-g)\tau} - k_a a g^2 \tau e^{k_a(a-g)\tau} + k_a a^2 g \tau e^{k_a(a-g)\tau}}{(g-ae^{k_a(a-g)\tau})^2}\\[1em]
	&= \frac{ g(g+e^{k_a (a-g)\tau}(-g-k_a a g \tau + k_a a^2 \tau))}{(g-ae^{k_a(a-g)\tau})^2}\\[1em]
	&= \frac{ g(g-e^{k_a (a-g)\tau}(+g+k_a a g \tau - k_a a^2 \tau))}{(g-ae^{k_a(a-g)\tau})^2}\\[1em]
	&= \frac{g(g-e^{k_a(a-g)\tau}(k_a a \tau (g-a) + g))}{(g-ae^{k_a(a-g)\tau})^2}\quad .
\end{align*}
The derivative of $x(\tau\ ; a)$ at $a=g$, which we denote by $\left. \frac{d}{da} x(a) \right|_{a = g}$, needs to be calculated from the definition of derivatives. Here, we use the informal notation $\frac{0}{0}$ to indicate the use of L'H\^{o}pital's rule. Furthermore, note that we strike through terms when they cancel out additively (not when they cancel out between numerator and denominator):

\begin{align*}
	\left.\frac{d}{d a } x(a) \right|_{a = g}\  &\overset{\text{def}}{=}\  \lim_{a \rightarrow g} \frac{x(\tau\ ; a)-x(\tau\ ; g)}{a-g} = \lim_{a \rightarrow g} \frac{\frac{a g\left(1-e^{k_a(a -g)\tau}\right)}{g-a e^{k_a(a-g)\tau}} - \frac{k_a g^2 \tau}{1+k_a g \tau}}{a -g}\\[1em]
	&= \lim_{a \rightarrow g} \frac{\begin{array}{c}
		a g + k_a a g^2 \tau - a g e^{k_a(a-g)\tau} - k_a ag^2\tau e^{k_a(a-g)\tau}\\
		-k_a g^3\tau + k_a a g^2 \tau e^{k_a(a-g)\tau}
	\end{array} }{(a -g)(g+k_a g^2 \tau - a e^{k_a(a-g)\tau} - k_a a g \tau e^{k_a(a-g)\tau})}\\[2em]
	&= \frac{g^2 + k_a g^3\tau - g^2 e^0 - k_a g^3\tau e^0 - k_a g^3 \tau + k_a g^3\tau e^0}{0 (g+k_ag^2\tau- ge^0 - k_a g^2\tau e^0)}\\[1em]
	&= \frac{0}{0}
\end{align*}
\begin{align*}
	\Rightarrow\quad& \left.\frac{d}{d a } x(a) \right|_{a = g}  = \ldots \\[1em]
	&= \lim_{a \rightarrow g} \frac{\begin{array}{c}
		g + k_a g^2 \tau - ge^{k_a(a-g)\tau} - k_a a g \tau e^{k_a (a-g)\tau} - \cancel{k_ag^2\tau e^{k_a(a-g)\tau}}\\
		- \cancel{k_a^2 a g^2 \tau^2 e^{k_a(a-g)\tau}} + \cancel{k_ag^2\tau e^{k_a(a-g)\tau}} + \cancel{k_a^2 a g^2 \tau^2 e^{k_a(a-g)\tau}}
	\end{array} }{\begin{array}{c}
		g + k_ag^2\tau - 2ae^{k_a(a-g)\tau} - k_a a^2 \tau e^{k_a(a-g)\tau} - \cancel{2}k_a a g \tau e^{k_a(a-g)\tau}\\
		- k_a^2 a^2 g \tau^2 e^{k_a(a-g)\tau} + g e^{k_a(a-g)\tau}\\
		 + \cancel{k_a a g \tau e^{k_a(a-g)\tau}} + k_a g^2 \tau e^{k_a(a-g)\tau} + k_a^2 a g^2 \tau^2 e^{k_a(a-g)\tau}
	\end{array} }\\[2em]
	&= \lim_{a \rightarrow g} \frac{g + k_a g^2 \tau - g e^{k_a(a-g)\tau} - k_a a g \tau e^{k_a(a-g)\tau}}{\begin{array}{c}
		g + k_a g^2 \tau -2a e^{k_a(a-g)\tau} - k_a a^2 \tau e^{k_a(a-g)\tau}\\ 
		 -k_a a g \tau e^{k_a(a-g)\tau} -k_a^2 a^2 g \tau^2 e^{k_a(a-g)\tau} + g e^{k_a(a-g)\tau}\\ 
		  + k_a g^2 \tau e^{k_a(a-g)\tau} + k_a^2 a g^2 \tau^2 e^{k_a(a-g)\tau}
	\end{array} }\\[2em]
	&= \frac{g + k_a g^2 \tau - g e^0 - k_a g^2 \tau e^0}{\begin{array}{c}
		g+ k_a g^2 \tau - 2g e^0 - k_a g^2 \tau e^0 - k_a g^2 \tau e^0\\ 
		- k_a g^3 \tau^2 e^0 + g e^0 + k_a g^2 \tau e^0 + k_a g^3 \tau^2 e^0
	\end{array}
		}\\[1em]
	&= \frac{0}{0}
\end{align*}
\begin{align*}
	\Rightarrow\quad & \left.  \frac{d}{d a } x(a) \right|_{a = g}  = \ldots \\[1em]
	&= \lim_{a \rightarrow g} \frac{\begin{array}{c}
		- k_a g \tau e^{k_a(a-g)\tau} - k_a g \tau e^{k_a(a-g)\tau} - k_a^2 a g \tau^2 e^{k_a(a-g)\tau}
	\end{array} }{\begin{array}{c}
		-2e^{k_a(a-g)\tau} - 2k_a a \tau e^{k_a(a-g)\tau} - 2k_a a \tau e^{k_a(a-g)\tau}\\
		 - k_a^2 a^2 \tau^2 e^{k_a(a-g)\tau} - \cancel{k_a g \tau e^{k_a(a-g)\tau}} - k_a^2 a g \tau^2 e^{k_a(a-g)\tau}\\
		 -2 k_a^2 a g \tau^2 e^{k_a(a-g)\tau} - k_a^3 a^2g \tau^3 e^{k_a(a-g)\tau} + \cancel{k_a g \tau e^{k_a(a-g)\tau}}\\
		  + k_a^2 g^2 \tau^2 e^{k_a(a-g)\tau} + k_a^2 g^2 \tau^2 e^{k_a(a-g)\tau} + k_a^3 a g^2 \tau^3 e^{k_a(a-g)\tau}
	\end{array} }\\[2em]
	&= \frac{-2k_a g \tau e^0 - k_a^2 g^2 \tau^2 e^0}{\begin{array}{c}
		-2e^0 - 2k_a g \tau e^0 - 2k_a g \tau e^0\\
		 - k_a^2 g^2 \tau^2 e^0 - k_a^2 g^2 \tau^2 e^0\\
		 -\cancel{2 k_a^2 g^2 \tau^2 e^0} - \cancel{k_a^3 g^3 \tau^3 e^0} \\
		  + \cancel{k_a^2 g^2 \tau^2 e^0 + k_a^2 g^2 \tau^2 e^0} + \cancel{k_a^3 g^3 \tau^3 e^0}
	\end{array} }\\[1em]
	&= \frac{-2k_a g\tau -k_a^2 g^2 \tau^2}{-2 -4 k_a g \tau - 2k_a^2 g^2 \tau^2}\\[1em]
	&= \frac{k_a g \tau (2+ k_a g \tau)}{2(1+k_a g \tau)^2}\quad .
\end{align*}

\noindent
Now, we need to check that $\frac{d}{d a}x(\tau\ ; a)$ is continuous in $a = g$. As before, we use the informal notation $\frac{0}{0}$ to indicate the use of L'H\^{o}pital's rule. Furthermore, note that we strike through terms when they cancel out additively (not when they cancel out between numerator and denominator):

\begin{align*}
	\lim_{a \rightarrow g} \left. \frac{d}{d a} x(a)\right|_{a \neq g} &= \lim_{a \rightarrow g} \frac{g(g-e^{k_a(a-g)\tau}(k_a a \tau (g-a) + g))}{(g-ae^{k_a(a-g)\tau})^2}\\[1em]
	&= \lim_{a \rightarrow g}\frac{g^2 - g^2e^{k_a(a-g)\tau} - k_a a g^2 \tau e^{k_a(a-g)\tau} + k_a a^2 g \tau e^{k_a(a-g)\tau}}{g^2 + a^2 e^{2k_a(a-g)\tau} - 2 a g e^{k_a(a-g)\tau}}\\[1em]
	&= \frac{g^2- g^2 e^0 - k_a g^3 \tau e^0 + k_a g^3 \tau e^0}{g^2 + g^2 e^0 - 2g^2 e^0} = \frac{0}{0}
\end{align*}

\begin{align*}
	\Rightarrow\quad \lim_{a \rightarrow g} \left. \frac{d}{d a} x(a)\right|_{a \neq g} &= \lim_{a\rightarrow g} \frac{\begin{array}{c}
		-k_a g^2 \tau e^{k_a(a-g)\tau} - k_a g^2 \tau e^{k_a(a-g)\tau} - k_a^2 a g^2 \tau^2 e^{k_a(a-g)\tau}\\
		+ 2k_a a g \tau e^{k_a(a-g)\tau} + k_a^2 a^2 g \tau^2 e^{k_a(a-g)\tau}
	\end{array} }{\begin{array}{c}
		2a e^{2k_a(a-g)\tau} + 2k_a a^2 \tau e^{2k_a(a-g)\tau}\\
		- 2 g e^{k_a(a-g)\tau} - 2k_a a g \tau e^{k_a(a-g)\tau}
	\end{array} }\\[2em]
	&= \frac{ -k_a g^2 \tau e^0 - k_a g^2 \tau e^0 - k_a^2g^3\tau^2 e^0 + 2k_a g^2\tau e^0 + k_a^2 g^3\tau^2 e^0 }{2ge^0 + 2 k_a g^2\tau - 2ge^0 - 2k_a g^2\tau e^0}\\[1em]
	&= \frac{0}{0}
\end{align*}

\begin{align*}
	\Rightarrow\ \lim_{a \rightarrow g} \left. \frac{d}{d a} x(a)\right|_{a \neq g} &= \lim_{a\rightarrow g} \frac{\begin{array}{c}
		-k_a^2g^2\tau^2 e^{k_a(a-g)\tau} -k_a^2g^2\tau^2 e^{k_a(a-g)\tau}  -k_a^2g^2\tau^2 e^{k_a(a-g)\tau}\\
		-k_a^3 a g^2\tau^3 e^{k_a(a-g)\tau} +2k_a g \tau e^{k_a(a-g)\tau} + 2k_a^2 a g \tau^2 e^{k_a(a-g)\tau}\\
		+ 2k_a^2a g \tau^2 e^{k_a(a-g)\tau} + k_a^3 a^2 g \tau^3 e^{k_a(a-g)\tau}
	\end{array} }{\begin{array}{c}
		2e^{2k_a(a-g)\tau} + 4k_a a \tau e^{2k_a(a-g)\tau} + 4k_a a \tau e^{2k_a(a-g)\tau}\\
		+ 4k_a^2 a^2 \tau^2 e^{2k_a(a-g)\tau} - 2k_a g \tau e^{k_a(a-g)\tau}\\
		- 2k_a g \tau e^{k_a(a-g)\tau} - 2k_a^2 a g \tau^2 e^{k_a(a-g)\tau}
	\end{array} }\\[2em]
	&= \frac{\begin{array}{c}
		-k_a^2g^2\tau^2 e^0 -k_a^2g^2\tau^2 e^0  -k_a^2g^2\tau^2 e^0\\
		-k_a^3 g^3\tau^3 e^0 +2k_a g \tau e^0 + 2k_a^2 g^2 \tau^2 e^0\\
		+ 2k_a^2 g^2 \tau^2 e^0 + k_a^3 g^3 \tau^3 e^0
	\end{array} }{\begin{array}{c}
		2e^0 + 4k_a g \tau e^0 + 4k_a g \tau e^0\\
		+ 4k_a^2 g^2 \tau^2 e^0 - 2k_a g \tau e^0\\
		- 2k_a g \tau e^0 - 2k_a^2 g^2 \tau^2 e^0
	\end{array} }\\[2em]
	& = \frac{-3k_a^2g^2\tau^2 - k_a^3g^3\tau^3 + 2k_a g\tau + 4 k_a^2g^2\tau^2 + k_a^3g^3\tau^3}{2+8 k_a g \tau + 4k_a^2 g^2 \tau^2 - 4k_a g \tau- 2k_a^2 g^2 \tau^2}\\[1em]
	&= \frac{2k_a g \tau + k_a^2 g^2 \tau^2}{2+ 4k_a g \tau + 2 k_a^2g^2\tau^2} = \frac{k_a g \tau(2+ k_a g \tau)}{2(1+k_a g \tau)^2} \\[1em]
	&= \left.\frac{d}{da} x(a)\right|_{a = g}\quad .
\end{align*}

To verify that the derivatives are finite, we recall that $g-ae^{k_a (a-g)\tau}$ is non-zero for all $k_a,a,g,\tau > 0$ when $a\neq g$, and that $1+k_a g \tau$ is non-zero for all $k_a,a,g,\tau > 0$. This was shown at the beginning of the proof of Lemma \ref{lemma: depletion accumulation IVP single epitope class} (appendix \ref{appendix-proof: IVP solutions}). Thus, all denominators that appear in the derivative $\frac{d}{da} x(\tau\ ; a)$  are non-zero for their respective cases, such that the derivative $\frac{d}{da} x(\tau\ ; a)$ finite and well-defined for all $k_a, a, g, \tau > 0$. Hence, as a rational function, it is also continuous in $a$ for $a>0$.

Finally, we consider the case $\tau = 0$. From the initial values of the accumulation IVPs it follows that $x_{\textnormal{DF}}(0\ ; a) = 0$ and that $x(0\ ; a) = 0$. Thus, the derivatives with respect to $a$ must also be zero. We complete the proof by checking that this holds true for the expressions of the derivatives that we have derived above. For this, we calculate
\[\lim\limits_{\tau \searrow 0} \frac{d}{da} x_{\textnormal{DF}}(\tau\ ; a) = \lim\limits_{\tau \searrow 0} k_a g \tau e^{-k_a a \tau} = 0\ , \]

\[\lim\limits_{\tau \searrow 0} \left. \frac{d}{da} x(\tau\ ; a) \right|_{a\neq g} = \lim\limits_{\tau \searrow 0} \frac{g(g-e^{k_a(a-g)\tau}(k_a a \tau (g-a) + g))}{(g-ae^{k_a(a-g)\tau})^2}  = 0 \ , \]
and
\[\lim\limits_{\tau \searrow 0} \left. \frac{d}{da} x(\tau\ ; a) \right|_{a = g} = \lim\limits_{\tau \searrow 0} \frac{k_a g \tau(2+ k_a g \tau)}{2(1+k_a g \tau)^2}  = 0 \ . \]

\qed

\subsection{Proof for lemma \ref{lemma: larger derivative and larger function}}
\label{proof: larger derivative and larger function}

For $t = a$ the statement is obvious. If $a=\max\{I\}$ the statement is proven. Otherwise, let $t>a$. Then the mean value theorem implies that there is a $c\in [a,t]$ such that
\[\frac{f(t)-f(a)}{t-a} = f'(c) .\]
Since $t>a$ and $f'(c)\geq 0$, by the assumptions of the lemma, it follows that 
\[f(t)-f(a)\geq 0 \qquad \Rightarrow \qquad f(t) \geq f(a)\quad \forall \  t > a\ .\]
For the strict inequality, repeat the argument with $f'(c) > 0$.
\qed

\end{appendices}

\bibliography{Literature}
\end{document}